%% file: 00-main.tex
\documentclass[a4paper,UKenglish,cleveref,autoref,thm-restate]{lipics-v2021}

\graphicspath{{./figures/}}

\input{preamble}
\title{Sliding Cubes in Parallel}
\titlerunning{Sliding Cubes in Parallel}

\author{Hugo A. Akitaya}{Miner School of Computer and Information Sciences, University of Massachusetts Lowell, USA}{hugo_akitaya@uml.edu}{https://orcid.org/0000-0002-6827-2200}{}
\author{Joseph Dorfer}{Institute of Algorithms and Theory, Graz University of Technology, Austria}{joseph.dorfer@tugraz.at}{https://orcid.org/0009-0004-9276-7870}{Austrian Science Fund (FWF) 10.55776/DOC183.}
\author{Peter Kramer}{Department of Computer Science, TU Braunschweig, Germany}{kramer@ibr.cs.tu-bs.de}{https://orcid.org/0000-0001-9635-5890}{Fellowship of the German Academic Exchange Service~(DAAD).}
\author{Christian Rieck}{Institute of Mathematics, University of Kassel, Germany}{christian.rieck@mathematik.uni-kassel.de}{https://orcid.org/0000-0003-0846-5163}{Funded by the Deutsche Forschungsgemeinschaft (DFG) -- 522790373.}
\author{Gabriel Shahrouzi}{Miner School of Computer and Information Sciences, University of Massachusetts Lowell, USA}{gabriel_shahrouzi@student.uml.edu}{https://orcid.org/0009-0004-2858-7322}{}
\author{Frederick Stock}{Miner School of Computer and Information Sciences, University of Massachusetts Lowell, USA}{frederick_stock@student.uml.edu}{https://orcid.org/0009-0008-9005-6855}{}

\authorrunning{H.\ A.\ Akitaya, J.\ Dorfer, P.\ Kramer, C.\ Rieck, G.\ Shahrouzi, and F.\ Stock}

\Copyright{Hugo A.\ Akitaya, Joseph Dorfer, Peter Kramer, Christian Rieck, Gabriel Shahrouzi, and Frederick Stock}

\ccsdesc{Theory of computation~Computational geometry}

\keywords{Sliding squares, parallel motion, reconfigurability, three dimensions, constant makespan, log-APX hardness, NP-hardness, worst-case optimality}

\funding{%
    \emph{Hugo A. Akitaya, Gabriel Shahrouzi, and Frederick Stock}: Supported by the National Science Foundation (NSF), grant CCF-2348067.
}

\nolinenumbers

\hideLIPIcs

\begin{document}

    \maketitle

    \begin{abstract}
        We study the classic \newterm{sliding cube model} for programmable matter under \newterm{parallel} reconfiguration in three dimensions, providing novel algorithmic and surprising complexity results in addition to generalizing the best known bounds from two to three dimensions.
        In general, the problem asks for reconfiguration sequences between two connected configurations of $n$ indistinguishable unit cube \newterm{modules} under connectivity constraints; a connected \newterm{backbone} must exist at all times.
		The \newterm{makespan} of a reconfiguration sequence is the number of parallel moves performed.

        We show that deciding the existence of such a sequence is \NP-hard, even for constant makespan and if the two input configurations have constant-size symmetric difference, solving an open question in~\cite{a.akitaya_et_al:LIPIcs.ESA.2025.28}.
        In particular, deciding whether the optimal makespan is 1 or 2 is \mbox{\NP-hard.}
        We also show \logAPX-hardness of the problem in sequential and parallel models, strengthening the \APX-hardness claim in~\cite{akitaya.demaine.korman.ea2022compacting-squares}.
        Finally, we outline an asymptotically worst-case optimal input-sensitive algorithm for reconfiguration.
        The produced sequence has length that depends on the bounding box of the input configurations which, in the worst case, results in a $\mathcal{O}(n)$~makespan.
    \end{abstract}

    \input{01-introduction}    \input{02-preliminaries}%
    \input{03-complexity}%
    \input{04-algorithm}    \input{05-conclusion}
    \bibliography{references}
    \newpage
    \appendix
    \input{A03-complexity}
    \input{A04-teleport}
    \input{A05-algorithm}
    \input{A05_1-gather}
    \input{A05_4-scaled-compact}
\end{document}

%% file: preamble.tex
\DeclareGraphicsExtensions{.jpg,.JPG,.jpeg,.pdf,.PDF,.JPEG,.png,.PNG}
\usepackage{amsfonts}
\usepackage{mathtools}
\usepackage{nicefrac}
\usepackage{complexity}
\usepackage{xspace}
\usepackage[ruled]{algorithm}
\usepackage{algpseudocode}

\DeclarePairedDelimiter\floor{\lfloor}{\rfloor}
\DeclarePairedDelimiter\norm{\lVert}{\rVert}
\DeclarePairedDelimiter\abs{\lvert}{\rvert}

\let\xPTAS\PTAS
\renewcommand{\PTAS}{\xPTAS\xspace}
\let\xFPT\FPT
\renewcommand{\FPT}{\xFPT\xspace}
\let\xNP\NP
\renewcommand{\NP}{\xNP\xspace}
\let\xPSPACE\PSPACE
\renewcommand{\PSPACE}{\xPSPACE\xspace}
\let\xXP\XP
\renewcommand{\XP}{\xXP\xspace}
\let\xFPTAS\FPTAS
\renewcommand{\FPTAS}{\xFPTAS\xspace}

\newcommand{\logAPX}{{\textup{\textsf{log-APX}}}\xspace}
\newcommand{\newterm}[1]{\textbf{\textit{#1}}}
\newcommand{\module}{\textsf}
\newcommand{\selector}{\textup{\textsf{sel}}\xspace}
\newcommand{\area}{\mathbb{A}\xspace}

\newcommand{\parallelsquares}{\textsc{Parallel Sliding Squares}\xspace}
\newcommand{\sequentialsquares}{\textsc{Sequential Sliding Squares}\xspace}
\newcommand{\slidingcubes}{\textsc{Sliding Cubes}\xspace}
\newcommand{\parallelcubes}{\textsc{Parallel Sliding Cubes}\xspace}
\newcommand{\sequentialcubes}{\textsc{Sequential Sliding Cubes}\xspace}

\crefname{claim}{Claim}{Claims}

%% file: 01-introduction.tex
\section{Introduction}
\label{sec:introduction}

Programmable matter systems, composed of large numbers of simple modules capable of self-reconfiguration, have attracted significant attention for applications in robotics, manufacturing, and distributed computing. 
From a Computational Geometry perspective, a fundamental algorithmic challenge is to transform one connected configuration of modules into another while preserving connectivity and avoiding collisions. 
This problem has been studied extensively~\cite{
	abel.akitaya.kominers.ea2024universal-in-place,
	akitaya.demaine.korman.ea2022compacting-squares,
	a.akitaya_et_al:LIPIcs.ESA.2025.28,
	AkitayaCCCG2025,
	dumitrescu.pach2006pushing-squares,
	dumitrescu.suzuki.yamashita2004motion-planning,
	hurtado.molina.ramaswami.ea2015distributed-reconfiguration,
	kostitsyna.ophelders.parada.ea2024optimal-in-place,
	michail.skretas.spirakis2019on-transformation,
	moreno.sacristan2020reconfiguring-sliding}, introduced by Fitch, Butler, and~Rus~\cite{fitch.butler.rus2003reconfiguration-planning}, where modules move via discrete slides or convex transitions on a square (or hyper-cube) lattice, as visualized in~\cref{fig:intro-moves}.

\begin{figure}[htb]
    \hfil%
    \includegraphics[page=1]{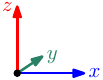}%
    \hfil%
    \includegraphics[page=1]{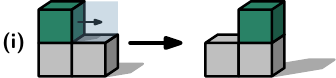}%
    \hfil%
    \includegraphics[page=2]{intro-moves}%
    \caption{Two types of legal move, (i) slide and (ii) convex transition.}
    \label{fig:intro-moves}
\end{figure}%

Early work focused on sequential models, where only one module moves at a time. 
When instances are two or three-dimensional, we refer to this model as the \sequentialsquares or \sequentialcubes model, respectively.
Akitaya et al.~\cite{akitaya.demaine.korman.ea2022compacting-squares} studied input-sensitive algorithms for \sequentialsquares, providing an $\mathcal{O}(nP)$ bound for compacting a configuration of $n$ modules, where $P$ is the bounding box perimeter. 
Their algorithm is also in-place, meaning that all but one module are constrained to be within the bounding boxes of the initial configurations at all times.
They also show that the decision version of the makespan minimization problem is \NP-complete via a reduction from \textsc{Planar Monotone 3Sat}.
The produced instance is a pair of configurations whose symmetric difference is linear in the number of variables.
The authors also claim that the same reduction can be adapted to show \APX-hardness to approximate the minimum makespan for instances in three dimensions, but they do not give a proof.

More recently, Akitaya~et~al.~\cite{a.akitaya_et_al:LIPIcs.ESA.2025.28} introduced a \parallelsquares model in two dimensions, showing that parallel motion can dramatically reduce the makespan compared to sequential schedules.
Their algorithm achieves worst-case optimal makespan~$\mathcal{O}(P)$, where~$P$ is the bounding box perimeter.
They proved \NP-completeness for deciding whether a reconfiguration can be completed in makespan~$1$ (unlabeled case) and makespan~$2$ (labeled~case). 
This rules out an \FPT algorithm parameterized by the makespan.
However, as in~\cite{akitaya.demaine.korman.ea2022compacting-squares}, their reduction produces instances having linear-size symmetric difference, which led the authors to pose the question whether the makespan minimization problem remains hard if the input has constant-size symmetric difference.

For three dimensional instances, 
universal reconfiguration was first shown by Abel and Kominers, who later, with additional coauthors, improved the bounds on the number of sequential moves to~$\mathcal{O}(n^2)$~\cite{abel.akitaya.kominers.ea2024universal-in-place} and made their algorithm in-place. 
Kostitsyna et al.~\cite{kostitsyna.ophelders.parada.ea2024optimal-in-place} showed that compacting any given configuration can be done optimally in a number of moves proportional to the sum of coordinates of the cubes in the input.
Until now, no better bounds are known for universal reconfiguration when modules are allowed to move in parallel.

\subsection{Our contributions}
\label{subsec:our-contributions}

We generalize the \parallelsquares model introduced by~\cite{a.akitaya_et_al:LIPIcs.ESA.2025.28} to three dimensions (which we call {\parallelcubes}), and establish both positive and negative results.

\begin{itemize}
    \item The decision variant of makespan minimization in {\parallelcubes} is \NP-hard even to distinguish between makespans $1$ and $2$.
    Our reduction has symmetric difference between initial and final configuration of $1$, implying that no \FPT algorithm exists.
    \item The makespan minimization problem in both {\textsc{Parallel}} and \sequentialcubes is \logAPX-hard and cannot be approximated by a factor better than $\Theta(\log n)$.
    \item Schedules for universal reconfiguration of makespan $\mathcal{O}(A+h)$ can be computed in polynomial time, where $A$ is the area of the projection of the configuration onto the base of its bounding box and $h$ is the height of the bounding box.
    In the worst case this input-sensitive bound becomes $\mathcal{O}(n)$ which is worst-case optimal.
\end{itemize}

\subsection{Related Work}
\label{subsec:related-work}

The {sliding cube model}, introduced by Fitch, Butler, and Rus~\cite{fitch.butler.rus2003reconfiguration-planning}, became a central abstraction for algorithmic studies of modular reconfiguration. 
In the sequential model, universal algorithms with quadratic worst-case complexity on the number of moves are known (for squares, cubes and hyper-cubes)~\cite{abel.akitaya.kominers.ea2024universal-in-place,akitaya.demaine.korman.ea2022compacting-squares,dumitrescu.pach2006pushing-squares,kostitsyna.ophelders.parada.ea2024optimal-in-place}.
The in-place algorithms in ~\cite{abel.akitaya.kominers.ea2024universal-in-place,AkitayaCCCG2025} use scaffolding techniques (a regular substructure that provides the connectivity between the steps of the algorithm, proposed independently by~\cite{kotay.rus2000algorithms-for} and~\cite{nguyen.guibas.yim2001controlled-module}) to produce a sweep-plane that compacts the input configuration, similar to our algorithm. 

Parallel reconfiguration has received increasing attention.
In the context of distributed algorithms for two dimensions, Dumitrescu et al.~\cite{dumitrescu.suzuki.yamashita2004motion-planning} first studied the problem providing linear makespan between two $y$-monotone shapes.
Michail et al.~\cite{michail.skretas.spirakis2019on-transformation} and Wolters~\cite{wolters2024parallel-algorithms} also provided algorithms that work on special input.
Hurtado et al.~\cite{hurtado.molina.ramaswami.ea2015distributed-reconfiguration} showed that a less restrictive model achieves universality within linear makespan.
They also show how to use meta-modules, small groups of cooperating modules that behave like a single entity, to reproduce the same result in the sliding square model for configurations that are already made of meta-modules. 
Akitaya et al.~\cite{a.akitaya_et_al:LIPIcs.ESA.2025.28} used meta-modules and scaffolding techniques (similar to~\cite{abel.akitaya.kominers.ea2024universal-in-place,AkitayaCCCG2025}) to obtain a worst-case optimal algorithm for \emph{general} input.
While our algorithm roughly follows the same principles, generalizing such ideas to three dimensions is highly nontrivial and required several new structural proofs.

\subparagraph*{Approximation.}
The majority of the literature on algorithmic reconfiguration focuses on worst-case bounds since it is typically much more challenging (and more often computationally intractable) to obtain optimal solutions.
A prominent example is the flip distance between triangulations of convex point sets.
While optimal worst-case bounds are known, the complexity of the minimization problem remains a famous open problem~\cite{sleator1986rotation}.
More related to our problem, Demaine et al.~\cite{demaine.fekete.keldenich.ea2019coordinated-motion}
obtained a constant factor approximation for parallel motion planning of a \emph{swarm} of robots, where grid aligned squares can move to an adjacent grid cell (similar to our slide move), but with no connectivity constraint.
They also show that the problem does not admit an \FPTAS unless \P=\NP, but \APX-hardness is not known.
This problem has been shown \APX-hard in the sequential setting~\cite{calinescu.dumitrescu.pach2008reconfigurations-in}.

Using similar techniques to~\cite{demaine.fekete.keldenich.ea2019coordinated-motion}, Fekete et al.~\cite{fekete.keldenich.kosfeld.ea2023connected-coordinated,fekete.kramer.rieck.ea2024efficiently-reconfiguring} achieve constant
stretch, i.e., a makespan within a constant factor of the maximum minimum distance that a 
single robot would have to move between the input configurations, when connectivity is required, but for a different (more permissive) model of parallel moves than ours.
However their algorithm can only be applied to scaled configurations and thus does not work for general inputs. 

As stated before, Akitaya et al.~\cite{akitaya.demaine.korman.ea2022compacting-squares} claimed \APX-hardness for minimizing the makespan in the \sequentialcubes~model.
Our hardness of approximation results are the first for the $\parallelcubes$~model, and, to the best of our knowledge,  the first \logAPX-hardness for this class of motion planning reconfiguration problems. 

%% file: 02-preliminaries.tex
\section{Preliminaries}
\label{sec:preliminaries}

Our work considers the coordinated motion of unit cubes in the three-dimensional \newterm{integer grid}, following and extending the notation of~\cite{a.akitaya_et_al:LIPIcs.ESA.2025.28}.
For $v\in\mathbb{Z}^3$, let $x(v),y(v),z(v)\in \mathbb{Z}$ refer to its coordinates along each of the three axes.
The three axes point \emph{east}, \emph{north}, and \emph{up} in that order.
To measure distances in the integer grid, we use the $L_1$ and $L_\infty$ norms defined as
\[
    \norm{u}_1 = \abs{x(u)} + \abs{y(u)} + \abs{z(u)}\\
    \text{and}\\
    \norm{u}_\infty = \max\{\abs{x(u)}, \abs{y(u)}, \abs{z(u)}\}.
\]

A \newterm{cell} $c$ in our setting is a cube~$\mathcal{Q}_3$, anchored at its minimal coordinates, we write $c\in\mathbb{Z}^3$.
Each cell thus has eight vertices and six {faces}, each bounded by a four-cycle.
Two cells are then \newterm{face-}, \newterm{edge-}, or \newterm{vertex-adjacent} if they share a common face, edge, or vertex.

\subparagraph*{Configurations.}
A \newterm{configuration} $C$ of $n$ cube \newterm{modules} is defined as a set of $n$ cells that are occupied by one module each, and is valid exactly if the face-adjacency relation (the \newterm{dual graph of~$C$}) of occupied cells induces a connected graph.
For a given $C$, we denote by $B$ the minimal axis-aligned \newterm{bounding box} containing all modules.
The \newterm{extent} of a bounding box along the $x$, $y$, and $z$ axes defines its \newterm{width}, \newterm{depth} and \newterm{height} (recall that $z$ is \emph{up}).

\subparagraph*{Moves.}
Modules perform two types of \newterm{move}: a \newterm{slide} and a \newterm{convex transition}. 
A slide translates a module by one unit along the surface of two adjacent modules, whereas a convex transition moves it along the surface of a single module across two axes. 
Following the notation and conventions of~\cite{a.akitaya_et_al:LIPIcs.ESA.2025.28}, both moves take unit time and are executed at constant~speed.

\subparagraph*{Transformations.}
Parallel moves are performed in discrete \newterm{transformations} with unit-time duration.
These must preserve a connected \newterm{backbone} which requires moving modules to be \newterm{free}: moving modules do not provide connectivity.
A set of modules $M$ is free in a configuration $C$ if the removal of any subset of $M$ from $C$ produces a connected configuration.
An individual module $\module{m}\in C$ is then free if the set $\{\module{m}\}$ is.
In addition, transformations must not induce \newterm{collisions}, in which the volume of two modules overlap.
For instance, this forbids modules from trading places during a transformation, entering the same cell, or performing a convex transition through the same cell.
We refer to~\cite{a.akitaya_et_al:LIPIcs.ESA.2025.28} and~\cref{fig:collisions} for details.

\begin{figure}
    \includegraphics{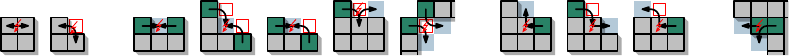}%
    \caption{Examples of collisions. Figure from~\cite{a.akitaya_et_al:LIPIcs.ESA.2025.28}, used with the authors' permission.}
    \label{fig:collisions}
\end{figure}

\subparagraph*{Problem statement.}
The \parallelcubes problem gives two configurations of~$n$ modules each and an integer $k\in\mathbb{N}^+$ and asks for a sequence of at most $k$ transformations that obtains one from the other.
Such a sequence is then called a \newterm{schedule} of~\newterm{makespan}~$k$.

\medskip%
This completes the description of our model; the following are auxiliary terms~and~notation.

\subparagraph*{Projection.}
We define the \newterm{projection} of a configuration $C\in\mathbb{Z}^3$ along an axis $\lambda\in\{x,y,z\}$ as the two-dimensional configuration $\lambda[C]=\{(\rho(\module{m}),\phi(\module{m}))\mid \module{m}\in C, \{\rho,\phi\} = \{x,y,z\}\setminus\{\lambda\}\}$.
Furthermore, we define the \newterm{area} of a two-dimensional projection as $\area(\lambda[C]) =\abs{\lambda[C]}$.

\subparagraph*{Neighborhoods.}
We use notions of \emph{neighborhoods} of cells and modules throughout our paper.
Let $c\in\mathbb{Z}^3$.
We define the open $L_1$ and $L_\infty$ neighborhoods of $c$ as
\[
    N(c)\coloneqq\{c'\in\mathbb{Z}^3\mid \norm{c-c'}_1=1\}\\
    \text{and}\\
    N[c]\coloneqq\{c'\in\mathbb{Z}^3\mid \norm{c-c'}_\infty=1\}.
\]
By an additional asterisk, e.g., $N^*[c]$, we indicate the neighborhoods which also include $c$.
With a slight abuse of notation, these definitions extend to sets of cells $C\subset\mathbb{Z}^3$, e.g.,
\[
    N^*(C)\coloneqq\bigcup_{c\in C}N^*(c)\\
    \text{and}\\
    N(C)\coloneqq N^*(C)\setminus C.
\]
The (closed) $r$-neighborhood is the $r$th iterate, e.g., ${N_r^*(c)=N(N_{r-1}^*(c))}$ with~${N_1^*=N^*}$.

%% file: 03-complexity.tex

\section{Computational complexity}
\label{sec:complexity}

In this section, we provide several complexity results for \parallelcubes.
We~first give a negative answer to a question posed in~\cite{a.akitaya_et_al:LIPIcs.ESA.2025.28} by showing \NP-hardness for makespan $1$ and symmetric difference $1$ for the three-dimensional setting.
This shows that \parallelcubes is not \FPT parameterized by the symmetric difference.
Subsequently, we establish \logAPX-hardness by reducing from \textsc{Set Cover}; this reduction inherently applies to the sequential setting.
We start with \NP-hardness for constant-size symmetric difference.

\begin{restatable}[$\star$]{theorem}{theoremNPHardness}
    \label{thm:unlabeled-sym-diff-hard}
    {\parallelcubes} is \NP-complete for makespan $1$ and symmetric difference size $1$.
\end{restatable}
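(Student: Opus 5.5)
Membership in \NP{} is immediate: a schedule of makespan $1$ is a single transformation, i.e., a set of moving modules together with their moves. We can verify in polynomial time that the moving set $M$ is free, that each move is a legal slide or convex transition, that no collisions occur, and that the result equals the target configuration. Freeness needs care, since it quantifies over all subsets of $M$. However, removing any subset of $M$ leaves a connected configuration exactly when $C\setminus M$ is connected and every module of $M$ is face-adjacent to $C\setminus M$. Hence it is checkable in polynomial time.

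For hardness I would reduce from \textsc{3Sat}, or from a planar variant that is convenient to embed in the grid. The key observation is that with symmetric difference $1$ and makespan $1$, exactly one cell $s$ is vacated and one cell $t$ is filled. The parallel move must therefore realize a \emph{chain of shifts}: module $m_1$ leaves $s$, each $m_i$ moves into the cell just vacated by $m_{i-1}$, and the last module enters $t$. Every other cell is left unchanged, and cycles of moves (rotations of modules around a loop) are also allowed. Since modules are indistinguishable, only the resulting set matters. The solution is thus a vertex-disjoint path from $s$ to $t$ in a ``move graph'', plus possibly cycles, subject to two constraints: the moving set must be free, and the moves must be collision-free. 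This is reminiscent of Hamiltonian-path/disjoint-path hardness.

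Concretely, I would build a thin ``wire'' of modules, a tube along which a chain of single-unit slides propagates. At each variable gadget the wire splits into two parallel tracks (true/false). Since all modules on the chain move simultaneously, backbone connectivity has to come from modules that do not move. I would design a rigid static scaffold such that every module of track $T$ is attached to the scaffold only through a \emph{clause connector module}. That connector is shared with the other tracks that satisfy the same clause, and freeness fails if all tracks adjacent to a clause's connectors move. More precisely, I would arrange for each clause to own a small substructure that stays attached to the backbone only if at least one of its literal tracks is \emph{not} moving. Here the third dimension is used to route clause attachments over the tracks without interference. A satisfying assignment then selects the track through each variable that leaves every clause attached, and this yields a valid one-step schedule. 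Conversely, any one-step schedule must traverse each variable gadget by exactly one track, because of the chain structure and the fact that collisions forbid two chains sharing cells. That choice defines an assignment satisfying all clauses by the connectivity argument.

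The main obstacle is the soundness direction. I must rule out ``cheating'' one-step schedules that use convex transitions, cycles of rotating modules, shortcuts between tracks, or moves of scaffold modules instead of the intended chain. My plan is to make every module outside the intended tracks either non-free (a cut-vertex for the backbone) or geometrically blocked. I would do this by surrounding the gadgets with thick solid material in which no module has an empty neighbor, except along the designated tube. The remaining case analysis, showing that the only empty cells adjacent to movable modules lie along the tracks, is local and can be carried out gadget by gadget.
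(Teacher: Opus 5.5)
Your framework matches the paper's. With makespan $1$ and symmetric difference $1$, a single push chain runs from $s$ to $t$ and traverses each variable gadget along one side, and a clause structure is cut off when all of its literal sides move. Your NP-membership argument via the characterization of free sets is also fine. One slip: $m_i$ should enter the cell vacated by $m_{i+1}$, not $m_{i-1}$; as written, $s$ gets refilled.

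The genuine gap is that your intended schedule uses only slides and treats convex transitions as cheating. In this model cubes may not overlap during constant-speed motion, and a module sliding into a cell whose occupant leaves perpendicularly does overlap it. If $A$ slides east from $(0,0,0)$ while $B$ slides north out of $(1,0,0)$, they share volume $t(1-t)$ at time $t$; leaving in the opposite direction is a swap. So every module of a slide-only chain moves in the same direction, and the chain is a straight segment from $s$ to $t$. No wire can split into two tracks and rejoin, so your construction would encode nothing. The chain can only turn by convex transitions, which need room at the corner they sweep through. The paper's gadgets are built on exactly this. The difference module must begin with a convex transition into the positive or negative row of the first variable gadget. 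Inside each connector, the chain either makes two convex transitions on one side or rotates over the connector's top layer to switch sides, which is where the third dimension is really needed.

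Your soundness mechanism fails for a related reason. Having no empty neighbour does not immobilize a module in a one-step schedule. A push chain travels through occupied cells and needs only the empty cell $t$ at its end, plus corner room at turns. Thick solid material therefore supports straight push runs and can offer shortcuts rather than blocking them. Worse, solid material touching a clause structure keeps that clause connected regardless of which tracks move, which destroys the encoding. What you need is a characterization of all $s$--$t$ chains compatible with collisions, slide support, and freeness. The paper gets this by keeping every gadget thin: variable rows of depth $3$ and height $1$, connectors in a $5\times3\times3$ box, and thin clause combs. Then the only places the chain can turn are the intended ones, and its route through each gadget can be enumerated. That concrete design, not a local emptiness check, carries the soundness direction.
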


\begin{proof}[Proof sketch]
    We reduce from \textsc{Planar Monotone 3Sat}, which asks whether a given Boolean formula is satisfiable~\cite{berg.khosravi2012optimal-binary}.
    Each clause consists of at most $3$ literals, all either positive or negative, and the clause-variable incidence graph must admit a plane drawing where variables are mapped to the $x$-axis, positive (resp., negative) clauses are mapped to the upper (resp., lower) half-plane, and edges do not cross the $x$-axis.
    For the construction, we place multiple instances of three gadget types based on a formula $\varphi$ as depicted in~\cref{fig:not-fpt-hardness-overview}.

    \begin{figure}[htb]
        \captionsetup[subfigure]{justification=centering}%
        \begin{subfigure}[b]{\columnwidth}%
            \centering%
            \includegraphics[width=0.75\columnwidth]{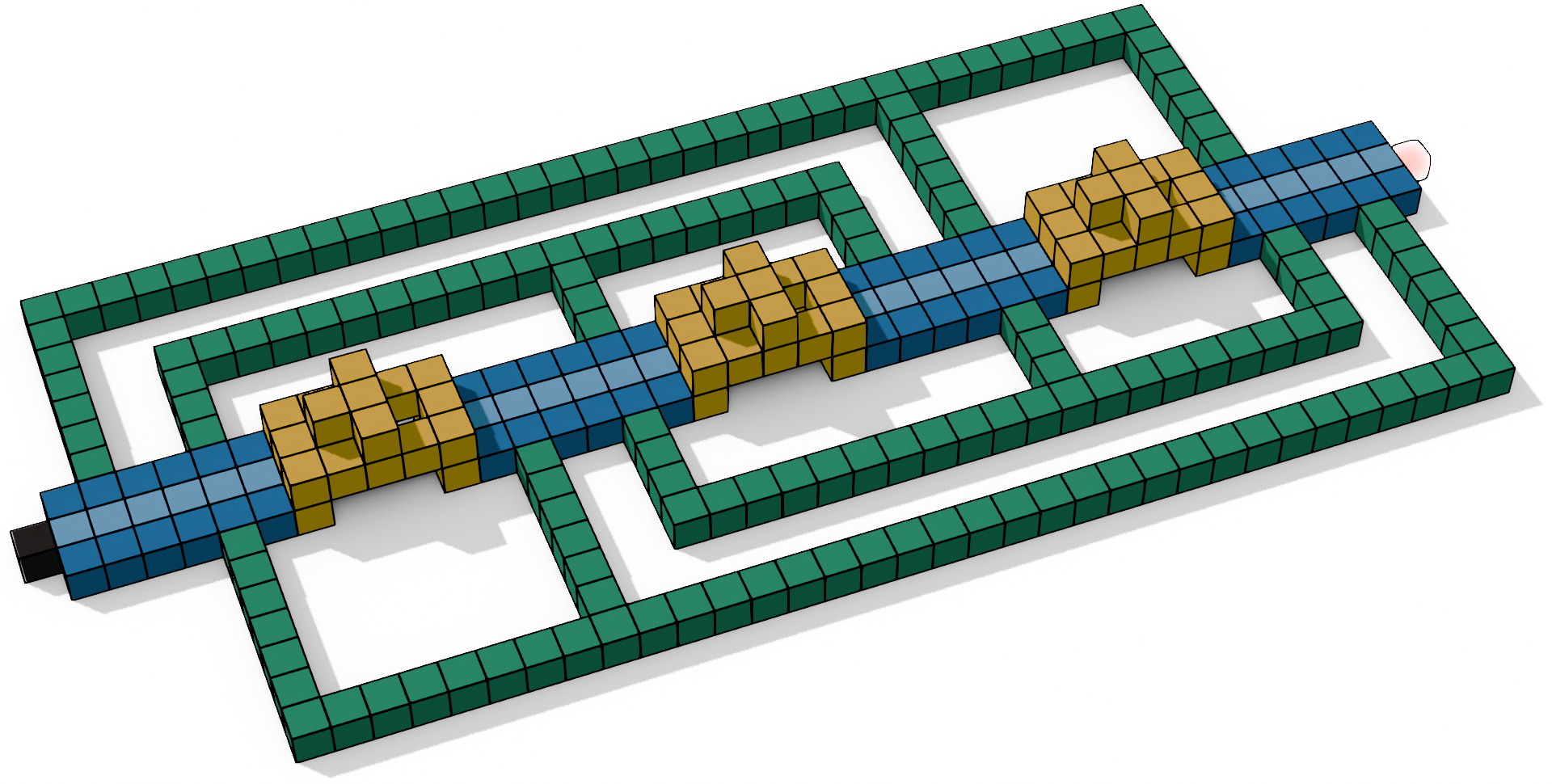}
            \subcaption{}
            \label{fig:not-fpt-hardness-overview}%
        \end{subfigure}%
        \par\bigskip%
        \begin{subfigure}[b]{\columnwidth/3}%
            \centering%
            \includegraphics[page=3]{figures/sat-reduction/variable-connector-diagram}\par%
            \includegraphics[page=2]{figures/sat-reduction/variable-connector-diagram}\par%
            \includegraphics[page=1]{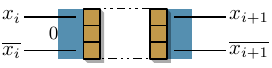}\par%
            \subcaption{}
            \label{fig:hardness-connector-blowup}%
        \end{subfigure}%
        \begin{subfigure}[b]{\columnwidth/3}%
            \centering%
            \includegraphics[width=\columnwidth]{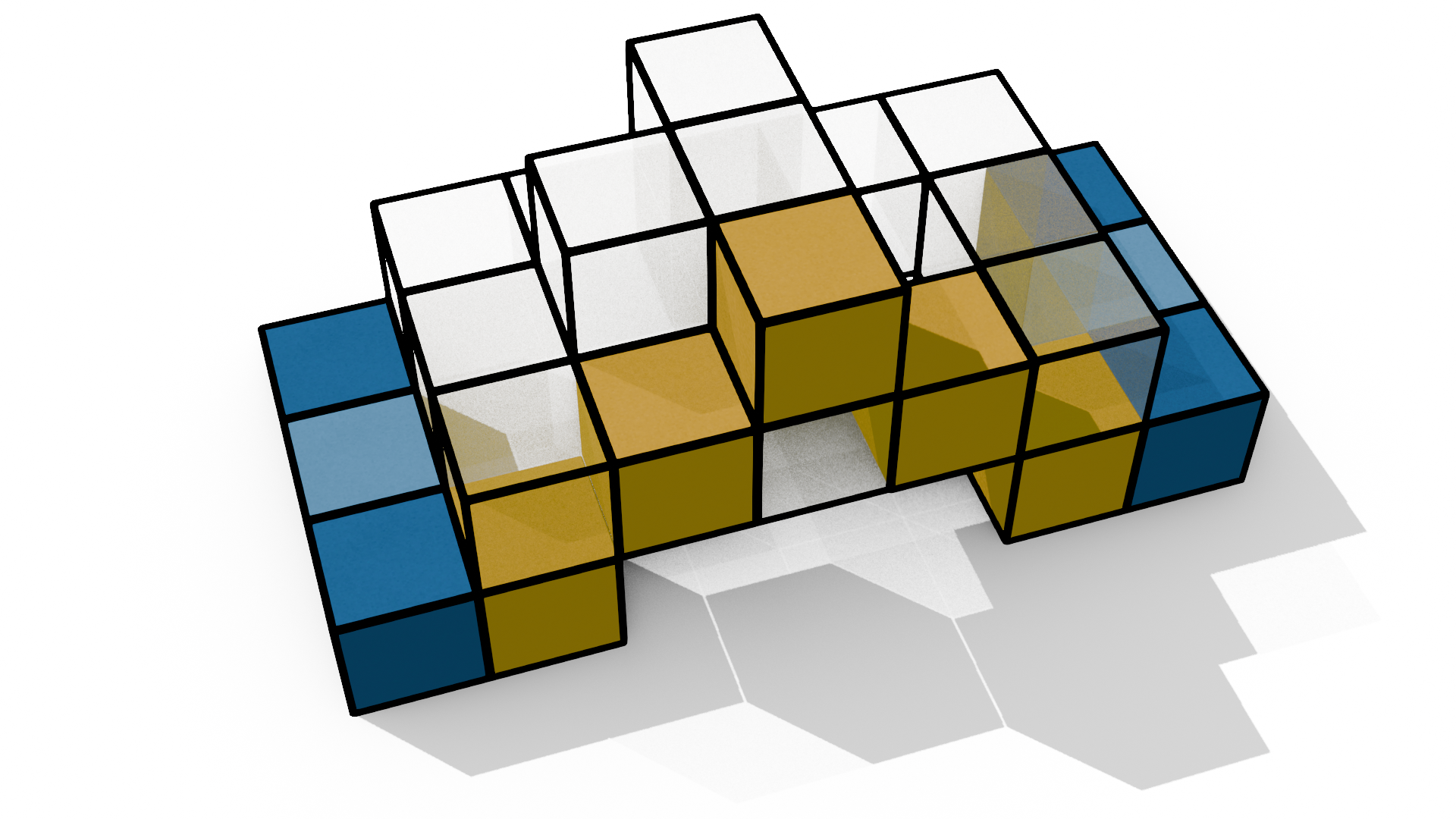}%
            \llap{\includegraphics[width=\columnwidth]{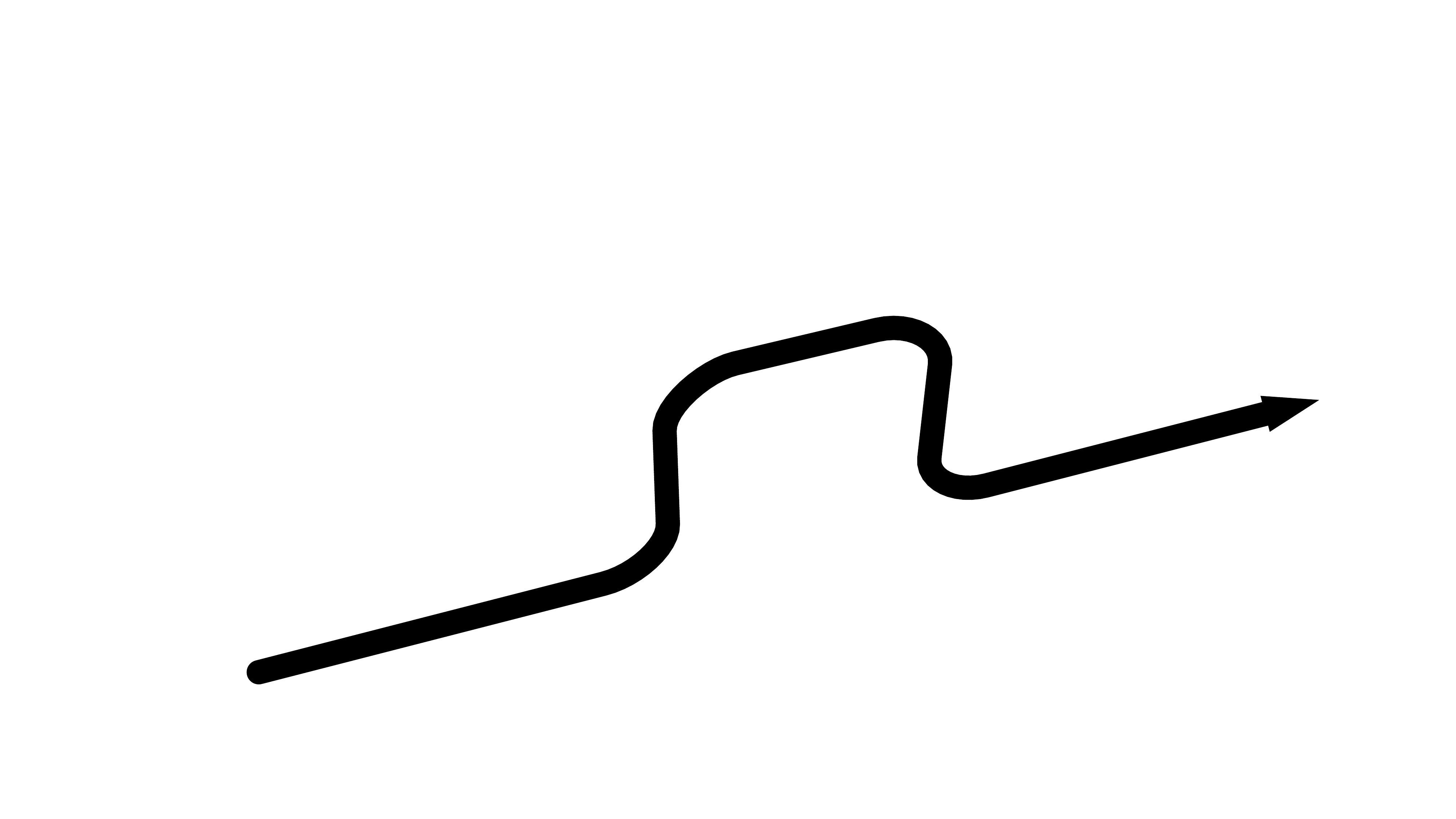}}%
            \subcaption{}
            \label{fig:not-fpt-hardness-connector-closeup-a}%
        \end{subfigure}%
        \begin{subfigure}[b]{\columnwidth/3}%
            \centering%
            \includegraphics[width=\columnwidth]{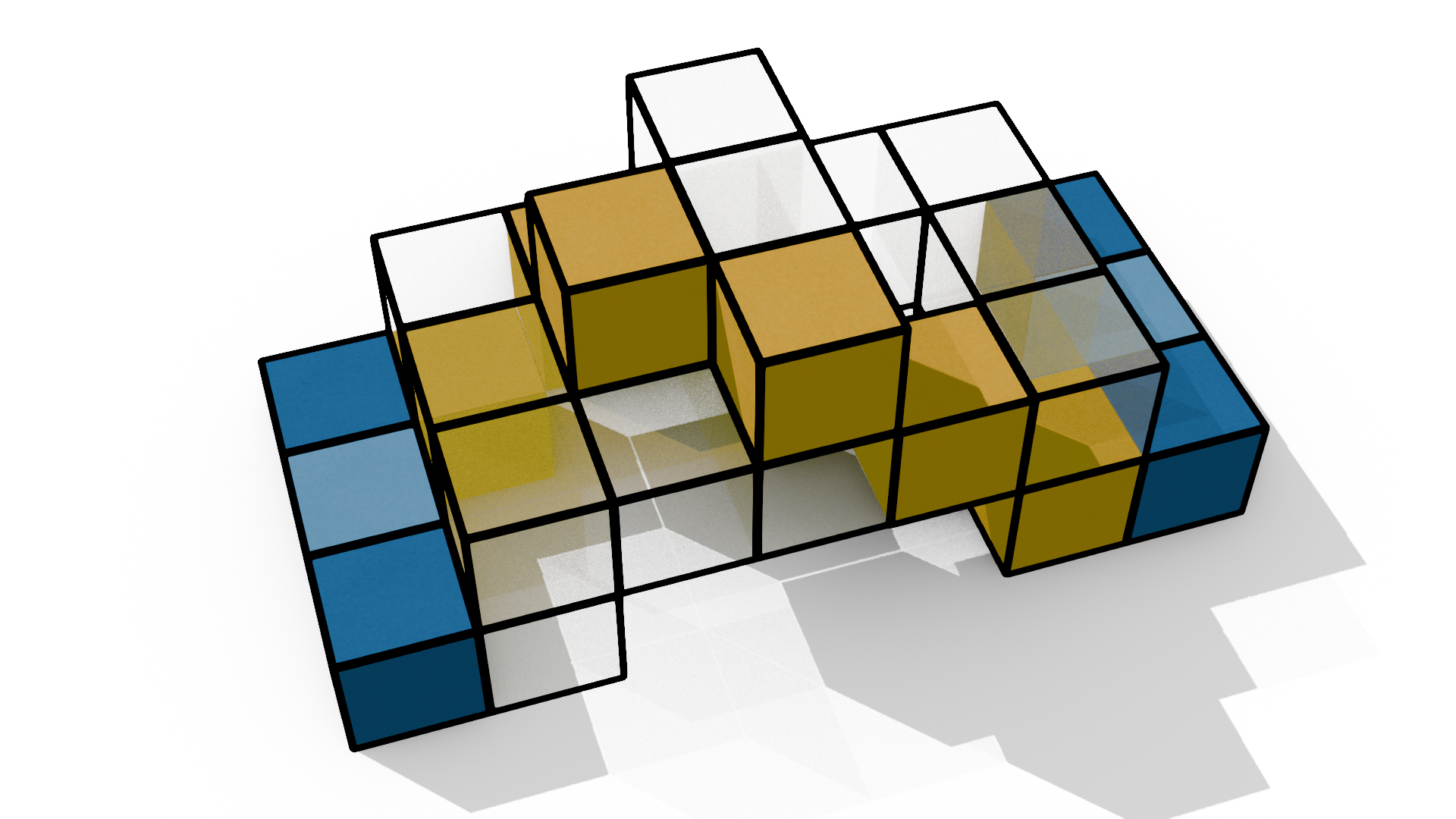}%
            \llap{\includegraphics[width=\columnwidth]{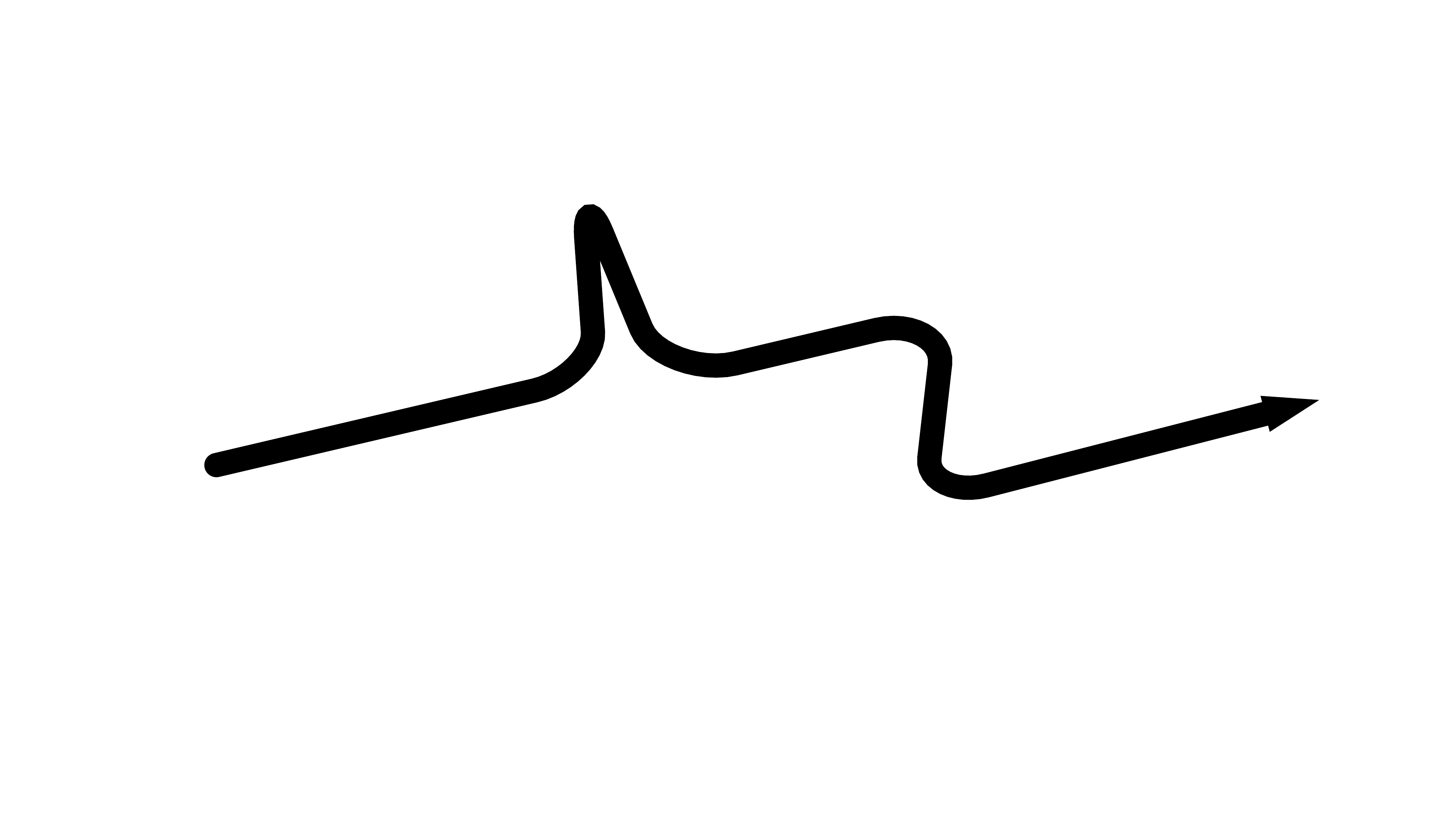}}%
            \subcaption{}
            \label{fig:not-fpt-hardness-connector-closeup-b}%
        \end{subfigure}%
        \caption{In (a) we show our construction, with colours indicating gadgets as outlined in the text, for $\varphi=(x_1\lor x_3\lor x_4)\land(x_1\lor x_2\lor x_3)\land(\overline{x_1}\lor\overline{x_2}\lor\overline{x_4})\land(\overline{x_2}\lor\overline{x_3}\lor\overline{x_4})$.
        Variable gadgets are placed in ascending order left to right.
        In (b)--(d) we show the connector gadget and possible paths through.}
        \label{fig:not-fpt-hardness}
    \end{figure}%

\pagebreak
    In particular, we place \newterm{variable gadgets} (blue) in the $xy$-plane along the $x$-axis, and differentiate between \newterm{positive} and \newterm{negative} modules in the variable gadgets based on their $y$-coordinate.
    Between any two successive variables, we place a \newterm{connector gadget} (yellow).
    A detailed breakdown can be seen in~\cref{fig:hardness-connector-blowup}.
    Also attached to each variable gadget is at least one \newterm{clause gadget}~(green), effectively a thin comb structure that attaches to the positive or negative side of each contained variable's gadget.
    Finally, a single \newterm{difference module} (black/transparent) is located at one end of the construction in the start configuration and at the other end in the target configuration.
    It comprises the entire symmetric difference.
    To empty the difference module's initial cell and fill its target cell in the same step, we must find a path from one position to the other that does not cut the configuration apart.
    In particular, this path passes through each variable gadget on either the positive or negative side, separating them from clauses and thus encoding an assignment of Boolean values.
\end{proof}

\vspace*{-0.5em}%
\begin{restatable}[$\star$]{corollary}{corollaryInapprox}
    Unless \P = \NP, \parallelcubes cannot be approximated within a factor better than $2$ in polynomial time, even for constant-size symmetric difference.
\end{restatable}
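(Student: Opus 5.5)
This follows from the gap built into the reduction of \cref{thm:unlabeled-sym-diff-hard}. Take the instance $(C_s, C_t)$ constructed there from a \textsc{Planar Monotone 3Sat} formula $\varphi$; its symmetric difference has size $1$. We show that it always admits a schedule of some small constant makespan $c$ (we aim for $c=2$). We also know from the theorem that it admits a schedule of makespan $1$ if and only if $\varphi$ is satisfiable. Given this, suppose a polynomial-time algorithm approximates the optimal makespan within a factor $\alpha<2$. On a satisfiable instance it returns a schedule of makespan at most $\alpha\cdot 1<2$, hence makespan exactly $1$. On an unsatisfiable instance the optimum is at least $2$, so it returns makespan at least $2$. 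Comparing the output with $1$ therefore decides \textsc{Planar Monotone 3Sat}, so $\P=\NP$.

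The first key step is the upper bound: exhibit, for every instance produced by the reduction, a schedule of makespan $2$ regardless of satisfiability. The most natural choice uses an intermediate configuration $C'=C_s\cup\{t\}\setminus\{s'\}$, where $t$ is the target cell of the difference module and $s'$ is some module adjacent to a locally redundant part of the structure. One option is to move a module at the far end into $t$ in the first step. This fills $t$ via a slide or convex transition along the end of the construction without cutting anything, because the end region is locally thick. The second step then vacates the original cell $s$ and restores $s'$ by a short local motion. If the difference module cannot be handled that way, I would instead use a ``bucket brigade'' along a path through the connector and variable gadgets. In that version the first transformation shifts modules along the path and temporarily leaves a hole in a spot that does not disconnect, which is possible once we ignore the clause constraint. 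The second transformation repairs the hole.

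A fallback that is more robust: if a makespan-$2$ schedule is awkward, pad the construction. Add a parallel ``bypass'' strip of modules that is only usable within two steps, or show that a schedule of constant makespan $c$ always exists. The latter gives a weaker gap of $c$ versus $1$, so it would yield inapproximability only within a factor less than $c$. To get the stated factor $2$ I need exactly $c=2$, so I will commit to designing or verifying the makespan-$2$ schedule. Concretely, first move the difference module partway along the path so that it sits inside a ``thick'' region of a gadget. Then, in the second step, complete the path, with the constraints on the two steps each checked locally.

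The main obstacle is this makespan-$2$ schedule for unsatisfiable formulas. The makespan-$1$ hardness relies precisely on the fact that any single parallel transformation must cut through each variable gadget on one side, and the clause combs block some choices. To show that two steps always suffice, I must argue that splitting the motion into two rounds lets each round preserve the backbone. In the first round, part of the path is traversed while the clause combs stay attached through the not-yet-moved modules. In the second round, the remaining part is traversed with the already-moved modules now providing connectivity. I would check this by a gadget-by-gadget local argument using the detailed connector structure of \cref{fig:hardness-connector-blowup}.
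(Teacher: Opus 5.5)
Your first paragraph is already a complete proof. The step you call ``the first key step'' and ``the main obstacle'' is not needed for this statement.

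Look at your own gap argument. The unsatisfiable case uses only that the optimum is at least $2$. \cref{thm:unlabeled-sym-diff-hard} gives this directly: makespan $1$ is achievable iff $\varphi$ is satisfiable, and makespans are integers. The upper bound $c$ never enters. So your claims that you ``need exactly $c=2$'' and that a makespan-$c$ upper bound ``would yield inapproximability only within a factor less than $c$'' get the logic backwards. The hardness factor is the ratio between the lower bound on unsatisfiable instances ($2$) and the optimum on satisfiable ones ($1$). An upper bound on the unsatisfiable optimum only caps what this particular reduction could ever show. Even feasibility of the unsatisfiable instances does not matter, since no algorithm can return a makespan-$1$ schedule for them.

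The paper's proof does take the route you were attempting: it exhibits a makespan-$2$ schedule for every produced instance. That buys a stronger promise statement: the optimum is always $1$ or $2$, so distinguishing $1$ from $2$ is NP-hard and the reduction's gap is exactly $2$. If you want that stronger fact, your sketches do not deliver it.

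\begin{itemize}
\item With $C'=C_s\cup\{t\}\setminus\{s'\}$ and $s'$ local to one end, as you describe, the other step must vacate a cell at one end and fill a cell at the opposite end. Each cell is left and entered by at most one module per transformation. Hence that step's moves contain a single chain across the whole construction, and you are back at the makespan-$1$ problem.
\item If instead the path is split at one point (``move partway, then complete''), take $\varphi=\varphi_1\wedge\varphi_2$ with $\varphi_1,\varphi_2$ unsatisfiable on disjoint variable sets placed side by side. One of the two rounds must fully traverse all variables of $\varphi_1$ or all variables of $\varphi_2$. By the uniform-propagation argument of the theorem, this fixes a side in each of those gadgets, i.e., an assignment, and some clause comb loses all its attachments.
\item The clause constraint cannot be ``ignored'': the combs reach the backbone only through the variable rows.
\end{itemize}

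The missing idea is the paper's. Move exclusively along the negative rows, so positive clauses are never detached. In each round, run several chains that end in connector gadgets, chosen so that every negative clause stays attached to its index-minimal variable in the first round and to another of its variables (its index-maximal one) in the second.
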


We emphasize that we reduce from \textsc{Planar Monotone 3Sat} for clarity only.
Nonetheless, the same construction remains valid when reducing from \textsc{Satisfiability}; when the clause-variable incidence graph is non-planar, the clauses can be embedded in the third dimension.
The following is an immediate consequence of~\cref{thm:unlabeled-sym-diff-hard}.
\begin{corollary}
    \label{cor:not-fpt-in-sym-diff}
    {\parallelcubes} is not \FPT\ parameterized by the size of the symmetric difference.
\end{corollary}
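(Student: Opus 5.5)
This corollary follows from \cref{thm:unlabeled-sym-diff-hard} by the standard argument that an NP-complete fixed-parameter slice rules out an \FPT algorithm. Suppose, for contradiction, that there is an algorithm deciding \parallelcubes in time $f(s)\cdot \mathrm{poly}(|I|)$. Here $s$ is the size of the symmetric difference between the start and target configurations, $f$ is any computable function, and $|I|$ is the encoding size of the instance, including the makespan bound $k$.

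First, recall the output of the reduction in \cref{thm:unlabeled-sym-diff-hard}. From a \textsc{Planar Monotone 3Sat} formula $\varphi$ it builds, in polynomial time, a pair of configurations with $k=1$ and symmetric difference exactly one, namely the single difference module. The schedule of makespan~$1$ exists if and only if $\varphi$ is satisfiable.

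Second, run the hypothetical \FPT algorithm on these instances. Since $s=1$ for all of them, $f(s)=f(1)$ is a fixed constant. The running time is therefore polynomial in the size of the constructed instance, and hence polynomial in $|\varphi|$, because the construction has polynomial size. This would decide \textsc{Planar Monotone 3Sat}, which is \NP-complete, in polynomial time. So \P{} $=$ \NP.

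Third, draw the conclusion. The corollary must be read under the usual assumption \P{} $\neq$ \NP. More precisely, \parallelcubes is para-\NP-hard in this parameter: it is \NP-hard already for the single parameter value $s=1$. Para-\NP-hardness excludes not only \FPT but also \XP algorithms, unless \P{} $=$ \NP.

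There is no real obstacle here. The only points to check are that the reduction is polynomial and that its output has constant parameter, independent of $\varphi$. Both are given by \cref{thm:unlabeled-sym-diff-hard}.
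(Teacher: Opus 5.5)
Your proposal is correct and takes the same route as the paper, which states the corollary as an immediate consequence of \cref{thm:unlabeled-sym-diff-hard}; you spell out the standard para-\NP-hardness argument that \NP-hardness at the single parameter value $1$ rules out any $f(s)\cdot\mathrm{poly}(|I|)$ algorithm unless \P{} $=$ \NP. You make the assumption \P{} $\neq$ \NP{} explicit, which the paper leaves implicit, and you note that \XP{} algorithms are excluded as well; both points are correct.
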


We proceed by showing that both the parallel and sequential variants of \slidingcubes are \logAPX-hard.
We provide a reduction from \textsc{Set Cover}.
Our entire construction fits inside a bounding cube of size $\mathcal{O}(n^4 k^7)$ where $n,k$ refer to the set cover instance having $n$~many elements, and $k$ many sets.
Its reliance on large three-dimensional structure makes it unsuitable for the two-dimensional version of the problem.

\begin{restatable}[$\star$]{theorem}{theoremLogAPX}
    Both \parallelcubes and \sequentialcubes are \logAPX-hard; thus, cannot be approximated with a factor better than $\Theta(\log(n))$, unless~$\P=\NP$.
\end{restatable}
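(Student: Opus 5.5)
We plan an approximation-preserving reduction from \textsc{Set Cover}, which is \logAPX-complete and, unless $\P=\NP$, cannot be approximated within a factor $c\ln n$ for some constant $c>0$. Given a universe $U$ of $n$ elements and a family $\mathcal{S}$ of $k$ sets, we build two connected configurations $C_s$ and $C_t$ with the following property. For every cover of size $\ell$, there is a schedule whose makespan (parallel) or move count (sequential) is $\Theta(\ell\cdot L)$ plus a lower-order term. Conversely, every schedule of cost $T$ can be turned in polynomial time into a cover of size $O(T/L)$. Here $L$ is a large polynomial scale factor that makes each ``set purchase'' expensive and everything else negligible. The overall size is polynomial, in line with the stated $\mathcal{O}(n^4k^7)$ bounding cube, so $\log$ of the number of modules is $\Theta(\log(nk))$ and the $\Theta(\log n)$ gap transfers to the new size parameter.

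The gadgets work as follows. For each element $e\in U$, a \emph{demand} region in the target configuration needs one module delivered, or equivalently a specific site to be filled. For each set $S\in\mathcal{S}$, there is a long \emph{set tower}: a tall thin column of height about $L$ that must be collapsed or relocated before modules can pass to the demand sites of its elements. We route, in three dimensions, a thin \emph{channel} from each set tower to each $e\in S$, which is where the non-planarity of the incidence graph is absorbed. We arrange that opening a set tower costs $\Theta(L)$ time, since in a sliding model distance is a lower bound on time: a module moves at most one cell per step. Once the tower is open, all its channels can be served in parallel within $O(\mathrm{poly}(n,k))\ll L$ extra steps. In the parallel model the towers chosen must be opened sequentially, which is the delicate part. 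We enforce this by letting all towers draw on a single \emph{reservoir} module group reachable only through one narrow corridor, so that opening a tower requires $\Theta(L)$ modules to pass through a bottleneck of constant width. In the sequential model the count of moves is automatically additive, and the same construction works with cost $\Theta(L)$ moves per opened tower.

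The proof then has three parts. (i) \emph{Completeness:} from a cover of size $\ell$, we describe an explicit schedule, keeping a scaffold backbone that guarantees connectivity throughout, with cost at most $\ell L+\mathrm{poly}(n,k)$. (ii) \emph{Soundness:} from any schedule, we call a set \emph{used} if, at some time, some module of the tower has left its $\Theta(L)$-neighbourhood or a channel of $S$ has been traversed. We then show that the used sets form a cover, because every demand site can only be reached through some channel, since the rigid walls of the construction cannot be breached within the time budget without paying $\geq L$. We also show that each used set forces $\Omega(L)$ cost charged disjointly, via the bottleneck for the parallel case and by counting moves for the sequential case. (iii) Choosing $L\geq (nk)^{c'}$ makes the additive terms vanish relative to $L$, so a ratio-$\rho$ approximation for \slidingcubes yields a ratio $\rho(1+o(1))$ approximation for \textsc{Set Cover}.

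We expect the main obstacle to be soundness in the parallel model, namely proving that costs of distinct set towers cannot overlap in time. The concern is that parallel motion might let many towers open simultaneously, collapsing the makespan to $O(L)$ regardless of $\ell$. Our first attempt is the bottleneck argument: the flow of modules through a constant-width corridor is bounded per step, so the total time is at least the volume of modules to transport divided by the corridor capacity, which is $\Omega(\ell L)$. We must also exclude shortcuts in which modules bypass the corridor by disassembling walls, and shortcuts in which connectivity lets the backbone itself deform. For the walls, we make them thick, of thickness $\gg \ell L$ in a cheap way, or alternatively we argue that any breach would force a disconnection.
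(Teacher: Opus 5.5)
\textbf{There is a genuine gap.} Your outer frame matches the paper: an AP-style reduction from \textsc{Set Cover} in which each chosen set costs a large scale $L$ and everything else is lower order. But the mechanism you use to charge that cost fails in the parallel model with indistinguishable modules, so soundness step (ii) does not go through.

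\emph{Opening a tower is a volume cost, not a distance cost.} Collapsing or relocating a column of height $L$ does not force any single module to travel far. ``Distance lower-bounds time'' applies to one module, but all modules of a column can shift simultaneously. So all towers can be opened in parallel, in makespan that need not depend on $\ell$ at all.

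\emph{The bottleneck patch does not repair this, because modules are unlabeled.} The $\Theta(L)$ modules a tower needs need not come from the reservoir. Any module on the tower's side of the corridor (wall, tower or channel modules) that can move without breaking the backbone can be borrowed and later returned. A cut argument only forces traffic through the corridor if that side has no spare movable mass, and you never establish this. For the same reason, ``every demand site can only be reached through some channel'' is false: a nearby free module can simply fill the site. Thick walls do not help unless their modules provably cannot move. Separately, $C_t$ must not depend on which towers were opened, so opened towers would have to be restored; your sketch does not address this.

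\emph{The missing idea} is to make the expensive work the travel distance of a single module, and to freeze everything else through the backbone constraint. In the paper:
\begin{itemize}
\item The dual graph of $C_I$ is a tree. Each element $j$ is a cut module $\module m_j$ that only has to move one cell up.
\item That move becomes legal only after the single mobile module $\module m$ plugs the gap of a selector gadget $\selector(i,j)$ with $j\in s_i$, closing a cycle. Coverage is thus encoded by connectivity, not by delivering material.
\item Reaching the selectors of $s_i$ forces $\module m$ to walk up and down a path of length $A$. Since one module moves at most one cell per step, this costs $2\ell A$ plus lower-order terms in both the sequential and the parallel model.
\item Every other module carries a long spike, so moving it would disconnect the spike. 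Each spike tip lies $100kA$ away from all other modules and so cannot help within the budget.
\end{itemize}
The spikes are what justify the \textsc{Immobile} variant, and they rule out exactly the borrowing and parallel shortcuts your construction is exposed to. The result is $\ell=\lfloor d/(2A)\rfloor$ and an AP-reduction with $\alpha=4$.
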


\begin{proof}[Proof sketch.]
	We break the \logAPX-hardness proof of \parallelcubes into two parts.
	In the first part, we show \logAPX-hardness of a variant of the problem that is easier to analyze, which we call \textsc{Immobile}.
	In this variant, there exist two types of~modules: (1)~\newterm{immobile modules} that cannot perform any moves and, thus, will always be in the same position, and (2) \newterm{mobile modules} that actually can perform the introduced moves.
	Note that immobile modules may still induce collisions as well as provide a backbone.
	Moreover, only a single module moves at a time, as we are used to from the sequential setting.

	In the second part, we argue that the assumptions made in the immobile variant remain valid once we augment every immobile module with a \newterm{spike gadget}.
	A spike gadget is a long, width-1 chain of modules extending in a fixed direction, whose sole purpose is to block the motion of all modules that were immobile in the \textsc{Immobile} variant.
	Any attempt to slide these modules disconnects the spike gadget from the remainder of the construction, thereby violating the backbone constraint. 

	\medskip
	Our reduction is from the \logAPX-hard problem \textsc{Set Cover}~\cite{DinurS14}. 
	An instance $\Sigma_{n,k}$ consists of natural numbers $1,\ldots,n$ and a set $S=\{s_1,\ldots,s_k\}$ of subsets of $\{1,\ldots,n\}$ such that $\bigcup_{i=1}^k s_i = \{1,\ldots ,n\}$. 
	The task is then to find a set $S^\star$ of minimum cardinality such that $\bigcup_{s_i \in S^\star} s_i = \{1,\ldots, n\}$. 
	A top-down view of the construction’s blueprint is shown in~\cref{fig:sc-reduction-diagram}.
	Black lines describe thin paths of immobile modules, each of width and height $1$, that connect all gadgets to one another.
	\begin{figure}[htb]%
		\centering%
		\includegraphics{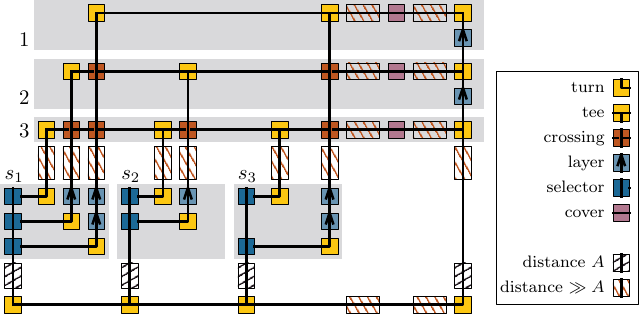}%
		\caption{A diagram showing our reduction from the \textsc{Set Cover} instance with $n=3$ over the sets $s_1=\{1,2,3\}$, $s_2=\{2,3\}$, and $s_3=\{1,3\}$.}
		\label{fig:sc-reduction-diagram}%
	\end{figure}%
	
	We introduce a couple of different gadgets.
	The gadgets that directly relate to the set cover instance are the \newterm{cover gadget} and the \newterm{selector gadget}.
	In particular, for every $j = 1,\ldots,n \in \mathbb{N}$, we introduce a cover gadget, depicted as a violet square in~\cref{fig:sc-reduction-diagram}, and in detail in~\cref{fig:cover-gadget-main}.
	In a cover gadget, the path of immobile modules is interrupted by a single mobile module $\module m_j$, highlighted in orange in~\cref{fig:cover-gadget-main}.
		\begin{figure}[p]%
		\captionsetup[subfigure]{justification=centering}%
		\begin{subfigure}[b]{0.5\columnwidth-0.5em}%
			\centering%
			\includegraphics[width=0.7\columnwidth]{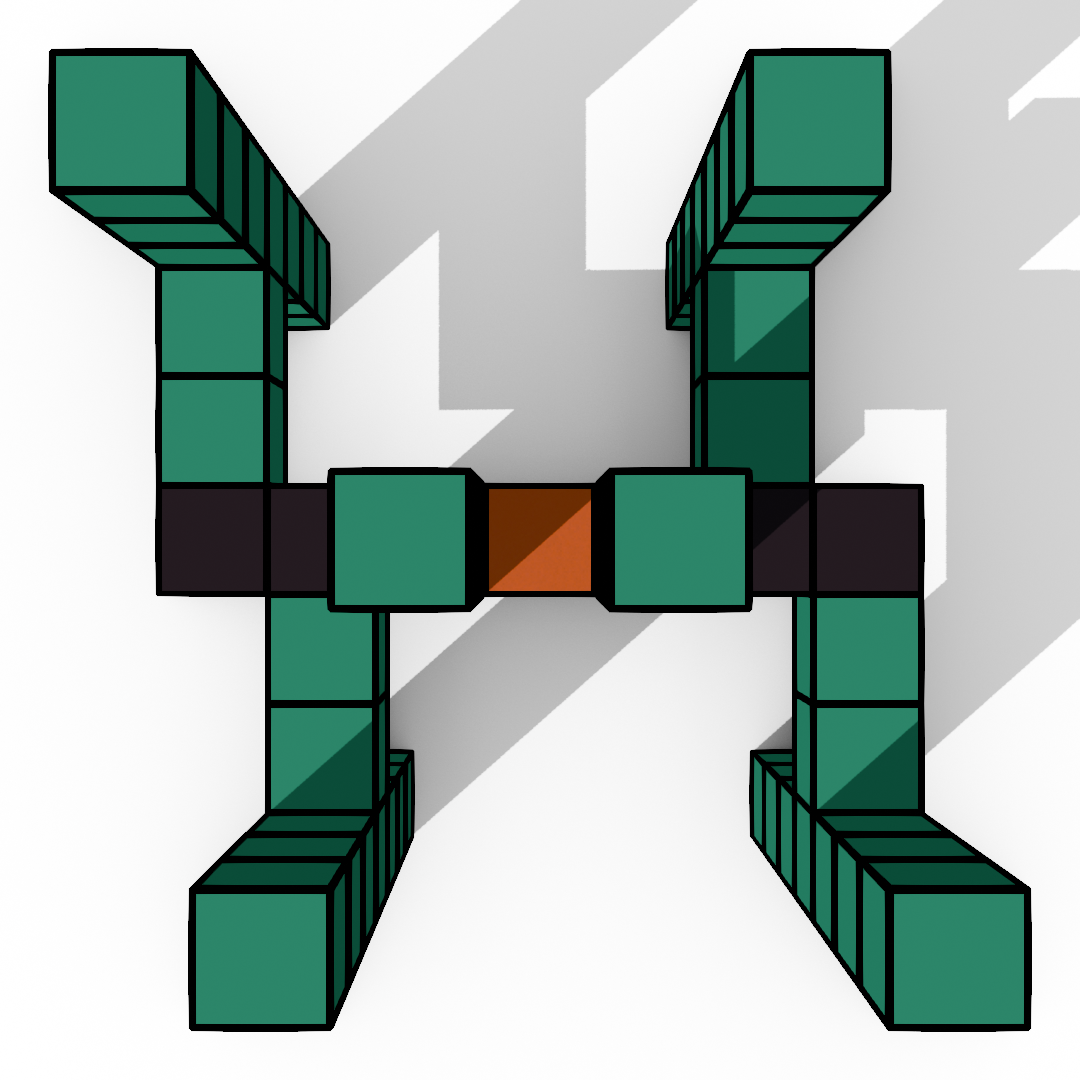}%
			\subcaption{}
			\label{fig:covera-main}
		\end{subfigure}
		\hfill%
		\begin{subfigure}[b]{0.5\columnwidth-0.5em}%
			\centering%
			\includegraphics[width=0.7\columnwidth]{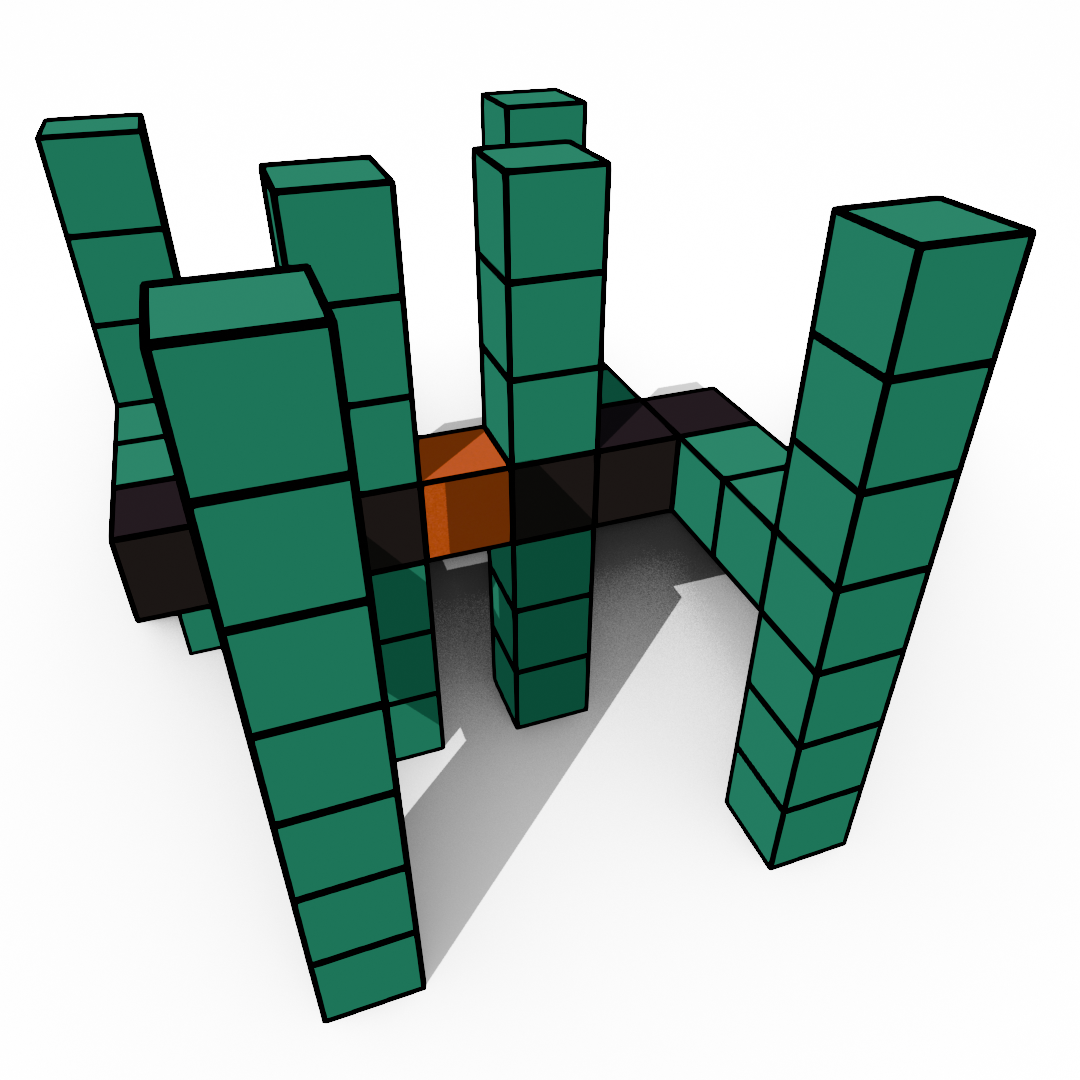}%
			\subcaption{}
			\label{fig:coverb-main}
		\end{subfigure}%
		\par%
		\vspace{0.75cm}%
		\begin{subfigure}[b]{\columnwidth}%
			\hfil%
			\includegraphics[page=1]{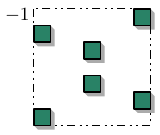}\hspace{2em}%
			\includegraphics[page=2]{figures/sc-reduction/cover-gadget-diagram}\hspace{2em}%
			\includegraphics[page=3]{figures/sc-reduction/cover-gadget-diagram}%
			\subcaption{}
		\end{subfigure}
		\caption{The cover gadget. The orange cube describe the module in the symmetric difference, with starting position at layer $0$, and target position in layer $+1$. The dark cubes describe the immobile modules. Green cubes correspond to spikes.}
		\label{fig:cover-gadget-main}
	\end{figure}	
	Cover gadgets contain the modules of the symmetric difference between the initial and the final configuration: more precisely, in $C_I$, the orange module is at level $0$, whereas in $C_F$ it has been raised to level~$+1$, directly above its original position.
	To enable motion of the orange modules inside the cover gadgets, we must ensure that the backbone remains connected.
	This is realized by the selector gadgets, depicted as a dark blue square in~\cref{fig:sc-reduction-diagram}, and in detail in~\cref{fig:selector-gadget-main}.
	\begin{figure}[p]%
		\captionsetup[subfigure]{justification=centering}%
		\begin{subfigure}[b]{0.5\columnwidth-0.5em}%
			\centering%
			\includegraphics[width=0.7\columnwidth]{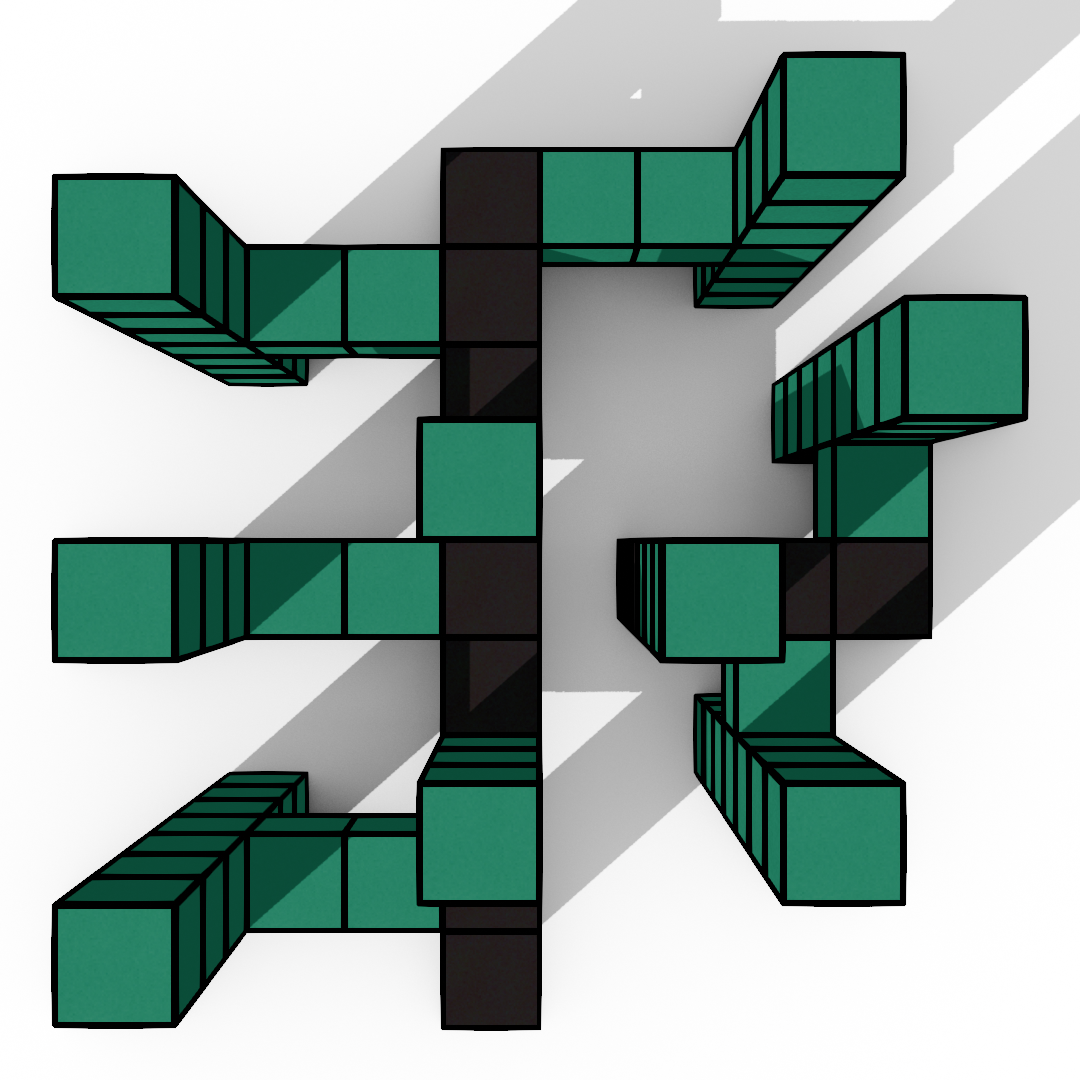}%
			\subcaption{}
			\label{fig:selectora-main}
		\end{subfigure}
		\hfill%
		\begin{subfigure}[b]{0.5\columnwidth-0.5em}%
			\centering%
			\includegraphics[width=0.7\columnwidth]{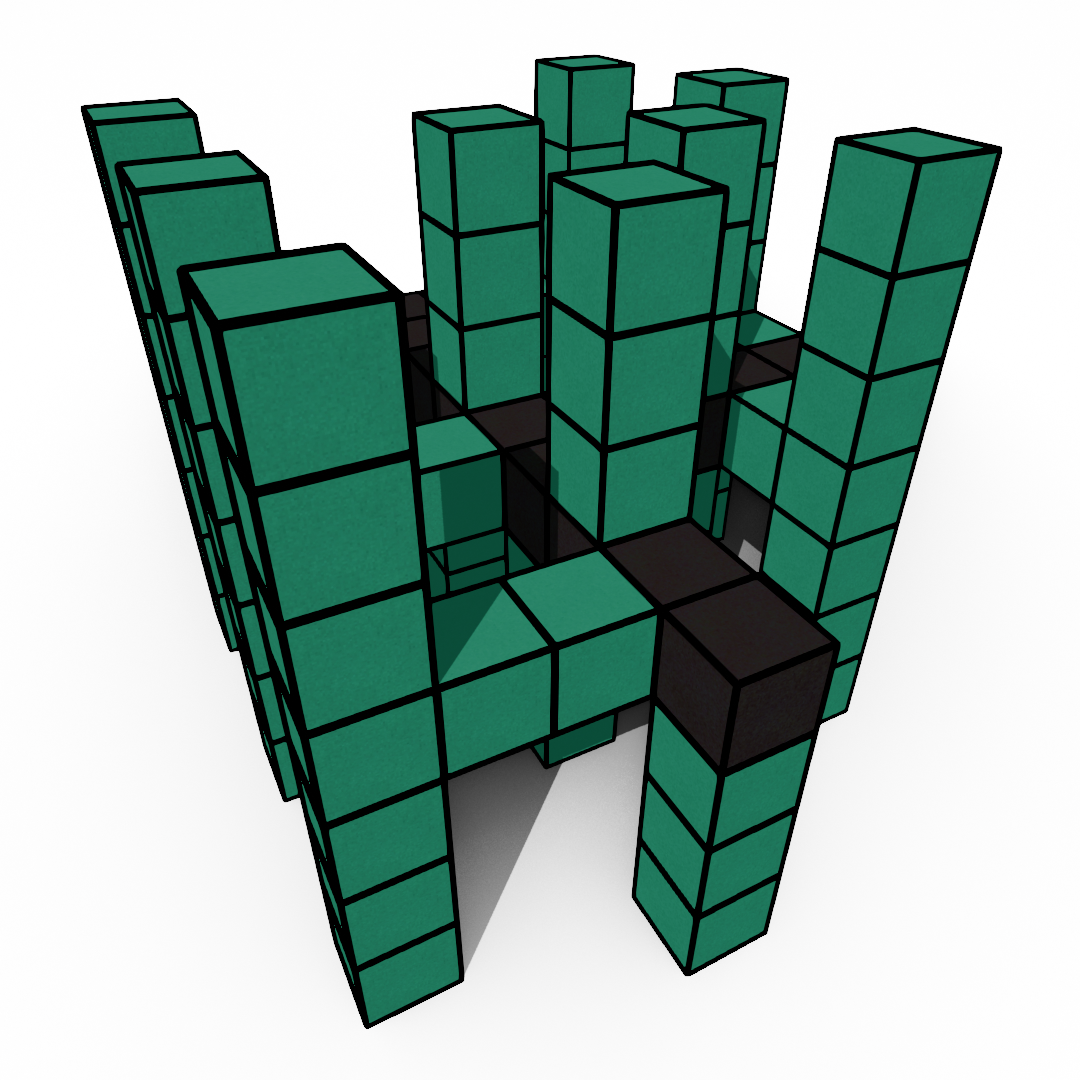}%
			\subcaption{}
			\label{fig:selectorb-main}
		\end{subfigure}%
		\par%
		\vspace{0.75cm}%
		\begin{subfigure}[b]{\columnwidth}%
			\hfil%
			\includegraphics[page=1]{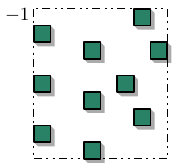}\hspace{2em}%
			\includegraphics[page=2]{figures/sc-reduction/selector-gadget-diagram}\hspace{2em}%
			\includegraphics[page=3]{figures/sc-reduction/selector-gadget-diagram}%
			\subcaption{}
		\end{subfigure}
			\caption{Selector gadgets enable modules within cover gadgets to perform their moves.}
			\label{fig:selector-gadget-main}
		\end{figure}
	The gadget is used to represent the containment of a number $j$ in a set $s_i$.
	Within a selector gadget, one path of immobile modules is oriented orthogonally to a second path, and each path connects to the cover gadget associated with the number $j$, but from different directions.
	Moving a mobile module to connect the two paths establishes the backbone constraint that in turn enables the orange module of the cover gadget to move.
	The remaining gadgets handle connectivity, crossings, and the lifting of paths needed to realize some of those crossings.
	
	The reduction then works as follows.
	Because the dual graph of $C_I$ forms a tree, no orange module in the symmetric difference is able to move without violating the backbone constraint.
	We begin with a single mobile module $\module m$ in $C_I$, placed at the bottom-left corner, and move it toward a selector $s_i$ to fill the gap inside it. 
	By doing so, we close a cycle in the dual graph, and therefore enable those orange modules inside of the cover gadgets that belong to the set $s_i$ to move to their respective final positions.
	Since the distance between the bottom row (which forms a comb with the selector gadgets, forcing us to route through~it) and any selector is large, we aim to traverse these distances as rarely as possible.
	
	What remains to show is the equivalence between an optimal solution to the set cover instance and a schedule of minimum makespan.
	This follows from appropriate adjustments to the distances between gadgets, with a thorough analysis of all admissible movements.
	\end{proof}

%% file: 04-algorithm.tex

\pagebreak
\section{Teleport}
\label{sec:teleport}

Given a subset $Q$ of the configuration $C$ we say that a schedule \emph{teleports} $Q$ if it transforms~$C$ into a new configuration $C'$ with $(C\setminus Q)\subset C'$.
We show that given two configurations $C$ and $C'$ so $C' = (C\setminus S)\cup S'$ where $C\setminus S$, $S$ and $S'$ are all connected sub-configurations, there is a schedule that transforms $C$ into $C'$, i.e., $S$ teleports into $S'$.
This result is of independent interest, but in this paper we do not try to minimize the makespans of such transformations.
This was done for simplicity as our main result only uses these schedules for constant sized sub-configurations, so the makespan is always constant.

The main technical tool of this section is a generalization of \textsc{LocateAndFree} from~\cite{abel.akitaya.kominers.ea2024universal-in-place}.
\textsc{LocateAndFree} locates a module in the outer face of the configuration and frees this module without changing the outer face.
We are more strict in the sense that we wish to teleport while maintaining the rest of the configuration (internal and external).

A \emph{face} of the configuration $C$ is the set of module faces in the boundary of a component of the complement $\overline{C}$ (the set of cells in $\mathbb{Z}^3$ not in $C$).
The \emph{outer face} of $C$ is the face in the boundary of the unbounded component of $\overline{C}$.
Two adjacent module faces $f_1$ and $f_2$ in a face of $C$ are \emph{slide-adjacent} if either they bound the same same cell $c\in\overline{C}$, or if a module in $c_1\in\overline{C}$ containing $f_1$ can reach a cell $c_2\in\overline{C}$ containing $f_2$ after a single move in the configuration $C\cup\{c_1\}$.
Each face of $C$ can be seen as a 2-manifold obtained by joining slide-adjacent module faces at their common edge. (See \cite{abel.akitaya.kominers.ea2024universal-in-place} for more details. Each face of $C$ corresponds to a component of the ``slide-adjacency graph'' in \cite{abel.akitaya.kominers.ea2024universal-in-place}.)
An edge $e$ of a configuration face is \emph{pinched} if it is contained in more than two module faces in a face of $C$. In particular, in three dimensions a pinched edge is contained in exactly four module faces and up to two configuration faces. This happens when two modules that are not face-adjacent share an edge.
We define an \emph{extended face} of $C$, the manifold obtained by performing a topological surgery at each pinched face joining pairs of module faces $f_1$ and $f_2$ at the common pinched edge so that $f_1$ and $f_2$ are contained in the boundary of the same module in $C$.
Note that an extended face of $C$ might contain many faces of $C$ since the topological surgery might merge different faces of $C$.
The \emph{outer extended face} of $C$ is the extended face that contains the outer face of $C$.
We say that a module is \emph{in a face} $F$ of $C$ if one of its faces is contained in $F$.
Intuitively, $C$ has a single extended face if and only if its interior is simply connected: the interior of the polycube equivalent to $C$ is homeomorphic to the open ball.

\begin{restatable}[$\star$]{lemma}{lemmafreeexists}
\label{lem:free-existance}
There is a free module in the outer extended face of any connected configuration.
\end{restatable}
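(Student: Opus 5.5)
We prove the statement by choosing an extremal module in the outer extended face and showing it cannot be a cut vertex of the dual graph, falling back to a block--cut-tree argument when that fails.

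\emph{Step 1 (easy case).} If $|C|=1$ the statement is trivial, so assume $|C|\ge 2$. Let $G$ be the dual graph of $C$ (face-adjacency), which is connected by assumption. A module is free exactly when it is not a cut vertex of $G$.

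\emph{Step 2 (extremal module).} Choose $\module{m}$ lexicographically maximal in $(z,y,x)$. Its top face borders the unbounded component of $\overline{C}$, so $\module{m}$ lies in the outer face, and hence in the outer extended face. If $\module{m}$ is not a cut vertex, we are done.

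\emph{Step 3 (block--cut tree).} If $\module{m}$ is a cut vertex, consider the block--cut tree of $G$. Every leaf block contains a non-cut vertex, so it suffices to find a leaf block $L$ with some non-cut module of $L$ in the outer extended face.

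\begin{itemize}
\item Take a component $K$ of $G-\module{m}$ that does not contain the rest of the lexicographic ordering's top structure. Descend into it: among the leaf blocks hanging inside $K\cup\{\module{m}\}$, pick one, $L$.
\item Apply the extremal argument again within $L$, with the extremal direction chosen so the extremal module avoids the attaching cut vertex. This is possible by taking the extreme in some direction among the six axis directions where the extreme is not unique to the cut vertex.
\item If $L$ has at least two modules, its extreme modules in the six directions cannot all coincide with the single cut vertex attaching $L$. So some module $\module{m}'\neq$ (cut vertex) is extreme in some axis direction within $L$.
\end{itemize}

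\emph{Step 4 (the main obstacle).} Such an $\module{m}'$ has a face whose outward cell is not in $L$. We must show this face lies on the outer extended face of $C$, not in an internal cavity formed by other blocks. This is where the extended face and the surgery at pinched edges matter. A ray from $\module{m}'$ in the extremal direction may hit other parts of $C$. However, the other parts attach to $L$ only through one cut vertex, so they form a connected piece $R$ meeting $L$ in a single module.

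We would argue topologically that the union of polycubes of $L$ and $R$ glued at one cube cannot enclose a face of $L$ away from the outer extended face. Suppose a cavity bounded jointly by $L$ and $R$ existed. Since the interiors are joined only through one cube (edge and vertex contacts being resolved by the surgery into the same extended face), the boundary surface of the cavity would have to pass between $L$ and $R$ along edge or vertex contacts. After surgery, these contacts connect that cavity's boundary to the outer extended face.

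This part needs care: we would formalize it via Alexander duality, or by a sweep arguing that extended faces of $L\cup R$ are unions of extended faces of $L$ and $R$ merged at pinched contacts. If it proves too delicate, the fallback is to pick the leaf block farthest from $\module{m}$ in the block--cut tree and iterate the argument inductively on the number of blocks.
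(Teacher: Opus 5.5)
\textbf{There is a genuine gap, in Step 4.} Your Steps 1--3 are sound as far as they go. The lexicographic maximum lies on the outer face. A module $\module{m}'\neq c$ of a leaf block $L$ that is extreme within $L$ has an empty outward cell, since all its dual-graph neighbours lie in $L$.

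Step 4, however, is explicitly left open, and the claim you intend to prove there is false for an arbitrary leaf block. Your topological sketch only excludes cavities bounded \emph{jointly} by $L$ and $R=C\setminus(L\setminus\{c\})$. But $R$ alone can seal $L\setminus\{c\}$ inside a cavity. For instance:
\begin{itemize}
\item Take the one-module-thick boundary of the box $[0,6]^3$.
\item Hang $p=(3,3,5)$ from the wall module $c=(3,3,6)$ into the cavity $[1,5]^3$.
\item Add $(6,6,7),(6,6,8),(5,6,8)$, so that the lexicographic maximum $\module{m}=(6,6,8)$ is a cut module.
\end{itemize}
Then $\{c,p\}$ is a leaf block inside the large component of $G-\module{m}$. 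Its unique non-cut module $p$ is extreme within it (in the $-z$ direction). Yet every exposed face of $p$ lies on the sealed cavity. No pinched edge connects that cavity to the outside, so it is a separate extended face and $p$ is not on the outer extended face. So ``pick one'' leaf block can fail. Your fallback, taking the leaf block farthest from $\module{m}$ in the block--cut tree, selects exactly $\{c,p\}$ here. This is not a matter of formalising the topology more carefully: the statement being formalised is wrong without a constraint on which leaf block is chosen.

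\textbf{What is missing} is a mechanism that ties the piece you descend into to the outer face, rather than to the block structure alone. The paper does this by inducting on the number $|\mathcal{F}(C)|$ of module faces in the outer extended face. It uses the strengthened hypothesis that there are two free modules on it. If $\module{m}_{max}$ is a cut module, it recurses into a component $K_a$ of $C\setminus\{\module{m}_{max}\}$. It argues, via the pinched edge left between two neighbours of $\module{m}_{max}$, that $|\mathcal{F}(K_a)|<|\mathcal{F}(C)|$. It then takes a free module of $\mathcal{F}(K_a)$ other than the attaching neighbour $\module{a}$, which is then free in $C$. Because the recursive call only ever returns modules on the outer extended face of the piece, internal cavities such as the one above are excluded automatically.

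To keep your block--cut-tree route, you would need two things. First, an analogous invariant maintained during the descent. For the lexicographic maximum one can check that every component of $G-\module{m}$ contains a module bordering the unbounded component of $\overline{C}$. Second, an argument that a module on the outer extended face of the piece also lies on the outer extended face of $C$. That second argument is exactly the part you deferred.
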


\begin{restatable}[$\star$]{lemma}{lemmaTeleport}
    \label{lem:teleport}
    Given a configuration $C$, a connected subset $S\subset C$ with $|S|\ge2$ such that $C \setminus S$ is connected, an empty cell $e$ adjacent to $S$, a free module $\module{m}\in S$ so that $e$ and $\module m$ are in the same extended face of $S\setminus\{\module m\}$, and so that both  $(S\setminus\{\module m\})\cup\{e\}$ and $(C\setminus\{\module m\})\cup\{e\}$ are connected. There exists a schedule with makespan $2^{\mathcal{O}(|S|)}$ teleporting $\module m$ to $e$.
\end{restatable}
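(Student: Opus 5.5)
The plan is induction on $|S|$: walk $\module m$ along the extended face of $S\setminus\{\module m\}$ from its current cell to $e$, and repair every point where this walk would break connectivity by recursively moving a smaller part of $S$. Throughout, $C\setminus S$ is never touched and stays connected. So only two things need watching: connectivity of $S$ together with its attachment to $C\setminus S$, and collisions with modules of $C\setminus S$.

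\emph{Step 1 (routing).} Since $\module m$ and $e$ lie in the same extended face $F$ of $S' \coloneqq S\setminus\{\module m\}$, there is a sequence of slide-adjacent module faces of $F$ joining a face of $\module m$ to a face of $S'$ bounding $e$. Pinched edges are crossed using the surgery that defines extended faces. I would turn this sequence into cells $c_0=\module m, c_1,\dots,c_t=e$ adjacent to $S'$, consecutive cells differing by a single slide or convex transition. A naive bound on $t$ is $\mathcal{O}(|S|)$, since $F$ has $\mathcal{O}(|S|)$ module faces.

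\emph{Step 2 (obstructions).} A single move $c_i\to c_{i+1}$ can fail for three reasons:
\begin{itemize}
\item[(a)] $c_{i+1}$ is occupied by a module of $C\setminus S$;
\item[(b)] the move is blocked, i.e.\ a slide lacks its supporting module or a convex transition collides;
\item[(c)] the module at $c_i$ is not free, which cannot happen since $\module m$ is the only moving module and is attached to $S'$ along the route.
\end{itemize}
Case (a) is the real issue. When the route enters a cell of $C\setminus S$, or a pinched edge where no move exists, I would not move $\module m$ further. Instead I would apply the induction hypothesis to a smaller set: a module $\module m'$ of $S'$ adjacent to the obstruction plays the role of the mover, and $\module m$ takes $\module m'$'s place. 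Concretely, swap roles: teleport some free module of $S'$ (which exists by \cref{lem:free-existance} applied to $S'$) to the cell vacated ahead. This is done within a subset of size $|S|-1$, and the hypotheses (connectivity of $(S\setminus\{\cdot\})\cup\{\cdot\}$ and of the complement) are re-verified locally. Since modules are indistinguishable, only the occupied set matters, so "teleport $\module m$ to $e$" just means producing $(C\setminus\{\module m\})\cup\{e\}$.

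\emph{Step 3 (recurrence).} Each step of the route costs either $\mathcal{O}(1)$ or a recursive call on size $|S|-1$. I would organise this so that one teleport of size $s$ makes at most a constant number of recursive calls on size $s-1$, plus $\mathcal{O}(s)$ direct moves. The recurrence $T(s)\le c\,T(s-1)+\mathcal{O}(s)$ then gives $T(s)=2^{\mathcal{O}(s)}$. A naive $\mathcal{O}(s)$ calls per level would give $s^{\mathcal{O}(s)}$ instead. To reach $c=\mathcal{O}(1)$, I would do the whole route with one inductive "rotation": move $\module m$ to the first free position adjacent to a pivot, then recursively teleport $S'$'s free module. The base case $|S|=2$ is a direct check: $\module m$ rotates around the other module in $\mathcal{O}(1)$ moves, since $e$ is adjacent to $S$ and connectivity with $C\setminus S$ is given.

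\emph{Main obstacle.} The hardest part is showing that the recursive subcall's hypotheses hold: the smaller set and its complement must remain connected, and the new mover and target must share an extended face, even in the presence of pinched edges and of $C\setminus S$ occupying face cells. I would try first to choose $\module m'$ as a free module of $S'$ on the outer extended face (by \cref{lem:free-existance}) that is closest along $F$ to the obstruction. I would then argue via the extended-face surgery that removing it does not split $F$ between the obstruction and $e$.
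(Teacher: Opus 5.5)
Your overall picture is the paper's: walk $\module m$ along the extended face, and at an obstruction recursively teleport a helper ahead and let $\module m$ take its place, relabelling because modules are indistinguishable. But the proposal stops where the proof begins. The step you call the ``main obstacle'' is the whole inductive step, and you leave it as something you ``would try''.

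Your swap is really two teleports: a helper $\module m'$ to a cell $e'$ ahead, then $\module m$ into the cell $\module m'$ vacated. For each you must name the set it runs on and verify the lemma's hypotheses for that set; ``re-verified locally'' does not do this. Two examples of what is needed:
\begin{itemize}
\item Recursing on $S\setminus\{\module m\}$ needs its complement $(C\setminus S)\cup\{\module m\}$ to be connected. That holds only if $\module m$ touches $C\setminus S$; the paper arranges it by first sliding $\module m$ against the blocking module $\module b$.
\item The helper must come from a part of $S$ whose removal disconnects nothing. The paper takes it, via \cref{lem:free-existance}, from a cut component avoiding the pivot $\module c$ and $\module m$.
\end{itemize}

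The paper gets there with a case split you do not have. If the path is blocked by $\module b\in S$ at a pinched edge, it slides $\module b$ into the other cell $e'$ of the pinch when $\module b$ is free, and otherwise recurses into a cut component of $\module b$. If the path is blocked by $\module b\in C\setminus S$, it subdivides by whether $\module c$ is free or a cut module, and by which components of $S\setminus\{\module c,\module m\}$ touch $e$ or $C\setminus S$. Each subcase uses a different recursion set ($S'$, $S\setminus S'$, $S\setminus\{\module m\}$, $S\setminus\{\module m'\}$). The paper also inducts on $|S|^2+\delta_F(\module m,e)$ rather than on $|S|$. That way a step that keeps the set but moves the relabelled mover strictly closer to $e$ is itself covered by the induction.

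Two further points fail as written.

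\textbf{The base case is false.} For $|S|=2$ the extended-face hypothesis is vacuous, and $C\setminus S$ can block every rotation of $\module m$ about the other module $\module u$. Let $C\setminus S$ be a large flat plate with a one-cell hole, let $\module u$ plug the hole, let $\module m$ sit on $\module u$, and let $e$ lie directly below $\module u$. All hypotheses hold. Yet every convex transition of $\module m$ about $\module u$ ends in a plate cell, so $\module m$ alone must walk around the plate, in time unbounded in $|S|$. A four-move schedule does exist, but it also moves $\module u$:
\begin{enumerate}
\item slide $\module m$ onto the plate, which frees $\module u$;
\item convex-transition $\module u$ under the plate;
\item convex-transition $\module m$ into the hole;
\item slide $\module u$ into $e$.
\end{enumerate}
This is the kind of pivot-relabelling the paper's Case~3a performs, and you need it already at the bottom of your induction.

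\textbf{The $2^{\mathcal{O}(|S|)}$ bound is not argued.} It rests on the claim that ``one inductive rotation'' gives $\mathcal{O}(1)$ recursive calls per level. A route can meet $\Theta(|S|)$ obstructions, which, as you note yourself, gives only $|S|^{\mathcal{O}(|S|)}$. The paper tracks the recurrence $T(|S|,\delta)\le 2T(|S|-1,|S|-1)+T(|S|,\delta-1)$. It argues that in its worst case (Case~3d) the remaining walk follows a direct path and is therefore only linear.
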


We require that $|S|\ge 2$ and $(S\setminus\{\module m\})\cup\{e\}$ is connected so we can bound the makespan as a function of $|S|$.
If either of these conditions fail there are examples that require $\module m$ to walk on modules not in $S$, giving a runtime dependency on $|C|$.
Those conditions can be dropped if we are only concerned with the existence of such schedules.
Given two configurations $C\cup S$ and $C\cup S'$ we can apply successively \Cref{lem:teleport} to grow a component of  $S\cap S'$ by filling an empty cell of $S'\setminus S$ that is adjacent to this component. We can then obtain the following:

\begin{corollary}
    \label{cor:teleport}
    Given two configurations $C\cup S$ and $C\cup S'$ with connected $C$, $S$ and $S'$, there exist a schedule that teleports $S$ into $S'$.
\end{corollary}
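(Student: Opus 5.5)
We argue by induction on $|S\setminus S'|$; we may assume $|S|=|S'|$, since otherwise no schedule can exist. If $S=S'$ the empty schedule works. Otherwise we repeatedly apply \Cref{lem:teleport} to move one module of the current set $T$ (initially $T=S$) into an empty cell of $S'\setminus T$. Throughout we maintain three invariants: $T$ is connected, $C\cup T$ is connected, and a designated connected set $G\subseteq T\cap S'$ is never touched again. Since we only need existence and not a makespan bound, we use the remark after \Cref{lem:teleport}: the conditions $|S|\ge 2$ and connectivity of $(S\setminus\{\module m\})\cup\{e\}$ may be dropped.

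Each step runs as follows.
\begin{enumerate}
\item \emph{Initialization.} Take $G$ to be a component of $T\cap S'$. If $T\cap S'=\emptyset$, first fill a cell of $S'$: by connectivity of $C\cup S'$ there is a cell $e\in S'$ adjacent to $C\cup T$ (this is the only place where $C$ is used as an anchor), and we set $G=\{e\}$ after this first move.
\item \emph{Choosing the target cell.} Since $S'$ is connected and $G\subsetneq S'$, there is an empty cell $e\in S'\setminus T$ face-adjacent to $G$.
\item \emph{Choosing the moving module.} Take a spanning tree of $T$ in which $G$ induces a subtree; this exists because $G$ is connected. Since $T\neq G$, the tree has a leaf $\module m\notin G$. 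Removing $\module m$ keeps $T$, and hence $(T\setminus\{\module m\})\cup\{e\}$, connected, the latter because $e$ touches $G$.
\item \emph{Connectivity to $C$.} To keep $C\cup T$ connected we prefer a leaf that is not the unique contact between $T$ and $C$. If every leaf outside $G$ is such a contact, we re-root the tree at a contact cell instead; since the contact set is nonempty and $|T\setminus G|\ge1$, a suitable leaf should exist.
\item \emph{Applying the lemma.} We apply \Cref{lem:teleport} with its set $S$ equal to $T$, or to all of $C\cup T$, so that the ``$C\setminus S$ connected'' hypothesis is trivial. This moves $\module m$ to $e$ and leaves everything else in place. We then set $T\gets (T\setminus\{\module m\})\cup\{e\}$ and $G\gets G\cup\{e\}$.
\end{enumerate}
Each step increases $|G|$ by one, so the process ends with $T=S'$.

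The hypothesis I expect to be the main obstacle is that $e$ and $\module m$ lie in the same extended face. The cell $e$ might lie in an inner cavity of the current configuration while our leaf lies on the outer extended face, or vice versa. I would handle this in one of two ways.
\begin{itemize}
\item Choose $\module m$ from the extended face $F$ containing $e$. An analogue of \Cref{lem:free-existance} for an arbitrary extended face, proved the same way, should yield a free module in $F$. We would then need to check that it can be chosen outside $G$.
\item Alternatively, order the cells of $S'\setminus S$ so that we always fill a cell $e$ on the outer extended face first. A cavity of $C\cup S'$ can only be closed off by the last cell filled around it, so we can postpone filling cells that would seal a cavity until the end.
\end{itemize}
If both approaches fail, the fallback is to dissolve the cavity temporarily: teleport a boundary module of the cavity away and later restore it, again via \Cref{lem:teleport}. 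This only affects the length of the schedule, which we are not bounding here.
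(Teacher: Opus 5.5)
Your plan is the paper's argument. The paper's proof is a single sentence: apply \Cref{lem:teleport} repeatedly to grow a component of $S\cap S'$ by filling an adjacent empty cell of $S'\setminus S$, without checking the lemma's hypotheses. The bookkeeping you add to check them, however, does not go through. Freezing $G$ and always taking the mover from $T\setminus G$ is too rigid, and the claim in step~4 that a suitable leaf exists is false. Take $C=\{(0,0,0)\}$, $S=\{(1,0,0),(1,1,0)\}$, $S'=\{(1,1,0),(0,1,0)\}$. Then $G=\{(1,1,0)\}$ and $e=(0,1,0)$, and the only module of $T\setminus G$ is $(1,0,0)$, which is a cut module of $C\cup S$, so \Cref{lem:teleport} cannot be applied to it. Yet the instance is solved by sliding the module of $G$ to $(0,1,0)$ and then $(1,0,0)$ to $(1,1,0)$. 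Also, in step~1 with $S\cap S'=\emptyset$, $|S|\ge 2$, and $S'$ attached to $C$ away from $S$, the new $T$ is disconnected, so your invariant and the spanning tree of step~3 are lost. This part is repairable by anchoring on the component $K$ of $C\cup(T\cap S')$ that contains $C$:
\begin{itemize}
\item maximality of $K$ and connectivity of $C\cup S'$ give an empty $e\in S'$ adjacent to $K$;
\item since $T\setminus S'\neq\emptyset$ is disjoint from $K$, a spanning tree of $C\cup T$ extending one of $K$ has a leaf $\module m\notin K$, which is free;
\item $K$ strictly grows with each step.
\end{itemize}

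The same-extended-face hypothesis is a genuine obstruction, and neither of your two remedies resolves it. Let $C$ be the $3\times3\times3$ block minus its center, $S$ a single module attached outside, and $S'$ the center. Every cell from which a module can enter the center lies in $C$, so no schedule keeping $C$ in place exists, and since $|S'\setminus S|=1$ no reordering helps. Nor does every extended face contain a free module. Occupy the six face-neighbours and twelve edge-neighbours of an empty cell $o$, and attach one more module at each of $o\pm(2,0,0)$, $o\pm(0,2,0)$, $o\pm(0,0,2)$. No edge around $o$ is pinched, so the extended face of this cavity consists only of faces of the six face-neighbours, and each of these is a cut module. Hence any proof must temporarily move modules of $C$, which is allowed because teleporting only constrains the final configuration. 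That is your fallback, but it is exactly the part left unargued: which module to remove, how to free it, and why the process terminates. The statement as written only asserts existence, so the simplest fix is this: $C\cup S$ and $C\cup S'$ are connected configurations of equal size, so universal reconfiguration of sliding cubes~\cite{abel.akitaya.kominers.ea2024universal-in-place} gives a sequential, hence valid parallel, schedule between them. Since modules are indistinguishable, that schedule teleports $S$ into $S'$.
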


\section{A worst-case optimal algorithm}
\label{sec:alg}

We leverage the results from~\cref{sec:teleport} to efficiently reconfigure in the \parallelcubes model.
On a high level, our methods form a generalization of~\cite{a.akitaya_et_al:LIPIcs.ESA.2025.28} to three dimensions.
The overall strategy can be divided into three phases: \textsc{Gather}, \textsc{Compress}, and \textsc{Compact Reconfiguration}.
The goal of \textsc{Gather} is to obtain a structure called~a~\newterm{snake} which is informally a set of connected $5\times 5\times 5$ metamodules.
Intuitively, the snake can move freely and quickly through the configuration without affecting the connectivity of the remaining configuration.
In the \textsc{Compress} phase, we use the snake to create a coarse projection of the configuration onto one of the faces of the configuration's bounding box.
Without loss of generality, assume this is the top face.
We then use this structure to sweep through the bounding box compressing the rest of the configuration into a \emph{compact} configuration of $5\times 5\times 5$ metamodules.
We can apply the previous phases to both $C_1$ and $C_2$ and the remaining problem reduces to reconfiguring between two compact configurations, which is handled by our \textsc{Compact Reconfiguration} phase.
A configuration $C$ is compact as defined in the literature if for every module $\module{u}\in C$, the cuboid defined by the position of $u$ and the origin is fully contained in the configuration:
\[
    \forall \module{u}\in C:\quad \forall a\in [0,x(u)]:\forall  b\in [0,y(u)]:\forall  c\in [0,z(u)]:\quad(a,b,c)\in C.
\]

\begin{figure}[htb]
    \captionsetup[subfigure]{justification=centering}%
    \begin{subfigure}[t]{0.25\columnwidth -0.75em}%
        \includegraphics[page=1,width=\columnwidth]{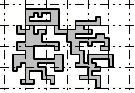}%
        \subcaption{}
    \end{subfigure}%
    \hfill%
    \begin{subfigure}[t]{0.25\columnwidth -0.75em}%
        \includegraphics[page=2,width=\columnwidth]{overview/overview}%
        \subcaption{}
    \end{subfigure}%
    \hfill%
    \begin{subfigure}[t]{0.25\columnwidth -0.75em}%
        \includegraphics[page=3,width=\columnwidth]{overview/overview}%
        \subcaption{}
    \end{subfigure}%
    \hfill%
    \begin{subfigure}[t]{0.25\columnwidth -0.75em}%
        \includegraphics[page=4,width=\columnwidth]{overview/overview}%
        \subcaption{}
    \end{subfigure}%
    \par\hfil%
    \begin{subfigure}[t]{0.25\columnwidth -0.75em}%
        \includegraphics[page=5,width=\columnwidth]{overview/overview}%
        \subcaption{}
    \end{subfigure}%
    \hfil%
    \begin{subfigure}[t]{0.25\columnwidth -0.75em}%
        \includegraphics[page=6,width=\columnwidth]{overview/overview}%
        \subcaption{}
    \end{subfigure}%
    \hfil%
    \begin{subfigure}[t]{0.25\columnwidth -0.75em}
        \includegraphics[page=7,width=\columnwidth]{overview/overview}%
        \subcaption{}
    \end{subfigure}%
    \caption{
        Overview of our algorithm (not to scale).
        In (a)--(d), we construct a snake, which we grow to $\Theta(A+h)$ modules in (e) and use to compact the remaining configuration in (f)--(g).
    }
    \label{fig:overview}
\end{figure}

In order to guide the creation and manipulation of the $5\times 5\times 5$ metamodules we subdivide the space into $5\times 5\times 5$ \emph{meta-cells} made of $5^3 = 125$ unit grid cells.
Without loss of generality, we can assume that the initial bounding box $B_i$ of the input configuration~$C_i$, $i\in\{1,2\}$, has dimensions that are multiple of 5, else we assume $B_i$ to be a smallest such a box containing~$C_i$ (adding at most $4$ units to each dimension).

Since the first two phases deal with a single input configuration, we now assume we are given a configuration $C$ with bounding box $B$.
We use the meta-cell decomposition to define a spanning tree $T$ of the configuration that will guide the \textsc{Gather} phase.
Let $\mathcal{C}$ be the set of components of $C$ induced by each meta-cell: $C$ restricted to a single meta-cell may be disconnected and each of these components is an element of $\mathcal{C}$.
We build ${T}$ as a spanning tree of $C$ in the following way.
Compute a spanning tree $\mathcal{T}$ of the adjacency graph of $\mathcal{C}$ and root it arbitrarily.
Using $\mathcal{T}$ as a guide, we build ${T}$ as follows.
The root of a component $S\in\mathcal{C}$ is either an arbitrary module in $S$ if $S$ is the root of $\mathcal{T}$, or a module in $S$ that is adjacent to its parent component in $\mathcal{T}$.
Compute a DFS spanning tree on each component in $\mathcal{C}$ from its root, making ${T}$ a spanning forest.
Now, connect adjacent trees until we have a single spanning tree of $C$, connecting the root of a component to its parent component in $\mathcal{T}$.
The root of $T$ is the root of its root component in $\mathcal{C}$.

We need to gather at least $(A+h)125$ modules in the snake for the \textsc{Compress} phase.
To achieve the makespan of $\mathcal{O}(A+h)$ we can only afford to gather at most a constant times~$(A+h)$ modules in the \textsc{Gather} phase.
For this reason, we find an appropriate subtree of ${T}_\module{s}$ of size $\Theta(A+h)$ to convert into the snake. The details are given in \cref{app:sec:alg:Ts}.

\subsection{Gathering phase}

\subsubsection{Initialize}

The first step to creating a snake is to obtain a meta-cell completely full of modules.
This section is dedicated to that task.
We first use \Cref{lem:subtree} to obtain a subtree ${T}_\module{m}$ with size $|{T}_\module{m}|\ge 2625$ and $|{T}_\module{m}|\in\mathcal{O}(1)$.
The number $2625$ is due to the fact that that is the number of cells contained in a $L_1$ ball of radius $12$.
Informally, the strategy is to use the modules of ${T}_\module{m}$ one by one to grow a ball centered at a module $\module{m}$ (using \Cref{lem:teleport}).
If such a ball has radius~12, it is guaranteed to contain a snake subconfiguration.
Although the 2D version of such a result would be much easier to achieve (a similar result is obtained in \cite{a.akitaya_et_al:LIPIcs.ESA.2025.28}), especially with our new technical tool of teleporting modules, the three dimensional version requires quite intricate case analysis, even using \Cref{lem:teleport}'s full power. 
This is because in three dimensions there are more ways that the rest of the configuration ($C\setminus {T}_\module{m}$) can block the advance of modules that are used to fill the desired ball.

\begin{restatable}[$\star$]{lemma}{lemmaGrowball}
	Given a subtree $T_\module{m}$ of $T$ with size $|{T}_\module{m}|\ge 2625$ and $|T_\module{m}|\in\mathcal{O}(1)$, there is a $\mathcal{O}(1)$ schedule that produces a configuration where the meta-cell containing $\module{m}$ is full, and every cell in $T\setminus T_\module{m}$ remains full.
	\label{lem:grow-ball}
\end{restatable}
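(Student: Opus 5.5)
We plan to repeatedly apply \Cref{lem:teleport} to modules of $T_\module{m}$, filling the $L_1$ ball of radius~$12$ around $\module{m}$ one cell at a time. Since $|T_\module{m}|\in\mathcal{O}(1)$, the total makespan is then constant. Let $R\subseteq T_\module{m}$ be a connected region containing $\module m$ that is already "grown", i.e., $R$ is a union of full $L_1$ shells around $\module m$ plus a partial shell. We maintain the invariant that $R\cup(T_\module{m}\setminus R)$ is connected, that $C\setminus (T_\module{m}\setminus R)$ is connected, and that every cell of $T\setminus T_\module m$ remains occupied. The ball of radius $12$ around any cell contains some full meta-cell only when aligned suitably. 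So we first argue that it contains the $5\times5\times5$ meta-cell of $\module m$, or a translate of it from which the meta-cell containing $\module m$ can be filled by another constant number of teleports. If the ball is not directly sufficient, we grow it in a shifted position, centred at the meta-cell centre. We choose the radius-$12$ ball centred at the meta-cell's central cell, which contains the whole meta-cell because $3\cdot 2\le 12$, and first bring $\module m$'s component there via teleports.

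In each step we pick an empty cell $e$ of the target ball adjacent to the current grown part, ordered by $L_1$ distance so that $(S\setminus\{\module u\})\cup\{e\}$ stays connected. We take $S$ to be the grown part together with the remaining, not yet used, subtree modules. Because $T_\module{m}$ is a subtree hanging from $T$ through its root, removing $S$ leaves $C\setminus S$ connected. Here we use the DFS/meta-cell structure of $T$: every module of $C\setminus S$ connects to the rest through $T\setminus T_\module m$. For the moving module $\module u$ we pick a leaf of the remaining tree that is free in $S$ and that lies in the same extended face of $S\setminus\{\module u\}$ as $e$. Candidate leaves come from \Cref{lem:free-existance}, applied to $S$ restricted to its outer extended face. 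If $e$ is already occupied by a module of $C\setminus T_\module m$, we simply count it as filled and do not spend a module. This is why $2625$ modules suffice: each cell of the ball costs at most one module.

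The main obstacle is the case where $e$ is not reachable within the same extended face of $S$ as any free leaf. This happens, for example, when modules of $C\setminus T_\module m$ enclose the empty cell $e$ together with $S$ in a pocket, or when the extended face containing $e$ is an inner cavity. In that case we will change the order in which cells are filled: we argue by case analysis on the local shape near $e$ that some other empty cell of the current shell lies in the outer extended face of $S$. Alternatively, we teleport a module into the cavity by first routing through a pinched edge, since extended faces merge faces at pinched edges. We expect to show that a cavity inside the ball bounded partly by external modules can always be entered, because the boundary of the grown region is connected to the outer face within constant distance. This three-dimensional case distinction is where the proof will require the most care.

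Once the ball (hence the meta-cell) is full, the leftover modules of $T_\module m$ remain attached to it. All cells of $T\setminus T_\module m$ were never moved, which gives the claim with makespan $\mathcal{O}(1)\cdot 2^{\mathcal{O}(|T_\module m|)}=\mathcal{O}(1)$.
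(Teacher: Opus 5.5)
Your proposal has a genuine gap. It sits exactly where you locate ``the main obstacle'': the interaction with modules of $C\setminus T_\module{m}$. That part is only announced (``we expect to show\ldots''), and it cannot be closed inside the framework you set up.

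In that framework you move only unused modules of $T_\module{m}$ and keep every other module fixed. You fill a cell $e$ by \cref{lem:teleport}, which needs $e$ to be adjacent to $S$ and to lie in the same extended face of $S\setminus\{u\}$ as the moving module $u$. Property (M) of $T$ only says that a module of $T_\module{m}$ is not adjacent to an outside module of its own meta-cell. It does not stop the meta-cell of $\module{m}$ from containing a second component of $C$ that is not in $T_\module{m}$. Take, for instance, a hollow $3\times3\times3$ shell in one corner, not adjacent to the component of $\module{m}$ and connected to the rest of $C$ through a neighbouring meta-cell. The empty centre of that shell lies in the meta-cell, and all six of its neighbours are outside modules. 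It forms its own extended face with no pinched edges. No reordering of the filling, and no routing through pinched edges, ever makes it reachable for a module of $T_\module{m}$.

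A milder form of the same problem arises for any target cell whose only neighbour in the filled region is an outside module, e.g.\ the cell directly behind an outside neighbour of the centre along an axis. So the lemma can only be proved if modules outside $T_\module{m}$ are allowed to move, as long as their cells end up full again. This is why the statement speaks of \emph{cells} of $T\setminus T_\module{m}$.

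This is the missing idea, and it is how the paper proceeds, in two steps.
\begin{enumerate}
\item It first builds a full $L_1$ ball of radius $2$ around $\module m_1$, $\module m_2$ or $\module m_3$, taken on a path $\module m,\module m_1,\dots,\module m_4$ of $T_\module{m}$. It does this by a case analysis on whether $\module m_i$ is interior to its meta-cell or lies on a face, edge, or corner of it, so that by (M) it has $0$ to $3$ neighbours outside $T_\module{m}$. A blocking outside neighbour $\module m'$ is shown to be free once its surroundings are filled. It is then moved together with $\module m_1$ by applying \cref{lem:teleport} to $S=\{\module m_1,\module m'\}$.
\item It then grows the ball using a claim: a ball of full cells of radius $r\ge2$, regardless of which modules occupy those cells, can push its centre out into an empty cell adjacent to the ball within $\mathcal{O}(r)$ makespan. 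This works by moving a boundary module outward and recursively refilling the vacated cell. Run in reverse, the same claim pulls a leftover free module of $T_\module{m}$ into the centre.
\end{enumerate}
Every cell at distance $r+1$ is adjacent to the ball. Hence a sealed pocket like the one above is filled by moving a wall module into it, and each empty cell still costs only one module of $T_\module{m}$.

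As a side remark, your recentring step is unnecessary. Two cells of a $5\times5\times5$ meta-cell are at $L_1$ distance at most $12$, so a radius-$12$ ball around any of its cells already contains the whole meta-cell. That is where the number $2625$ comes from.
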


\subsubsection{Finding a snake}
We now give a precise outline of the snake and its capabilities, and show it can be constructed.

\subparagraph*{Spine path.}
A \newterm{spine path} $P\in\mathbb{Z}^{3\times\ell}$ of length $\ell$ is a simple induced path in the face-adjacency relation of cells, consisting of three types of vertices: endpoints (\newterm{head} and \newterm{tail}), \newterm{major vertices}, and \newterm{minor vertices}.
The head vertex is connected by an induced path of up to~$9$ minor vertices to the first major vertex.
Two successive major vertices are connected by a straight, axis-parallel path of $4$ minor vertices.
The final major vertex is then connected to the tail vertex by another induced path of at most $9$ minor vertices.
To prevent self-intersections, vertices with distance $\geq 5$ along the path must also have $L_1$-distance $\geq 5$.

\begin{figure}[htb]
    \captionsetup[subfigure]{justification=centering}%
    \begin{subfigure}{\columnwidth/3 -0.3em}%
    {\includegraphics[width=\columnwidth]{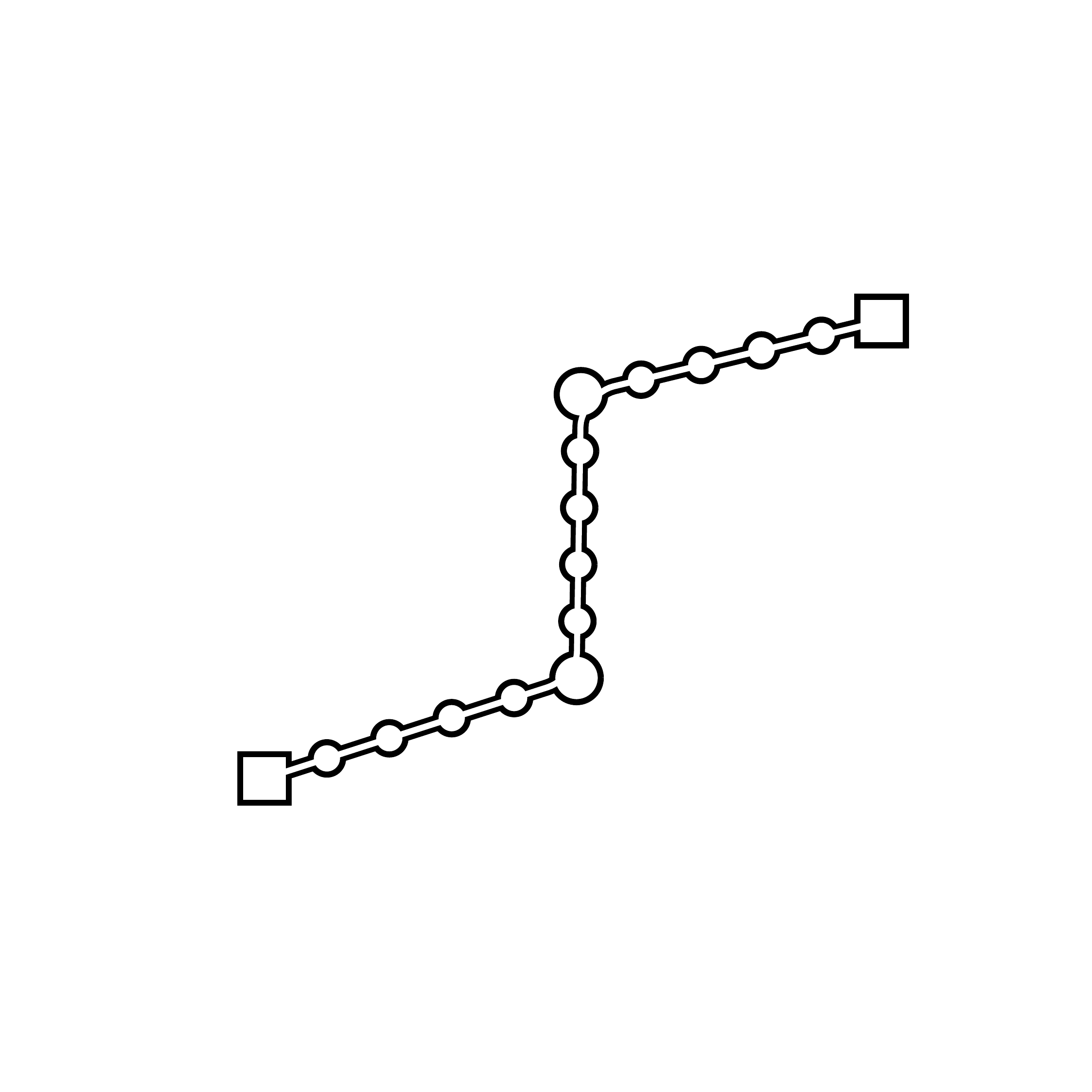}}%
        \subcaption{}
        \label{fig:snake-spine-path}
    \end{subfigure}%
    \hfill%
    \begin{subfigure}{\columnwidth/3 -0.3em}%
        \includegraphics[width=\columnwidth]{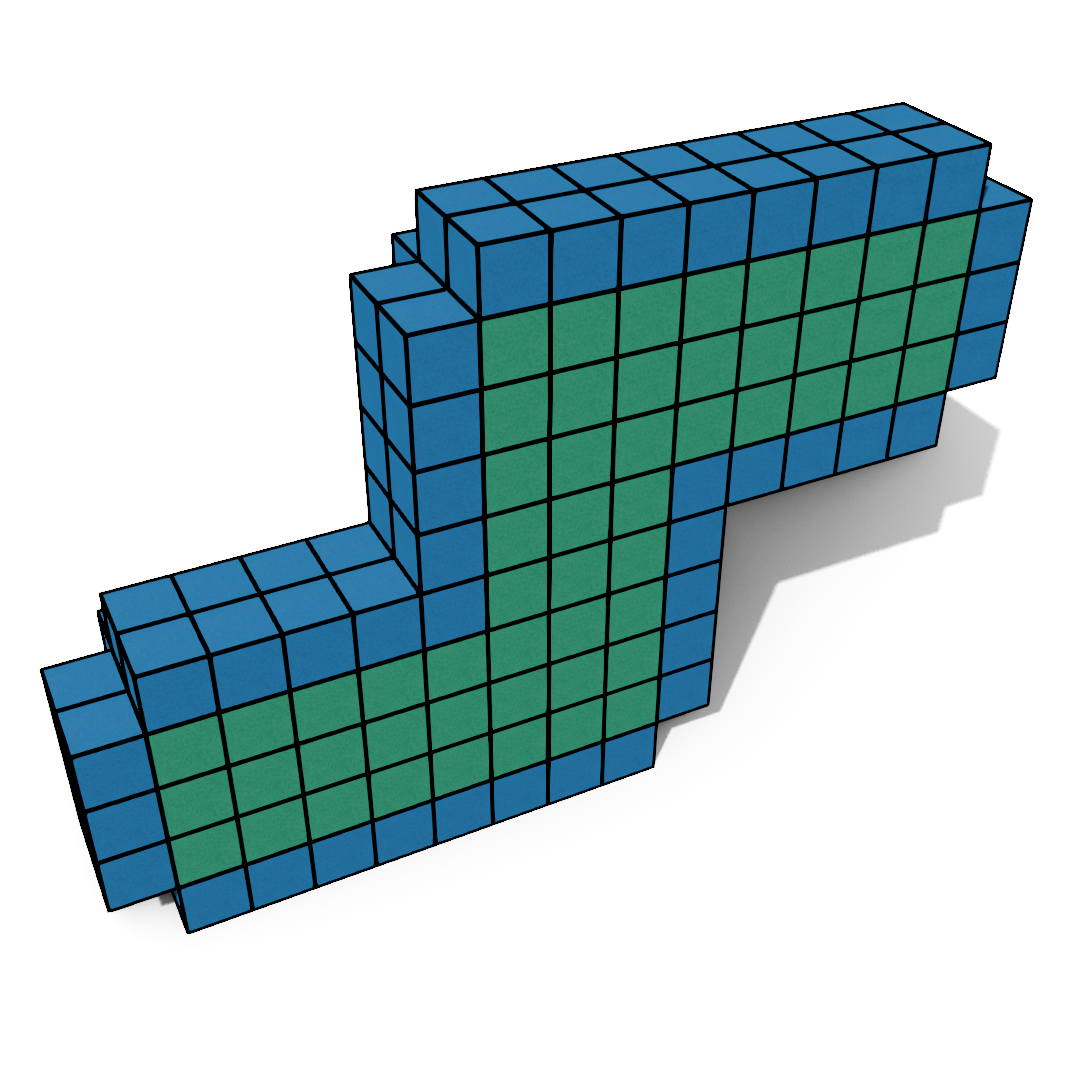}%
        \llap{\includegraphics[width=\columnwidth]{figures/snake/intro/example-snake-spine-path}}%
        \subcaption{}
        \label{fig:snake-skin-interior}
    \end{subfigure}%
    \hfill%
    \begin{subfigure}{\columnwidth/3 -0.3em}%
        \includegraphics[width=\columnwidth]{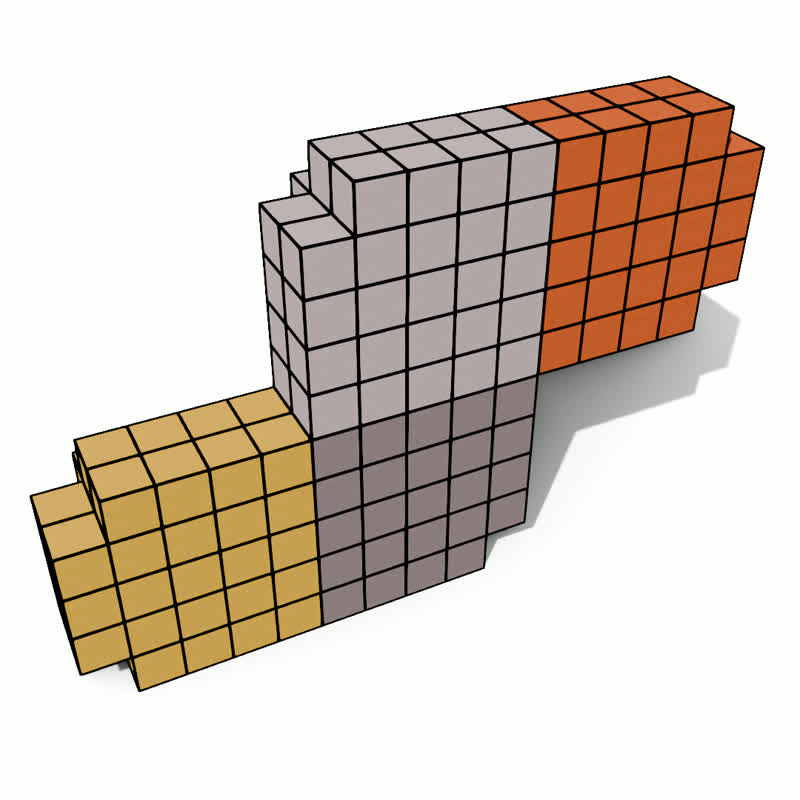}%
        \llap{\includegraphics[width=\columnwidth]{figures/snake/intro/example-snake-spine-path}}%
        \subcaption{}%
        \label{fig:snake-sections}%
    \end{subfigure}%
    \caption{The cell space defined by the spine of length $16$ shown in (a). For clarity, only one half is shown. In (b), skin and interior are colored differently, and (c) illustrates sections.}
    \label{fig:snake-space}
\end{figure}

\subparagraph*{Cell space.}
Let $P$ be a spine path.
The \newterm{interior cells} of the snake with spine~$P$ are then defined exactly as the closed $L_\infty$-neighborhood $N^*[P]$ of $P$, with the \newterm{skin cells} then being the open $L_1$-neighborhood of the interior, $N(N^*[P])$.
Their union forms the snake's \newterm{cell space} $S(P)\subset N^*_4(P)\subset \mathbb{Z}^3$.
We divide the cell space based on major vertices:
The \newterm{section} of a major vertex $v\in P$ that is not an endpoint is then exactly $N^*_2[v]\cap S(P)$, all remaining cells are then assigned to either the head or tail section.
This is illustrated in~\cref{fig:snake-space}.

\subparagraph*{Snake configuration.}
A \newterm{snake (sub)configuration} is an induced (sub)configuration by a spine path's cell space $S(P)\cap C$ in which \emph{all except for} the following cells are occupied:
\begin{enumerate}[\hspace{2em}]
    \item[\textsf{(S.1)}] Major vertices at which the spine bends.
    \item[\textsf{(S.2)}] Any subset of the tail section's interior, except for the interior cells that are adjacent to three skin cells, which must be occupied by \newterm{support modules}.
\end{enumerate}

As the majority of its cell space is occupied, a snake by this definition induces a connected subconfiguration.
In addition, keeping major vertices at bends in the spine unoccupied allows efficient teleportation of modules between the head and tail sections of any given snake.

\begin{figure}[htb]
    \captionsetup[subfigure]{justification=centering}%
    \begin{subfigure}{\columnwidth/3}%
        \includegraphics[width=\columnwidth]{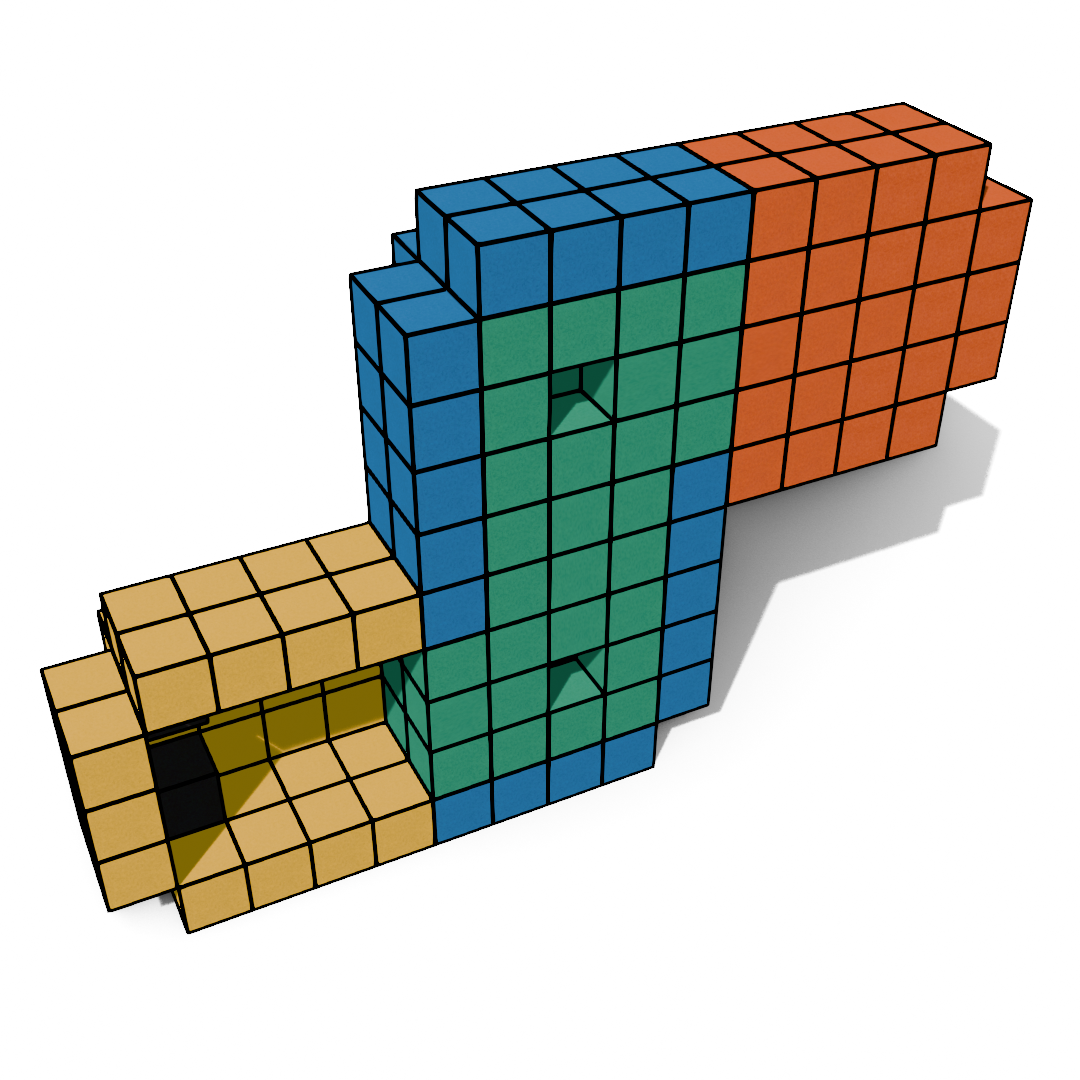}%
        \subcaption{}
        \label{fig:snake-cross-section}
    \end{subfigure}%
    \hfill%
    \begin{subfigure}{\columnwidth*2/3}%
        \centering%
        \includegraphics[page=1]{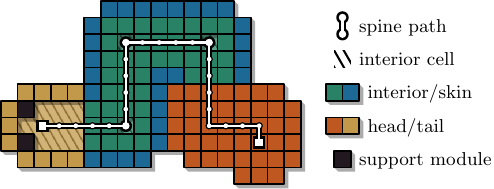}%
        \subcaption{}%
        \label{fig:snake-colors}%
    \end{subfigure}%
    \caption{Structure of a snake configuration. (a) Cross-section of a valid snake configuration. (b) Two-dimensional illustration using color notation for clarity.}
    \label{fig:snake}
\end{figure}

The remainder of this section assumes that we already have a snake configuration.
Once~$C$ contains a large enough $L_1$ ball due to~\cref{lem:grow-ball}, a constant-size snake can always be~found.

\begin{restatable}{lemma}{lemmaSnakeInBall}
    If a configuration contains a fully occupied $L_1$ ball with radius $12$, it contains a free snake configuration with a spine path of length $11$ and $198$ modules.
    \label{lem:snake-in-ball}
\end{restatable}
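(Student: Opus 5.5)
We place an explicit straight snake deep inside the ball and then check the required properties by direct verification. Let $c$ be the center of the fully occupied $L_1$ ball $\mathcal{B}$ of radius $12$, and translate coordinates so that $c=0$. As spine $P$ we take the straight axis-parallel path of $11$ cells along the $x$-axis, symmetric about $c$, with endpoints $(\pm5,0,0)$ as head and tail. The major vertices are then placed at the positions forced by the definition: they are spaced $5$ apart and are separated by four minor vertices, and the head and tail paths have at most $9$ minor vertices each. Since $P$ is straight, the condition ``distance $\ge5$ along the path implies $L_1$-distance $\ge5$'' is trivial. There are no bends, so rule \textsf{(S.1)} leaves no cell empty.

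\emph{Step 1 (containment).} The interior $N^*[P]$ consists of the cells with $|x|\le 6$ and $|y|,|z|\le 1$. The skin $N(N^*[P])$ adds the cells with $|y|=2$ or $|z|=2$ (the other coordinate at most $1$) and the two end caps at $|x|=7$. Every cell of $S(P)$ therefore has $L_1$-norm at most $6+2+1=9\le 12$. Hence $S(P)\subset\mathcal{B}$, and all its cells are occupied.

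\emph{Step 2 (choosing the snake subconfiguration).} Because every cell of $S(P)$ is full, we are free to use rule \textsf{(S.2)} and regard a suitable subset of the tail section's interior as not belonging to the snake. We keep only the support modules, i.e., the interior cells adjacent to three skin cells. We then count cells section by section and pick the set of tail interior cells excluded from the snake so that exactly $198$ modules remain. This count is routine bookkeeping, but it has to be matched carefully to the precise section boundaries $N^*_2[v]\cap S(P)$.

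\emph{Step 3 (connectivity).} The snake subconfiguration is connected because it contains the full skin, which is a connected tube, and every remaining interior cell is face-adjacent to the skin or to another kept cell.

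\emph{Step 4 (freeness), the main obstacle.} We must show that removing an arbitrary subset of the snake's modules leaves $C$ connected. The key fact is that every snake cell has $L_1$-norm at most $9$, so the shell $R=\{u:10\le\norm{u}_1\le12\}$ is disjoint from the snake and fully occupied.

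We first argue that $R$ is connected in the face-adjacency graph. It is a thick $L_1$-sphere: any two of its cells can be joined by monotone staircase moves staying within $L_1$-norm $10$ to $12$, moving around the octahedron through its faces.

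Next, every module of $C$ outside $\mathcal{B}$ lies in the same component as $R$. In the original connected configuration, any path from such a module to the snake must enter $\mathcal{B}$, and it does so only through $R$. Hence the part of $C\setminus\mathcal{B}$ attached to $R$ is unaffected by deleting snake modules.

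Finally, every remaining cell of $\mathcal{B}$ not in the snake (the non-snake cells of $\mathcal{B}$ other than the excluded tail-interior cells) reaches $R$ by increasing $\norm{\cdot}_1$ monotonically along an axis direction pointing away from the spine, and such a path avoids $S(P)$. The excluded tail-interior cells are the delicate case. For them we argue that each is face-adjacent to a kept support or skin module. If deleting snake modules disconnects one of them, we additionally treat these cells as removable, consistent with \textsf{(S.2)}, so they do not break freeness of the remaining structure.

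Combining Steps 1–4 yields a free snake configuration with a spine path of length $11$ and $198$ modules, as claimed.
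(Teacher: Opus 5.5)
Steps~1 and 3 and the shell argument in Step~4 are the paper's proof. The paper uses the vertical spine $P=(c-(0,0,5),\ldots,c+(0,0,5))$, notes $S(P)\subset N^*_9(c)$, and concludes freeness from the three full outer layers of the ball; your argument with $R=\{u:10\le\norm{u}_1\le 12\}$ is a careful version of that one sentence.

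The genuine gap is Step~2. A snake subconfiguration is by definition the \emph{induced} subconfiguration $S(P)\cap C$. Rules \textsf{(S.1)} and \textsf{(S.2)} only say which cells of $S(P)$ are \emph{permitted to be empty}; they do not let you declare occupied cells to lie outside the snake. Since the ball is full, the snake is all of $S(P)$, and you cannot tune its size. Even if dropping cells were allowed, the deferred bookkeeping would not close. We have $|S(P)|=13\cdot 21+2\cdot 9=291$. The spine must contain a major vertex whose five-slice section is not part of the tail section, so at most $8\cdot 9-4=68$ tail-interior cells may be empty, leaving at least $223$ modules. The paper's proof does not derive the figure $198$ at all; it simply takes the fully occupied $S(P)$.

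Step~2 also produces the ``delicate case'' in Step~4, and the patch there is not an argument. Every interior cell of $S(P)$ has all its face-neighbors in $S(P)$. So an occupied tail-interior cell you excluded is cut off from $R$ once the surrounding snake modules are removed, which is exactly a failure of freeness. Calling these cells ``removable'' does not change that.

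The fix is to delete Step~2 and take the snake to be all of $S(P)$; then Step~4 goes through with no exceptions. You should also restate the target of Step~4. As written, ``removing an arbitrary subset of the snake's modules leaves $C$ connected'' is false, since removing the skin isolates $N^*[P]$. What your argument, like the paper's, actually shows is that every module outside the snake stays connected to the connected shell $R$ through non-snake cells. In other words, $C\setminus S(P)$ is connected and the snake holds no cells, which is the sense in which the paper's initial snake owns all its modules.
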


\medskip
As a crucial first step, we define the \emph{push} operation and its inverse, \emph{pull}.
The push operation can be used to modify an existing snake configuration, moving the head vertex and extending the spine path by one unit.
Similarly, the pull operation shortens the spine path by one unit by moving the tail vertex towards the head along the spine path.

\subparagraph*{Push.}
The \newterm{push operation} can be performed on snake configurations with at least $16$~modules in the tail section's interior.
It moves the head in a given direction and extends the spine path by one unit, forming a valid snake configuration in the resulting cell space.
To fill the new skin cells, it teleports up to~$12$ interior tail modules to the head.

\begin{restatable}[$\star$]{lemma}{lemmaPush}
    The push operation can be performed by a schedule of makespan $\mathcal{O}(1)$.
    \label{lem:push}
\end{restatable}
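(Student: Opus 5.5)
The plan is to realize the push as a constant number of applications of \cref{lem:teleport}. Each application acts on a subconfiguration $S$ of bounded size, made of the head section, the tail section and the thin chain of snake cells between them. Since every $S$ we use has $\mathcal{O}(1)$ modules, each teleport costs $2^{\mathcal{O}(1)}=\mathcal{O}(1)$ makespan. The only subtlety is that the middle sections of the snake may be arbitrarily long, so we cannot take $S$ to be the whole snake. Instead we route modules through the snake by exploiting the empty major vertices at bends (S.1) and, more generally, a constant-size sliding window.

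\textbf{Step 1: cells to fill and empty.} Fix the direction $d$ of the push. Let $P'$ be $P$ extended by one new head vertex $h'$ adjacent to the old head $h$ in direction $d$. Then $S(P')\setminus S(P)$ is contained in $N(N^*[h'])$ near $h'$. The snake definition and the rule that vertices at path distance $\ge 5$ must be at $L_1$-distance $\ge 5$ ensure that the new cells are not already occupied by snake modules, and they bound their number by a constant. Some of these cells may already be occupied by modules of $C\setminus S(P)$; those stay. The cells that must be filled are new interior or skin cells that are currently empty, plus possibly the cell $h$ if it becomes a major vertex at a bend, which must then be vacated according to (S.1). A local count in the $3\times 3\times 3$ block around $h'$ shows that at most $12$ new skin cells need modules. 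This matches the requirement of $16$ tail interior modules: it leaves enough support modules to keep (S.2) valid after removal.

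\textbf{Step 2: teleport tail modules to the head.} We do this by induction. Choose a free interior module $\module m$ of the tail section that is not a required support module. Its removal keeps the snake connected, because the skin around it remains. Move it toward the head one meta-step at a time. At each step, let $S$ be the constant-size window of snake cells containing $\module m$ and the next target cell $e$ further along the spine, where $e$ is an empty interior cell, a bend vertex, or finally a new head cell. We check the hypotheses of \cref{lem:teleport}. First, $S$ is connected and $C\setminus S$ is connected, since the skin outside the window connects the rest of the snake and the remainder of $C$. Second, $\module m$ and $e$ lie in the same extended face of $S\setminus\{\module m\}$, because both touch the snake's hollow interior channel. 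Third, the two resulting configurations are connected. With this, $\module m$ advances. Rather than carrying one module the full length, we shift the ``hole'' through the interior: each teleport fills an interior hole ahead and creates one behind. A parallel bucket-brigade of these local teleports along the spine then moves a module from tail to head in $\mathcal{O}(1)$ rounds. Windows at distance $\ge 5$ along the path are disjoint and non-adjacent by the $L_1$ separation property, so the local schedules can run simultaneously without collisions or breaking the backbone. We repeat this for up to $12$ modules, which keeps the total constant.

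\textbf{Main obstacle.} The hard part is Step 2's verification: the hypotheses of \cref{lem:teleport} must hold for every window, in particular that $\module m$ and $e$ share an extended face. This must hold even though external modules of $C$ may touch the snake's skin, and even though skin cells may be missing at bends. I would handle this by noting that the interior of the snake is bounded by its own skin, which shields it from $C$. The extended face containing the interior holes therefore depends only on the snake's local geometry, which comes in a finite number of bend types and can be checked case by case. Finally, the new configuration satisfies (S.1) and (S.2) for $P'$, so it is a valid snake.
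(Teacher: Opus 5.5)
\textbf{Gap in Step 2: the transport is not $\mathcal{O}(1)$.} Step~2 is exactly where the $\mathcal{O}(1)$ bound has to come from, and it does not work. The snake's interior is not a hollow channel. By (S.1) and (S.2), every interior cell outside the tail section is occupied, except the single major vertex at each bend. A long straight snake has no empty interior cell between tail and head at all; examples are the initial snake of \cref{lem:snake-in-ball} and the vertical snakes of the compress phase.

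A window-local application of \cref{lem:teleport} can only do something if that window contains a hole. You also require the windows to be disjoint and non-adjacent, so the modules between them stay put and nothing crosses from one window to the next within a round. The hole created at the head therefore advances at most one window per round, and reaching the tail takes $\Theta(\ell)$ rounds for spine length $\ell$. A parallel bucket brigade would need a hole in every window at once, which you do not have.

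The missing idea is how the paper moves modules: it pushes them from the tail along the spine path. Because the interior is solid and the skin stays in place as a backbone, a whole line of interior modules can slide one unit toward the head in a single transformation. Each module enters the cell its predecessor vacates in the same step, the same kind of move as the paper's parallel column slides. The empty bend vertices of (S.1) give the room to turn corners. Doing this once per missing head module costs $\mathcal{O}(1)$ regardless of the snake's length.

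\textbf{The hypotheses of \cref{lem:teleport} are not verified.} Your checks rest on the same false picture. The shared-extended-face argument via the ``hollow interior channel'' does not apply. Also, $C\setminus S$ need not be connected: parts of $C$ may be attached to the rest of the configuration only through the head's skin, and a window containing a full cross-section of the snake separates the snake itself.

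The paper does not use \cref{lem:teleport} at the front at all. It proceeds directly:
\begin{itemize}
\item It shows that if an old-front module is not free, then the new-front cell above it is already occupied, using that the head is two-vertex-connected.
\item It fills the remaining new-front cells by sliding the (up to four tall) columns beneath them upward in parallel, using a convex transition for an unsupported top module.
\item Once the new front is full, the eight outer old-front modules are free; they move outward into the side cells, and four interior modules close the rest.
\end{itemize}
Note also that a push adds $21$ cells ($9$ front, $12$ side), not at most $12$. Only the resulting interior deficit of at most $12$ modules is resupplied from the tail.
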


Having defined pushing as a forward-oriented operation that expands the snake, we note that the role of head and tail in a snake are actually interchangeable by teleportation.
\begin{restatable}[$\star$]{lemma}{lemmaSnakeIsReversible}
    Given a snake configuration $S(P)$, the roles of head and tail vertices can be swapped in $\mathcal{O}(1)$ transformations.
    \label{lem:snake-is-reversible}
\end{restatable}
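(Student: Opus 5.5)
The aim is to turn the configuration, in $\mathcal{O}(1)$ transformations, into a snake configuration on the reversed spine path $P^{\mathrm{rev}}$. In this new snake the old tail plays the role of the head and the old head plays the role of the tail. The spine path definition is symmetric: it has up to $9$ minor vertices between each endpoint and the nearest major vertex, and $4$ minor vertices between consecutive majors. So $P^{\mathrm{rev}}$ is again a valid spine path with the same major vertices, and its cell space satisfies $S(P^{\mathrm{rev}})=S(P)$. All sections of major vertices also coincide, so only the two end sections need to change.

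\textbf{Step 1 (compare the two configurations).} In the current configuration, the head section is completely occupied, apart from possibly a bend at a major vertex, which lies outside the end sections anyway. The tail section's interior is only partially occupied, but its support modules and its skin are present. In the target, the old head section must hold an arbitrary subset of its interior containing the support modules. The old tail section must have full interior and skin. The occupancy requirements on the middle sections are identical. So the symmetric difference between the current configuration and the target lies entirely inside the two end sections, each of which is contained in $N^*_4$ of at most $10$ spine vertices. Its size is therefore bounded by a constant.

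\textbf{Step 2 (choose a target with the same module count).} The target should be a snake on $P^{\mathrm{rev}}$ with exactly the current number of modules. We fill the old tail's interior completely, drawing the needed modules from the non-support interior of the old head. If the old head section has too few non-support interior modules to supply them, we first apply pushes or pulls (\cref{lem:push} and its inverse), each costing $\mathcal{O}(1)$, to rebalance. Alternatively, one can argue that when there are few modules in the tail interior the deficit is bounded and the old head section's interior suffices; I would verify this with a module count per section.

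\textbf{Step 3 (teleport).} We apply \cref{cor:teleport}, or more precisely iterated \cref{lem:teleport} for the makespan bound. We take $S$ to be the modules of the old head section's interior that are not support cells in the new tail, and $S'$ the empty interior cells of the old tail section. The remainder $C\setminus S$ must stay connected throughout. This holds because the skin and interior of every middle section stay full, and the support modules keep the skin attached at the corners. Each application of \cref{lem:teleport} involves a set of constant size, so it costs $2^{\mathcal{O}(1)}=\mathcal{O}(1)$, and the number of modules moved is constant. This gives $\mathcal{O}(1)$ transformations in total.

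\textbf{Main obstacle.} The hardest part is checking the hypotheses of \cref{lem:teleport} in each round. The moving module $\module m$ and the empty target cell $e$ must lie in the same extended face of the relevant constant-size local subconfiguration. For this I would take the local set to be the whole snake restricted to a constant-length window, and rely on the empty bend vertices and the partially empty interior to connect the head and tail regions through the outer extended face. The paper's remark that empty bends are what allow teleportation between head and tail is exactly this point. I would also have to make sure the windowed subset avoids dependence on the snake's length. If that fails, the fallback is to route the modules along the spine by repeated pull and push operations.
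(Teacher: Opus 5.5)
\textbf{There is a genuine gap in Step 3, and it is the obstacle you flag but do not resolve.}

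Your structural observations are correct. $P^{\mathrm{rev}}$ is a valid spine with $S(P^{\mathrm{rev}})=S(P)$, and the two snake configurations differ only in constantly many cells of the two end sections.

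The problem is distance. The modules must travel from the old head section to the old tail section, and these are separated by the full length of the snake. That length is unbounded; the snake is later grown to $\Theta(A+h)$ modules. In \cref{lem:teleport}, $S$ must be connected, contain $\module m$, and be adjacent to $e$. When head and tail are far apart, $|S|$ therefore grows with the snake's length, and $2^{\mathcal{O}(|S|)}$ is not $\mathcal{O}(1)$. \cref{cor:teleport} carries no makespan bound at all.

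A constant-length window cannot contain both ends, so the hypotheses of \cref{lem:teleport} cannot be met with constant $|S|$. Your fallback does not close the gap either. A push transfers interior modules from the tail to the head, which is the wrong direction. A pull shortens the spine and may leave held modules behind, so it does not reverse the given snake $S(P)$. For the same reason, pushes cannot do the rebalancing you propose in Step 2.

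\textbf{The missing idea is the snake's built-in tunneling along its spine.} This is the same mechanism that refills the head in \cref{lem:push} (see \cref{fig:refilling-head}). Modules are pushed along the spine path $P$, and the unoccupied major vertices at bends from \textsf{(S.1)} give them room to turn. In effect, one interior module leaves one end section and appears in the other within $\mathcal{O}(1)$ transformations, independent of the snake's length.

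The paper's proof repeats this one module at a time, teleporting interior modules from the head to the tail until the tail's interior is full. Since both end sections have constant volume, this takes $\mathcal{O}(1)$ transformations. If the tail section is larger than the head section, the head section is first joined with the section of the adjacent major vertex so that enough interior modules are available; this replaces your push/pull rebalancing in Step 2.
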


Each push operation expands the snake's cell space, potentially including additional modules.
Some of these modules may not originally be free, meaning that we need to account for possible connectivity constraints when moving the snake.
We differentiate between a snake's \newterm{owned} and \newterm{held} cells and modules, which are dynamically tracked and updated.
In particular, we say that a snake configuration \emph{holds} a cell as result of a push operation if that cell contained a \emph{non-free module}, which is then also held.
A push operation enters $21$ additional cells into the cell space, some subset of which must then be held.
On the other hand, if such a cell contains a free module before a push operation, the snake may \emph{take ownership} of that module.
Crucially, the initial snake according to~\cref{lem:snake-in-ball} owns all $198$ contained modules.
A snake will hold all cells it moves into unless otherwise stated, and will leave a module when moving away due to the following operation.

\subparagraph*{Pull.}
The \newterm{pull operation} moves the tail by one unit in a given direction, shortening the spine path.
If this results in a held cell being removed from the cell space, the snake leaves a module in this cell; the number of held modules and held cells is therefore always equal.
If the snake leaves a cell that is not held, it does not leave a module behind, so the number of owned modules never decreases as a result of pulling.
This is strictly the reverse of~\cref{lem:push}:

\begin{corollary}
    Given a snake subconfiguration with a tail that has a sufficient number of empty interior cells, the pull operation can be performed by a schedule of makespan $\mathcal{O}(1)$.
    \label{cor:pull}
\end{corollary}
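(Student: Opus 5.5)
The plan is to obtain the pull operation by running the push schedule of \cref{lem:push} backwards in time. The push schedule is a sequence of $\mathcal{O}(1)$ transformations. Each transformation is a set of slides and convex transitions executed in unit time. Both move types are geometrically reversible: a slide along two adjacent modules, or a convex transition around a single module, traced in reverse is again a legal move of the same type, supported by the same modules. The reversed schedule therefore consists of legal moves, and no collisions arise, since the swept volumes are identical and merely traversed in the opposite order.

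The key step is to check that the backbone constraint survives reversal. In a transformation taking $C$ to $C'$, let $M$ be the set of moving modules. The non-moving modules $C\setminus M = C'\setminus M'$ (with $M'$ the positions after the move) must form a connected backbone to which the moving modules are attached. Concretely, $M$ must be free in $C$ with respect to the stationary part. I would argue that freeness in the paper's sense only requires removing subsets of moving modules to leave a connected configuration. Every intermediate configuration of the push schedule is connected, and the stationary set is the same in both directions. So I would verify that $M'$ is free in $C'$ whenever $M$ is free in $C$, which follows if each module of $M'$ attaches to the stationary backbone in $C'$.

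This step is the main obstacle. Freeness of the arriving set is not automatic in general. I expect to rely on the fact that the push construction (via \cref{lem:teleport}) keeps every intermediate configuration valid with the snake's owned and held cells intact, and that each arriving module is adjacent to the backbone. If that is not evident, I would instead reverse at the level of teleports. \Cref{lem:teleport} is symmetric in its hypotheses: the reverse teleport of $e$ back to the original cell of $\module m$ satisfies the same connectivity and extended-face conditions. The push is a sequence of at most $12$ teleports plus local moves, so reversing each gives a schedule of makespan $\mathcal{O}(1)$.

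Finally, I would match the bookkeeping. The reversed push maps a snake with spine of length $\ell+1$ to one of length $\ell$, moving the tail endpoint one unit along the spine. The push consumed up to $12$ tail-interior modules, so the reverse requires that many empty tail-interior cells; this is exactly the hypothesis ``sufficient number of empty interior cells.'' For a held cell leaving the cell space, the module it contained was non-free when entered and stays in place, so it is left behind. Unheld cells give up only owned modules, which are teleported into the interior, so the number of owned modules does not decrease. Roles of head and tail can be swapped by \cref{lem:snake-is-reversible} if the push was defined at the head, adding only $\mathcal{O}(1)$ transformations.
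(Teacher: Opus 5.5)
Your proposal is correct and follows the paper's route: the paper justifies the corollary only by remarking that pull is strictly the reverse of the push schedule from \cref{lem:push}. Your additional checks go beyond what the paper spells out: that reversed moves stay collision-free, that the arriving set must again be free (with the symmetric hypotheses of \cref{lem:teleport} as a fallback for the teleport steps), and the use of \cref{lem:snake-is-reversible}, which is genuinely needed since a literally reversed push retracts the head rather than the tail.
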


We define two additional high-level operations, forking and joining.
Both rely strictly on pushing, pulling, and the teleportation of modules within the spine of a snake.

\begin{figure}[htb]
    \captionsetup[subfigure]{justification=centering}%
    \begin{subfigure}[t]{\columnwidth/3 -0.5em}%
        \centering%
        \includegraphics[width=\columnwidth,page=1]{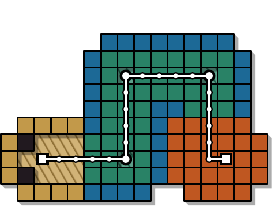}%
        \subcaption{}%
    \end{subfigure}%
    \hfill%
    \begin{subfigure}[t]{\columnwidth/3 -0.5em}%
        \centering%
        \includegraphics[width=\columnwidth,page=3]{snake/intro/snake-fork}%
        \subcaption{}%
    \end{subfigure}%
    \hfill%
    \begin{subfigure}[t]{\columnwidth/3 -0.5em}%
        \centering%
        \includegraphics[page=6,width=\columnwidth]{snake/intro/snake-fork}%
        \subcaption{}%
    \end{subfigure}%
    \caption{An illustration of forking and joining.
    A major vertex becomes the head vertex of a new snake in (b).
    If the union of their spine paths is itself a valid spine as in (c), two snakes can join.}
    \label{fig:fork-and-join}%
\end{figure}

\subparagraph*{Fork.}
The fork operation selects a prefix or suffix of the spine path in arbitrary orientation and teleports a constant number of interior modules along the spine to create a valid snake configuration along this spine with a chosen head vertex, as shown in~\cref{fig:fork-and-join}.
This takes~$\mathcal{O}(1)$ transformations due to~\cref{lem:snake-is-reversible}.
The new snake can then move independently.

\subparagraph*{Join.}
If the union of their spines constitutes a valid spine, two snakes can be joined at a common endpoint to form a larger snake.
This is the reverse of forking, see~\cref{fig:fork-and-join}.

This gives us a universal means of reconfiguration for the snake, as long as it forms a connected configuration with the remaining modules, as well as the fundamental tools to grow the initial snake into a larger structure by taking ownership of additional modules.

\subsection{Growing the snake}
\label{subsec:growing-the-snake}

Recall our goal is to grow the newly determined snake to own $\Theta(A+h)$ many modules, i.e., all modules in the tree $T_\module{s}\subseteq \mathcal{T}$ rooted at a module $\module{s}\in C$.
In this section, we will outline a depth-first-search approach to traversing the tree and taking ownership of all modules.

The snake configuration $S(P)$ determined due to~\cref{lem:snake-in-ball} is fully contained within an~$L_1$ ball of radius $12$ that is adjacent to modules of the subtree $T_\module{s}$.
We start by (i) moving the snake to align with the meta-cells of the $5\times 5\times 5$ grid on which the tree is based and then again (ii) to have its head located within the same such meta-cell as $\module s$.
Both of these steps can be realized using only the push and pull operations; by following any path from its initial position in $T_\module{s}$ to the root, this takes $\mathcal{O}(\abs{T_\module{s}})$ transformations.

Before showing that the snake can be moved along the tree, we establish that it can take ownership of free modules close to its major vertices; an illustration is given in~\cref{fig:grow-consume}.

\begin{restatable}[$\star$]{lemma}{lemmaGrowConsume}
    Let $\mathcal{S}$ be a snake configuration with spine path $P$ that has an empty cell in the tail section, and let $v\in P$ be a major vertex of the spine.
    A free module $\module m \in N^*_2[v]$ can be teleported into its interior by $\mathcal{O}(1)$ transformations, giving the snake ownership.
    \label{lem:grow-consume}
\end{restatable}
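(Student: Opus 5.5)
The plan is to reduce the statement to a single application of \cref{lem:teleport}, with a constant-size set $S$ built around the major vertex~$v$. Let $\module m\in N^*_2[v]$ be free. Since $N^*_2[v]$ is the $5\times5\times5$ block centred at $v$, and the section of $v$ is $N^*_2[v]\cap S(P)$, the module $\module m$ is either inside the snake's cell space or directly adjacent to its skin. If $\module m$ already lies in the cell space, the snake holds it or owns it; the interesting case is $\module m\notin S(P)$.

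\emph{Step 1: free an interior target cell near $v$.} The tail section has an empty cell. Teleporting modules along the spine, as in the fork operation and \cref{lem:snake-is-reversible}, moves this vacancy through the interior to an interior cell $e\in N^*[v]$ adjacent to the skin on the side facing $\module m$. The emptied major vertices at bends (S.1) act as channels, so this costs $\mathcal{O}(1)$ transformations per section. To keep the whole step at $\mathcal{O}(1)$ rather than proportional to the spine length, I would not shift the vacancy section by section. Instead I would teleport a single interior module from the section of $v$ to the tail, using \cref{cor:teleport} on the two spine neighbourhoods, which is the same mechanism push and pull use to move modules between head and tail.

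\emph{Step 2: set up \cref{lem:teleport}.} Take $S=(N^*_3[v]\cap(C\cap S(P)))\cup\{\module m\}$, the snake modules of the section of $v$ and its immediate neighbours, together with $\module m$. This set has constant size and is connected, because $\module m$ touches the skin and the skin and interior are connected. The conditions to check are:
\begin{itemize}
\item $C\setminus S$ is connected. This holds because $\module m$ is free and removing a section of the snake leaves the rest of the snake attached to the remainder of $C$.
\item $(S\setminus\{\module m\})\cup\{e\}$ and $(C\setminus\{\module m\})\cup\{e\}$ are connected, since $e$ is interior and adjacent to occupied skin.
\item $e$ and $\module m$ lie in the same extended face of $S\setminus\{\module m\}$.
\end{itemize}
\cref{lem:teleport} then gives makespan $2^{\mathcal{O}(|S|)}=\mathcal{O}(1)$. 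The snake's structure is restored because only the cell $e$ changed inside $S(P)$, so the snake now owns one more module.

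\emph{Main obstacle.} I expect the hardest part to be the extended-face condition together with connectivity of $C\setminus S$. The interior cell $e$ is enclosed by skin, so it is not on the outer face of $S\setminus\{\module m\}$.

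My fix is to open a window in the skin first: temporarily vacate one skin cell between $\module m$ and $e$ by teleporting its module inward into $e$. The vacated skin cell $e'$ is then on the outer extended face, adjacent to both $\module m$'s face and the snake. I would then teleport $\module m$ into $e'$.

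Choosing $e'$ requires case analysis on where $\module m$ sits relative to the section (face, edge, or corner of the $5\times5\times5$ block), and it must also avoid cutting off held modules of $C$ that attach to the snake only through $e'$. Since $\module m$ is free and at most one skin cell is vacated, I expect an appropriate $e'$ adjacent to $\module m$ always exists, possibly after routing through a second skin cell. That case analysis is where the real work lies.
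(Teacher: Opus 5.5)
\textbf{There is a genuine gap.} The case analysis you defer as ``where the real work lies'' is essentially the whole lemma, and the setup leading up to it does not hold.

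\emph{Step 2 violates the hypotheses of \cref{lem:teleport}.} The extended-face condition is not the only one that fails.
\begin{enumerate}
\item Removing the slab $N^*_3[v]\cap C\cap S(P)$ separates the head-side and tail-side parts of the snake, and strands any held modules attached only to that slab. So $C\setminus S$ is in general disconnected, and freeness of $\module m$ says nothing about this.
\item $\module m$ need not touch the skin. If the spine enters $v$ along the $z$-axis from below and leaves in the $+x$-direction, the cell $v+(-2,2,2)\in N^*_2[v]$ is at $L_1$-distance $3$ from the interior cells around $v$. In general none of its neighbours is then a skin cell, so $S$ need not be connected and no window $e'$ adjacent to $\module m$ exists.
\item In Step~1, \cref{cor:teleport} carries no makespan bound, so it cannot give $\mathcal{O}(1)$. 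What you need there is the parallel shifting of interior modules along the spine that push and reversal already use.
\end{enumerate}

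\emph{The window step fails because freeness of $\module m$ does not survive opening the window.} Suppose a module $\module q$ of $C$ lies just outside the skin cell $e'$, and a module $\module r$ is adjacent to both $\module q$ and $\module m$. Suppose further that $\{\module q,\module r\}$ reaches the rest of $C$ only through $e'$ or through $\module m$.
\begin{enumerate}
\item Initially the skin module at $e'$ and $\module m$ are each free.
\item Once $e'$ is vacated, $\module m$ becomes a cut module and cannot slide into $e'$.
\item Moving both in one transformation is a collision, and in any case leaves $\{\module q,\module r\}$ without a backbone.
\item The same can happen at the other skin cell adjacent to $\module m$. Also, $\module m$ may have all six neighbours occupied, so it cannot be rerouted to a different window.
\end{enumerate}

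\emph{The missing idea} is that this local analysis can be avoided entirely by reusing the snake operations, which is what the paper does.
\begin{enumerate}
\item Fork at $v$, so that the suffix $v,\ldots,v_t$ becomes a snake with head $v$.
\item Since $\module m\in N^*_2[v]$, two push operations (\cref{lem:push}) bring $\module m$'s cell into the cell space. Because $\module m$ is free, the snake takes ownership of it; non-free modules it enters are merely held.
\item Reverse (\cref{lem:snake-is-reversible}), pull back to realign with the original spine (\cref{cor:pull}), and join.
\end{enumerate}
Pulling out of an owned cell leaves no module behind, so $\module m$'s cell ends up empty and the extra module is stored in the snake's interior. The empty tail cell provides the room for it. All interaction with the surrounding configuration is thereby inherited from push, pull, fork, and join, each of constant makespan.
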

\begin{figure}[htb]
    \centering%
    \includegraphics{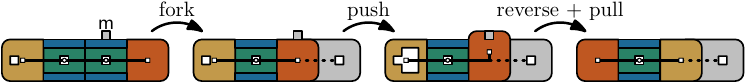}%
    \caption{By forking temporarily, snakes can take ownership of free modules in close proximity.}
    \label{fig:grow-consume}
\end{figure}

Due to~\cref{lem:grow-consume}, if we perform push and pull operations in multiples of five to keep the snake's major vertices aligned with the meta-cells of the grid, it suffices for any major vertex of the snake to be in the same meta-cell as a module it needs to take ownership of.
We show that using join and fork operations in addition to push and pull, the snake can take ownership of all modules within $T_\module{s}$ in $\mathcal{O}(\abs{T_\module{s}})$ transformations by traversing the tree depth-first and taking ownership of leaves.

\begin{restatable}[$\star$]{lemma}{lemmaDFSTraversal}
    \label{lem:ownership}
    Let $\module m_1, \ldots, \module m_\ell$ be the modules of $T_\module{s}$ in DFS order.
    A snake configuration~$\mathcal{S}$ that contains $\module m_1$ can take ownership of the entire sequence within $\mathcal{O}(\ell)$ transformations.
\end{restatable}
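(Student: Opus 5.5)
We argue by induction along the DFS order, keeping the snake's head (or some major vertex) inside the meta-cell of the module currently being processed. A DFS walk of $T_\module{s}$ crosses each tree edge at most twice. Because $T$ is built from meta-cell components joined along the spanning tree $\mathcal{T}$, consecutive modules in DFS order lie either in the same meta-cell or in face-adjacent meta-cells. The exception is a backtrack, and there the DFS returns through meta-cells already visited, so the total length of the meta-cell walk is $\mathcal{O}(\ell)$.

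The invariant we maintain after processing $\module m_1,\ldots,\module m_i$ has three parts. First, the snake owns all of them. Second, its spine is aligned with the $5\times5\times5$ meta-cell grid, so every major vertex is the centre of a meta-cell. Third, the spine path, read from tail to head, follows the meta-cell path of the DFS stack from $\module s$ to the meta-cell of $\module m_i$, with each major vertex lying in a meta-cell of that stack. Advancing to a child meta-cell costs five push operations in the corresponding direction, which is $\mathcal{O}(1)$ by \cref{lem:push}. Backtracking costs five pull operations at the head, obtained by swapping head and tail with \cref{lem:snake-is-reversible} and applying \cref{cor:pull}. Alternatively, when the current subtree is exhausted but the snake should stay in place, we fork off the explored branch and rejoin it.

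To take ownership of a module, we wait until the module is free, which we ensure by processing leaves first. In the DFS order (taken as post-order within each subtree), the remaining modules of $T_\module{s}$ together with $C\setminus T_\module{s}$ stay connected, because removing a leaf of a spanning tree preserves connectivity. The snake's held modules also preserve the connectivity of everything outside it. Since the current module lies in the meta-cell of some major vertex $v$, we have $\module m_i\in N^*_2[v]$, and \cref{lem:grow-consume} teleports it into the snake in $\mathcal{O}(1)$ transformations. That lemma needs an empty tail-interior cell. Whenever the tail interior is full, we first push the tail once in a backward extension (or push and pull) to create space. This keeps each step's cost at $\mathcal{O}(1)$ and also supplies the at least $16$ tail-interior modules that pushing requires. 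Summing $\mathcal{O}(1)$ per module and per meta-cell crossing gives $\mathcal{O}(\ell)$.

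The main obstacle I expect is keeping the snake valid while it moves: its spine must remain a simple spine path, obeying the rule that vertices at distance $\ge5$ along the path are at $L_1$-distance $\ge5$. It must also handle held cells correctly, so that modules of $T_\module{s}$ or of $C\setminus T_\module{s}$ lying in its path are held and later restored by pulls. The DFS stack path in meta-cell space is simple, so a spine tracing it is self-avoiding. I would first try to show this by arguing that spine segments between major vertices stay inside the $5$-wide slabs joining adjacent meta-cell centres. Non-free modules the snake enters are held and left behind on pull, so connectivity of the rest is preserved throughout. Free modules the snake enters may simply be owned early, which only helps.
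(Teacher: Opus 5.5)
\textbf{The single-path invariant fails on storage.} Your invariant says the spine is one path tracing the DFS stack from $s$ to the meta-cell of $m_i$. It cannot be maintained, and this is exactly the difficulty the lemma has to overcome. By (S.1)--(S.2), a snake is fully occupied except at bend vertices and in its constant-size tail section. Moreover, $S(P)\subset N^*_4(P)$ has only $\mathcal{O}(|P|)$ cells, so a snake owning $k$ modules needs a spine of length $\Omega(k)$. Ownership never decreases and reaches $\ell$. The stack depth, however, can be $\mathcal{O}(1)$: for a bushy $T_s$, and always at the end of your post-order, whose last module is $s$.

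Concretely, your backtracking step fails. By \cref{cor:pull}, a pull needs empty interior cells in the tail section to absorb the removed cross-section. Once the snake has consumed more than a constant number of modules, that slack is gone, so it cannot retract its head by five units at every backtrack. Pushing the tail ``in a backward extension'' abandons the invariant and amounts to sliding the whole snake backwards. Over the traversal the tail would grow by $\Theta(\ell)$ away from $s$'s meta-cell in an unspecified direction. Nothing guarantees it stays a valid spine or avoids meta-cells of $T_s$ the head must still enter. Your remark about forking off an explored branch and rejoining it does not help either. A join needs the union of the two spines to be a valid spine path, which fails once the head has left the fork vertex in another direction. So the branch must persist, and the spine is no longer a path.

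\textbf{The missing idea is the paper's \emph{tree} of spines.} Instead of retracting, the paper keeps the union $\mathcal{P}$ of all spine paths, a tree lying in already explored meta-cells. When the next meta-cell is adjacent to a major vertex $v_a$ other than the head, it forks the suffix at $v_a$ (line 6 of \cref{alg:snake-dfs}). The explored branches stay in place as ``cast off'' snake material that stores the owned modules. Before each push, the active snake is a maximal path of $\mathcal{P}$ from the chosen head to a leaf $v_t$. Only a depleted tail is pulled, deleting $v_t$, which cannot disconnect $\mathcal{P}$ because the path is maximal. Joins are performed whenever two spines form a valid spine. Storage thus scales with the explored region rather than the stack depth, and each step still costs $\mathcal{O}(1)$ push, pull, fork and join operations.

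\textbf{A smaller error.} The stack path in meta-cell space need not be simple. $\mathcal{C}$ may contain several components of one meta-cell, so a root-to-node path in $\mathcal{T}$ can re-enter a meta-cell already on the spine, and a spine tracing it would self-intersect. The paper sidesteps this with the test in line 1 of \cref{alg:snake-dfs}, moving only when $m_i\notin N^*_2[v_h,\ldots,v_t]$.
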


The high-level idea is to have the snake move to a leaf of the tree, which it takes ownership of before turning back toward the root $\module s$.
\begin{figure}[htb]
    \captionsetup[subfigure]{justification=centering}%
    \begin{subfigure}[t]{0.25\columnwidth}%
        \includegraphics[page=1]{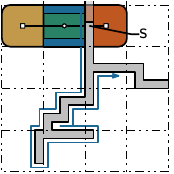}%
        \subcaption{}%
    \end{subfigure}%
    \begin{subfigure}[t]{0.25\columnwidth}%
        \includegraphics[page=2]{snake/grow/dfs}%
        \subcaption{}%
    \end{subfigure}%
    \begin{subfigure}[t]{0.25\columnwidth}%
        \includegraphics[page=3]{snake/grow/dfs}%
        \subcaption{}%
    \end{subfigure}%
    \begin{subfigure}[t]{0.25\columnwidth}%
        \includegraphics[page=4]{snake/grow/dfs}%
        \subcaption{}%
    \end{subfigure}
    \par\bigskip\hfil%
    \begin{subfigure}[t]{0.25\columnwidth}%
        \includegraphics[page=5]{snake/grow/dfs}%
        \subcaption{}%
    \end{subfigure}
    \begin{subfigure}[t]{0.25\columnwidth}%
        \includegraphics[page=6]{snake/grow/dfs}%
        \subcaption{}%
    \end{subfigure}
    \begin{subfigure}[t]{0.25\columnwidth}%
        \includegraphics[page=7]{snake/grow/dfs}%
        \subcaption{}%
    \end{subfigure}
    \caption{Example of our DFS procedure. In (a)--(b), the snake moves to a leaf of $T_\module{s}$, which it consumes before forking and pushing to the next meta-cell on its path in (c)--(d). This is repeated in (e)--(f), after which the forked spine is joined with the previous component in (g).}
    \label{fig:dfs-visualization}
\end{figure}
The sequence $\module m_1, \ldots, \module m_\ell$ could then be viewed as an Euler circuit around the tree $T_\module{s}$ with doubled edges.
Whenever the current snake cannot simply make a turn by reversing and moving, i.e., if its tail extends further up the tree than its head would need to be, it forks temporarily.
As a result, we need to track a tree of ``cast off'' spine paths that connect many tails with a single head; only one path in this tree is an active snake at all times.
A pseudocode description of how this operates is given in~\cref{alg:snake-dfs}, as well as an example covering several iterations in~\cref{fig:dfs-visualization}.

\begin{restatable}[htb]{algorithm}{algSnakeDFS}
    \caption{Procedure for the DFS-traversal of $T_{\module s}$.}
    \label{alg:snake-dfs}
    \textbf{Input:} A snake configuration with spine $v_h,\ldots, v_t$ and a sequence $\module m_1,\ldots, \module m_\ell$.\\
    \textbf{Output:} A schedule of makespan $\mathcal{O}(\ell)$ that gives the snake ownership of the sequence.
    \begin{algorithmic}[1]
        \If{$\module m_1\notin N^*_2[v_h,\ldots, v_t]$}
            \State $v_a\gets $ the first major vertex in an adjacent meta-cell to $\module m_1$.
            \If{$v_a$ is the head vertex $v_h\in P$}
                \State \textbf{push/pull} until head vertex is at the center of the meta-cell containing $\module m_1$.
            \Else
                \State \textbf{fork} the suffix $v_a,\ldots, v_t$ with $v_a$ as head vertex.
                \State \textbf{push/pull} until forked head is at the center of the meta-cell containing $\module m_1$.
            \EndIf
        \EndIf
        \If{$m_1$ is a leaf of $T_\module{s}$}
            \State \textbf{consume} $\module m_1$ due to~\cref{lem:grow-consume}.
        \EndIf
        \State \textbf{join} to a snake of greater size if possible.
        \State \textbf{recurse} with resulting snake and $\module m_2,\ldots, \module m_\ell$.
    \end{algorithmic}
\end{restatable}

\subsection{Scaffold and compress}
\label{subsec:scaffold-and-compress}

Due to space considerations, a detailed description of this phase is left for the appendix.
However, \cref{fig:scaffold-overview} provides intuition.
After growing the snake to an appropriate size, we can move to the top of $C$, and then build the projection of $C$ on this axis $z[C]$, this is our ``scaffolding''.
We turn this projection into a large set of snakes, all of which move towards the bottom of $C$, \newterm{consuming} modules of $C$. 
This turns $C$ into a series of vertical snakes, which we then turn into a compact configuration.

\begin{figure}[h!tb]
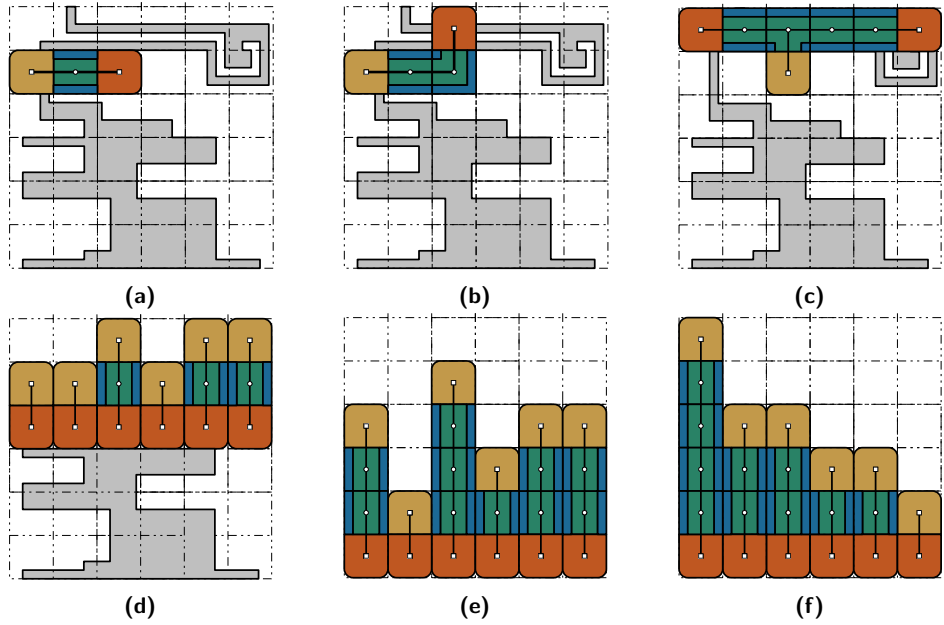

	\captionsetup[subfigure]{justification=centering}%
		\hfil
	\begin{subfigure}[t]{0.25\columnwidth}%
		\includegraphics[page=2, width=\textwidth]{compress/overview}%
		\subcaption{}%
	\end{subfigure}	\hfil%
	\begin{subfigure}[t]{0.25\columnwidth}%
		\includegraphics[page=3, width=\textwidth]{compress/overview}%
		\subcaption{}%
	\end{subfigure}	\hfil%
	\begin{subfigure}[t]{0.25\columnwidth}%
		\includegraphics[page=4, width=\textwidth]{compress/overview}%
		\subcaption{}%
	\end{subfigure}	\hfil%
	\par\hfil%
	\begin{subfigure}[t]{0.25\columnwidth}%
		\includegraphics[page=5, width=\textwidth]{compress/overview}%
		\subcaption{}%
	\end{subfigure}	\hfil
	\begin{subfigure}[t]{0.25\columnwidth}%
		\includegraphics[page=6, width=\textwidth]{compress/overview}%
		\subcaption{}%
	\end{subfigure}	\hfil
	\begin{subfigure}[t]{0.25\columnwidth}%
		\includegraphics[page=7, width=\textwidth]{compress/overview}%
		\subcaption{}%
	\end{subfigure}	\hfil
	\caption{An overview of \textsc{Scaffold and Compress}. (a) We have a snake of size $\mathcal{O}(A + h)$. (b) Move the snake to the top of $C$. (c) Use \cref{alg:snake-dfs} to fill $z[C]$. (d-e) Compress the configuration with snakes. (f) Move the resulting towers towards the origin, compacting $C$.}
	\label{fig:scaffold-overview}
\end{figure}

The efficient reconfiguration between arbitrary compact configuration is well-studied in both sequential models~\cite{akitaya.demaine.korman.ea2022compacting-squares,kostitsyna.ophelders.parada.ea2024optimal-in-place} and parallel, two-dimensional models~\cite{a.akitaya_et_al:LIPIcs.ESA.2025.28,fekete.keldenich.kosfeld.ea2023connected-coordinated}.
The techniques from~\cite{a.akitaya_et_al:LIPIcs.ESA.2025.28} in particular have (not necessarily straight-forward) extensions to our three-dimensional setting.
For completeness, we show that compact configurations of meta-modules can be efficiently transformed into one another in the appendix.

\vspace*{-0.5ex}%
\begin{restatable}[$\star$]{lemma}{lemmaCompactReconfiguration}
    Any two compact configurations of meta-modules can be transformed into one another by an in-place schedule of makespan $\mathcal{O}(A+h)$, where $A$ and $h$ are the maximal projection area and maximal extent of the two along some axis $\lambda\in\{x,y,z\}$.
\end{restatable}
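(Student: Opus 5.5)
We reconfigure both compact configurations into a common canonical compact configuration $C^\star$ and concatenate the schedule for the first with the reverse of the schedule for the second. Moves are reversible and the backbone condition is symmetric in time, so a reversed schedule is again valid. Because both inputs have the same number of modules and hence the same number of meta-modules $N$, we fix $C^\star$ to be the compact meta-module configuration that fills, row by row and then layer by layer, the smallest footprint $\lceil\sqrt{N}\rceil\times\lceil\sqrt{N}\rceil$ in the base plane. It therefore suffices to show that any compact meta-module configuration $K$ with projection area $A$ and height $h$ along the chosen axis $\lambda$ (say $\lambda=z$) can be turned into $C^\star$ in place in $\mathcal{O}(A+h)$ transformations.

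The plan works at the level of meta-modules. A compact $K$ is a staircase: its meta-cells form a union of columns over a two-dimensional compact (Young-diagram) base, and column heights are non-increasing in $x$ and $y$. We first establish a small library of $\mathcal{O}(1)$-makespan meta-module primitives: sliding a meta-module along a flat surface of meta-modules, and climbing over or descending around a step of the staircase. Each primitive is a constant-size local reconfiguration, so we obtain it by applying \cref{cor:teleport} to constant-size subconfigurations; the teleport lemma guarantees that every such local schedule exists and has constant length. Compactness of $K$ keeps every meta-module adjacent to the rest along full faces, so the remainder stays connected while these primitives run.

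The reconfiguration proceeds in two sweeps, in the spirit of the two-dimensional algorithm of~\cite{a.akitaya_et_al:LIPIcs.ESA.2025.28}. In the first sweep we flatten the tall columns. Using the $xz$-layers, which are themselves two-dimensional compact configurations, we move meta-modules from columns taller than the target height down the staircase toward the boundary of the base. We pipeline these motions so that consecutive meta-modules follow each other at constant spacing along monotone paths on the surface of the staircase. Every path has length $\mathcal{O}(A+h)$, and the pipelining puts only $\mathcal{O}(1)$ delay per meta-module onto any single conveyor path. The second sweep is symmetric and fills the footprint of $C^\star$ row by row.

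To bound the makespan we run the conveyors of all layers in parallel. We then need only that each conveyor carries $\mathcal{O}(A+h)$ meta-modules, and that every path has length $\mathcal{O}(A+h)$. The second condition holds because the staircase surface has diameter $\mathcal{O}(w+d+h)\le\mathcal{O}(A+h)$.

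The main obstacle is this throughput argument: arranging enough disjoint parallel conveyor paths so that no path carries more than $\mathcal{O}(A+h)$ meta-modules, while keeping every intermediate configuration compact or nearly compact so that the backbone is preserved. I would first attempt it by processing the columns in decreasing order of $x+y$, and charging each meta-module's waiting time to the column or row it passes through.
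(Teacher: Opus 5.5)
\paragraph{The fixed canonical target is the flaw in your reduction.}
The slab $C^\star$ of side $\lceil\sqrt N\rceil$ is in general not inside the union of the two inputs' bounding boxes. Take $C_1$ a row of $N$ meta-modules along the $x$-axis and $C_2$ a tower of $N$ meta-modules along the $z$-axis: the concatenated schedule is then not in place.

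This cannot be repaired while keeping per-input bounds. Suppose you only require $K\to C^\star$ to stay inside the union of the boxes of $K$ and $C^\star$. Your intermediate claim is then false. Take $\lambda=z$ and let $K$ be the block of $\sqrt h\times\sqrt h\times h$ meta-modules, so $A+h=2h$. Here $C^\star$ is a single layer of $h\times h$ meta-cells, so outside the footprint of $K$ the union contains only that bottom layer. The $\Theta(h^2)$ meta-modules that must leave the footprint therefore all pass through a gate of $\mathcal{O}(\sqrt h)$ meta-cells, which forces $\Omega(h^{3/2})$ transformations.

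The bound in the statement is a bound for the pair: maxima over both inputs, along a well-chosen axis. The intermediate therefore has to depend on the inputs. The paper first turns each input into a cuboid whose footprint is a rectangle $R$ of area roughly $A$ inside the bounding box of that input's own projection, so this phase stays in place. It then connects the two cuboids by a direct transfer along an axis $\lambda$ in which one cuboid is shorter but at least as wide in both other directions. Such an axis exists by the observation on origin-anchored boxes, unless one box contains the other.

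\paragraph{The throughput argument, the core of the proof, is missing, as you acknowledge.}
$N$ can be as large as $A\cdot h$, and a constant fraction of the meta-modules may have to move within $\mathcal{O}(A+h)$ transformations. You therefore need to deliver $\Omega(\min(A,h))$ meta-modules per transformation, whereas a pipelined surface conveyor delivers $\mathcal{O}(1)$. With one conveyor per $xz$-layer, your condition that each conveyor carries $\mathcal{O}(A+h)$ meta-modules already fails for the full box of $m\times 1\times m$ meta-modules. That box has a single $xz$-layer and $A+h=2m$ along $z$, yet $\Theta(m^2)$ meta-modules must move to form $C^\star$.

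The idea you are missing is to move meta-modules through the interior rather than over the surface. In the paper, pushing (tunneling) along pairwise non-crossing paths with at most three turns, one per layer, does one of two things in $\mathcal{O}(1)$ transformations: it relocates an entire column of $C_1$ into $C_2\setminus C_1$, or it fills an entire slice of $C_2\setminus C_1$. Only $\mathcal{O}(A)$ columns and $\mathcal{O}(h)$ slices then need processing. The same device squeezes the projection into $R$ at $\mathcal{O}(1)$ cost per meta-cell of projection lying outside $R$. The cuboid is then balanced by running the two-dimensional algorithm on all slices in parallel, in $\mathcal{O}(w+h)$ and $\mathcal{O}(d+h)$ transformations.

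\paragraph{A smaller point.}
\cref{cor:teleport} only asserts existence. A constant makespan for your local primitives needs the hypotheses of \cref{lem:teleport}. Executing many primitives simultaneously also needs a non-interference and backbone argument, which your sketch does not give.
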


%% file: 05-conclusion.tex
\section{Conclusion}
\label{sec:conclusion}
We have provided several complexity and algorithmic results on the parallel motion of \slidingcubes in three dimensions.
In particular, we proved that this problem is not \FPT in makespan or symmetric difference size, as well as \logAPX-hard in general.
This is the first such inapproximability result among related models and emphasizes the relevance of input-sensitive, worst-case optimal algorithms; one such algorithm was also presented here.

There remain open questions.
The results of~\cref{sec:teleport} motivate the investigation of \slidingcubes between obstacles, which provide connectivity but cannot be moved.
Since we wanted to keep the presentation as simple as possible the bound for \cref{lem:teleport} can likely be improved to $\mathcal{O}(|S|^2)$.
Moreover, none of the techniques used for our complexity results~extend to lower-dimensional settings:
Is there a significant leap between two and three dimensions?

%% file: A03-complexity.tex
\section{Missing details and omitted proofs from~\cref{sec:complexity}}
\label{app:sec:complexity}

\subsection{\NP-completeness}

\theoremNPHardness*
\begin{proof}
    We reduce from \textsc{Planar Monotone 3Sat}, which asks whether a given Boolean formula is satisfiable~\cite{berg.khosravi2012optimal-binary}.
    Each clause consists of at most $3$ literals, all either positive or negative, and the clause-variable incidence graph must admit a plane drawing where variables are mapped to the $x$-axis, positive (resp., negative) clauses are mapped to the upper (resp., lower) half-plane, and edges do not cross the $x$-axis.
    Let $\deg(x_i)$ refer to the number of clauses that contain a positive literal over~$x_i$, and $\overline{\deg}(x_i)$ the same for negative literals.
    For the construction, starting from $\varphi$, we arrange multiple instances of three gadget types, along with a single module representing the symmetric difference as depicted in~\cref{app:fig:not-fpt-hardness-overview}.

    \begin{figure}[htb]
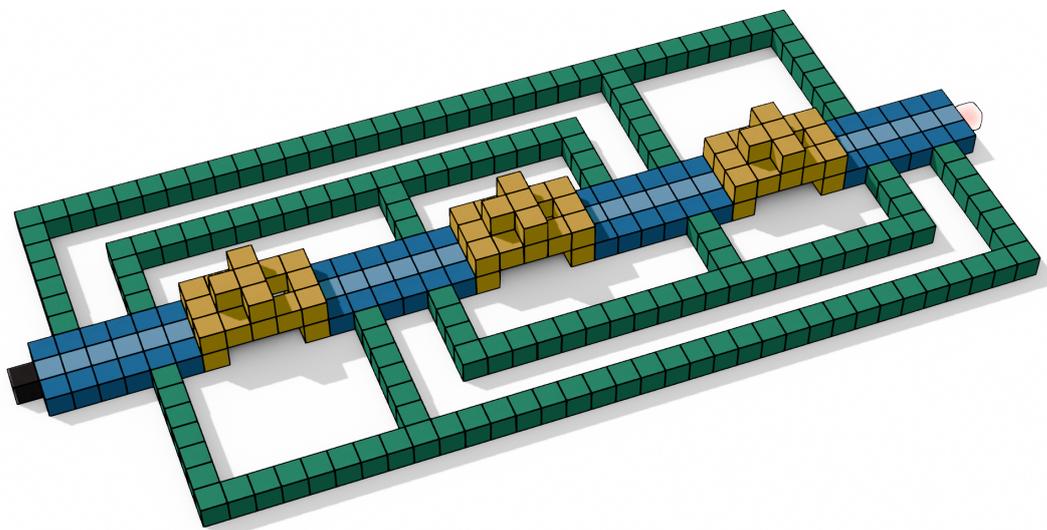
%
        \includegraphics[width=\textwidth]{figures/sat-reduction/hardness-overview}%
        \llap{\includegraphics[width=\textwidth]{figures/sat-reduction/hardness-overview}}%
        \caption{Our construction for $\varphi=(x_1\lor x_3\lor x_4)\land(x_1\lor x_2\lor x_3)\land(\overline{x_1}\lor\overline{x_2}\overline{x_4})\land(\overline{x_2}\lor\overline{x_3}\lor\overline{x_4})$. Colors indicate the gadget types, and variable gadgets are placed in ascending order left to right.}
        \label{app:fig:not-fpt-hardness-overview}%
    \end{figure}%

    More precisely, we start by placing $m$ \newterm{variable gadgets} (blue) in the $xy$-plane along the $x$-axis.
    Each variable gadget simply consists of modules arranged in a rectangle of depth exactly $3$, height exactly $1$, and width $3\cdot \max\{\deg(x_i),\overline{\deg}(x_i)\}$.
    We use their coordinate along the $y$-axis to differentiate between \newterm{positive} and \newterm{negative} modules in the variable gadgets; these will encode the Boolean value assigned to each variable during the transformation step.
    Variable gadgets are spaced $5$ units apart from one another along the $x$-axis, leaving space for a \newterm{connector gadget} (yellow) in between subsequent pairs.
    The connector gadget occupies a bounding box of width $5$ and depth and height $3$; it is the only inherently three-dimensional gadget in our construction.
    A detailed breakdown can be seen in~\cref{fig:hardness-connector-blowup}.
    \begin{figure}[htb]
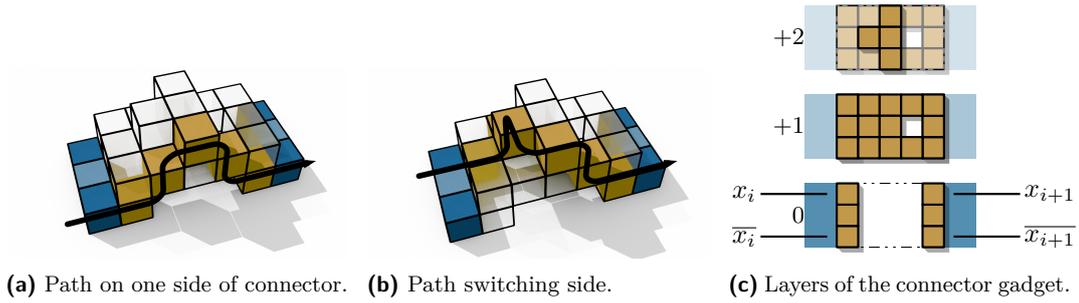
%
        \begin{subfigure}[b]{\columnwidth/3 - 0.5em}%
            \centering%
            \includegraphics[width=\columnwidth]{figures/sat-reduction/variable-connector-stay}%
            \llap{\includegraphics[width=\columnwidth]{figures/sat-reduction/variable-connector-repeat-path}}%
            \subcaption{Path on one side of connector.}
            \label{app:fig:not-fpt-hardness-connector-closeup-a}%
        \end{subfigure}%
        \hfill%
        \begin{subfigure}[b]{\textwidth/3 - 0.5em}%
            \centering%
            \includegraphics[width=\columnwidth]{figures/sat-reduction/variable-connector-swap}%
            \llap{\includegraphics[width=\columnwidth]{figures/sat-reduction/variable-connector-swap-path}}%
            \subcaption{Path switching side.}
            \label{app:fig:not-fpt-hardness-connector-closeup-b}%
        \end{subfigure}%
        \hfill%
        \begin{subfigure}[b]{\columnwidth/3 - 0.5em}%
            \centering%
            \includegraphics[page=3]{figures/sat-reduction/variable-connector-diagram}\par%
            \includegraphics[page=2]{figures/sat-reduction/variable-connector-diagram}\par%
            \includegraphics[page=1]{figures/sat-reduction/variable-connector-diagram}\par%
            \subcaption{Layers of the connector gadget.}
            \label{app:fig:hardness-connector-blowup}%
        \end{subfigure}%
        \caption{The connector gadget with indicated adjacent variable gadgets.}
        \label{app:fig:not-fpt-hardness-connector}
    \end{figure}

    Also attached to each variable gadget is at least one \newterm{clause gadget}~(green).
    Clause gadgets are ``comb-like'' structures that are adjacent to either a positive or negative module of the variable gadgets they contain.
    A positive literal implies the gadget has a face-adjacent module to a positive module of the respective variable gadget.
    Finally, we place a single \newterm{difference module} (black/transparent) at one end of the construction along the $x$-axis in the start configuration, and at the other end in the target configuration; note that this module is the only component that distinguishes the start from the target configuration.
    To reconfigure the resulting instance in a single transformation, this module effectively has to ``teleport'' through the configuration by pushing modules in an uninterrupted sequence of moves from the start to the target position.

    To analyze our construction, we argue about the nature of the path that this sequence of moves describes.
    Because the module that exists solely in the initial configuration must move immediately by performing a convex transition, either the positive or negative modules of the first variable must move as well.
    Moreover, this choice propagates uniformly within the variable gadget: all positive modules move in parallel, or all negative modules do.
    Furthermore, this implies that the position that appears only in the target configuration is occupied through a convex transition performed by a positive or negative module from the final variable gadget.
    Because the first variable ``pushes'' a module into the first connector, and the connector remains unchanged between the initial and target configurations, a module must necessarily exit the connector.
    Consequently, there must be a continuous and uninterrupted motion from the black module to the white position, meaning in particular that each variable gadget moves either all of its positive modules or all of its negative modules.

    It remains to describe the possible movement within the connector gadgets.
    First, assume that the positive modules from the first variable move, and that the positive modules of the second variable are intended to move as well.
    This causes the positive module from the first variable that is adjacent to the connector module to push into the connector through a sliding move.
    Symmetrically, the module leaving the connector exits through a sliding move on the same side.
    The displaced module must then perform a convex transition, and the module that moves into the vacated position must likewise execute a convex transition.
    The only legal sequence of moves allowing this behavior consists of two convex transitions on the same side, as indicated by the path in~\cref{app:fig:not-fpt-hardness-connector-closeup-a}.
    Now, assume that the negative modules from the second variable are intended to move.
    In this case, the movement within the connector proceeds differently, as depicted by the path in~\cref{app:fig:not-fpt-hardness-connector-closeup-b}.
    In particular, the modules in the topmost layer can rotate from one side to the other by means of convex transitions.
    As a result, the overall shape of the connector is preserved, but one of its modules moves to a position previously occupied by a negative module of the second variable.

    Due to the backbone constraint of our model, this implies that all clauses are disconnected from variable gadgets whose assignment does not match a contained literal during the transformation, and that any feasible path has a one-to-one correspondence to a satisfying assignment of $\varphi$.
    In particular, setting a variable to true triggers the movement of its negative variable modules.
    This detaches any clause connected on the negative side from that variable, requiring those clauses to be connected via some other variable that is set to false.
\end{proof}

We now argue the following:

\corollaryInapprox*

\begin{proof}
	It suffices to show that the instances constructed according to our reduction in~\cref{thm:unlabeled-sym-diff-hard} can be solved within two transformations without satisfying the underlying Boolean formula.
    Our strategy is as follows.
    We disregard all positive clauses and move modules exclusively along the negative sides of variable gadgets.

    In the first step, we move sequences of negative modules to the right and into the adjacent connector gadget such that each clause gadget remains connected to its index-minimal incident variable.
    In the second step, we repeat this pattern, instead maintaining index-minimal connections.
    Within two transformations, we can thus teleport the difference module from one end of the configuration to the other.
\end{proof}

\subsection{\logAPX-Hardness}

We now present the omitted details from our earlier overview of the \logAPX-hardness reduction.
includes a thorough description of the construction and its gadgets, followed by a careful analysis of the reduction as a whole.
For clarity, we divide the reduction into two parts. First, we analyze a simplified variant, referred to as the \textsc{Immobile} variant, which is easier to analyze. 
In this variant, we assume that the spike modules (depicted as green cubes in the figures) are immobile, meaning that they cannot move. 
We then show that the assumptions used in this simplified setting remain valid for the original variant once the gadgets are adjusted slightly.

\smallskip
Our reduction is from the \logAPX-hard problem \textsc{Set Cover}~\cite{DinurS14}. 
An instance $\Sigma_{n,k}$ consists of natural numbers $1,\ldots,n$ and a set $S=\{s_1,\ldots,s_k\}$ of subsets of $\{1,\ldots,n\}$ such that $\bigcup_{i=1}^k s_i = \{1,\ldots ,n\}$. 
The task is then to find a set $S^\star$ of minimum cardinality such that $\bigcup_{s_i \in S^\star} s_i = \{1,\ldots, n\}$. 
We use $\ell$ to denote the size of a subset $S'$ of $S$ and $\ell^\star$ to denote the size of the optimal solution $S^\star$.

We begin by describing all gadgets in detail.
In the figures, green cubes represent the spikes.
Although the figures depict these spikes extending three layers upward and downward, in reality they consist of many more cubes; the exact number will be specified later.
Throughout the proof, we denote the initial configuration of our sliding sequence by~$C_I$, and the final configuration by $C_F$.

\begin{description}
	\item[Cover gadget.] For every $j = 1,\ldots,n \in \mathbb{N}$, we introduce a \newterm{cover gadget}, depicted as a violet square in~\cref{fig:sc-reduction-diagram}, and in detail in~\cref{fig:cover-gadget}.
	In a cover gadget, the path of immobile modules is interrupted by a single ``mobile'' module $\module m_j$, highlighted in orange in~\cref{fig:cover-gadget}.
	In $C_I$, the orange module is at level $0$, whereas in $C_F$ it has been raised to level $+1$, directly above its original position.
	The path of immobile modules has spikes going up and down on both modules on the path directly next to the orange module; these are the only spikes that serve a purpose in the \textsc{Immobile} variant apart from immobilizing cubes.
	In the \textsc{Immobile} variant, we assume the spikes to be immobile modules.
	
	\begin{figure}[htb]
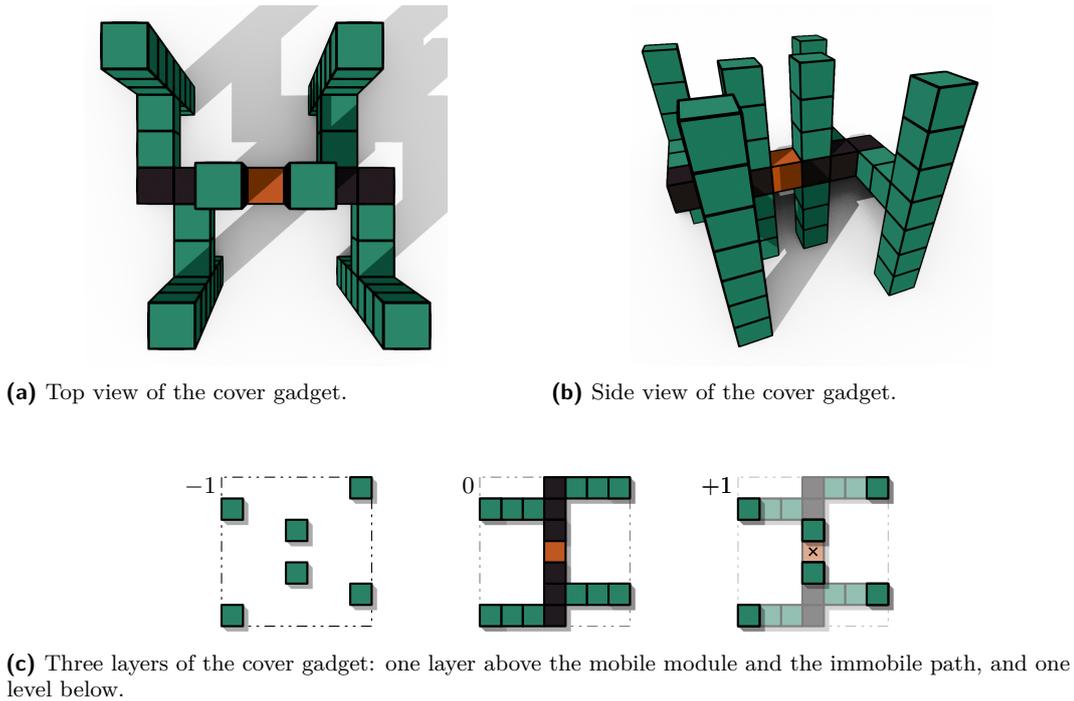
%
		\captionsetup[figure]{justification=centering}%
		\begin{subfigure}[b]{0.5\columnwidth-0.5em}%
			\centering%
			\includegraphics[width=0.7\columnwidth]{figures/sc-reduction/cover-top-view}%
			\subcaption{Top view of the cover gadget.}
		\end{subfigure}
		\hfill%
		\begin{subfigure}[b]{0.5\columnwidth-0.5em}%
			\centering%
			\includegraphics[width=0.7\columnwidth]{figures/sc-reduction/cover-nice-view}%
			\subcaption{Side view of the cover gadget.}
			\label{fig:coverb}
		\end{subfigure}%
		\par%
		\vspace*{0.75cm}%
		\begin{subfigure}[b]{\columnwidth}%
			\hfil%
			\includegraphics[page=1]{figures/sc-reduction/cover-gadget-diagram}\hspace{2em}%
			\includegraphics[page=2]{figures/sc-reduction/cover-gadget-diagram}\hspace{2em}%
			\includegraphics[page=3]{figures/sc-reduction/cover-gadget-diagram}%
			\subcaption{Three layers of the cover gadget: one layer above the mobile module and the immobile path, and one level below.}
		\end{subfigure}
		\caption{The cover gadget. The orange cube describes is the mobile module, with starting position at layer $0$ and target position in layer $+1$. The dark cubes describe the immobile modules. Green cubes correspond to spikes.}
		\label{fig:cover-gadget}
	\end{figure}
	\item[Layer gadget.] To extend~\cref{fig:sc-reduction-diagram} into the third dimension, we introduce~\newterm{layer gadgets}, depicted as a blue box marked with a black arrow; see also~\cref{fig:patha}.
	Layer gadgets signify that the path bends upward, rises by three levels, and subsequently resumes its initial direction.
	In~\cref{fig:sc-reduction-diagram}, the arrow indicates the transition from the lower part of the path to the higher part.
	\begin{figure}[htb]%
		\captionsetup[figure]{justification=centering}%
		\begin{subfigure}[b]{\columnwidth/2 - 0.5em}%
			\centering%
			\includegraphics[width=\columnwidth*2/3]{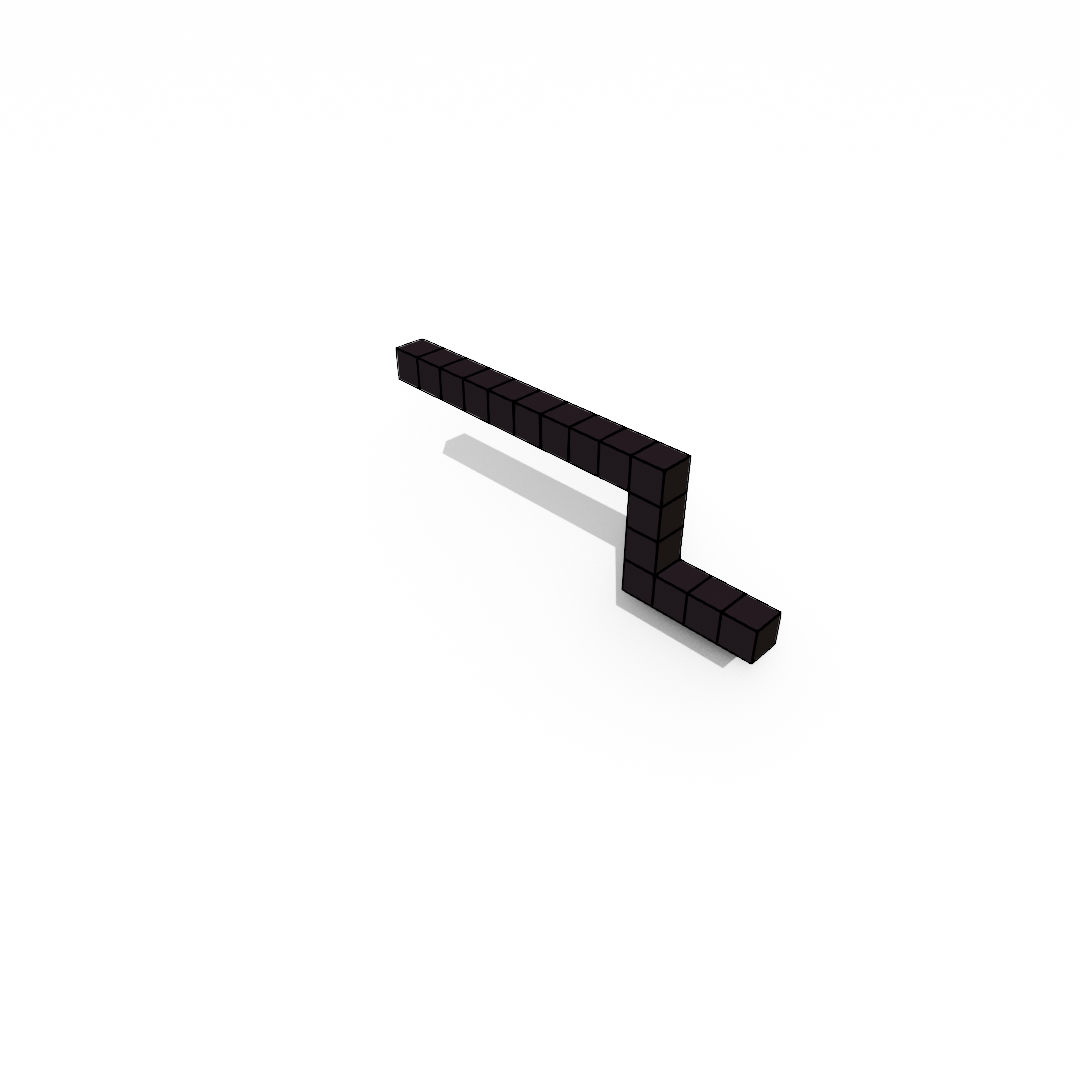}%
			\subcaption{A path climbing three units...}
			\label{fig:patha}
		\end{subfigure}
		\hfill%
		\begin{subfigure}[b]{\columnwidth/2 - 0.5em}%
			\centering%
			\includegraphics[width=\columnwidth*2/3]{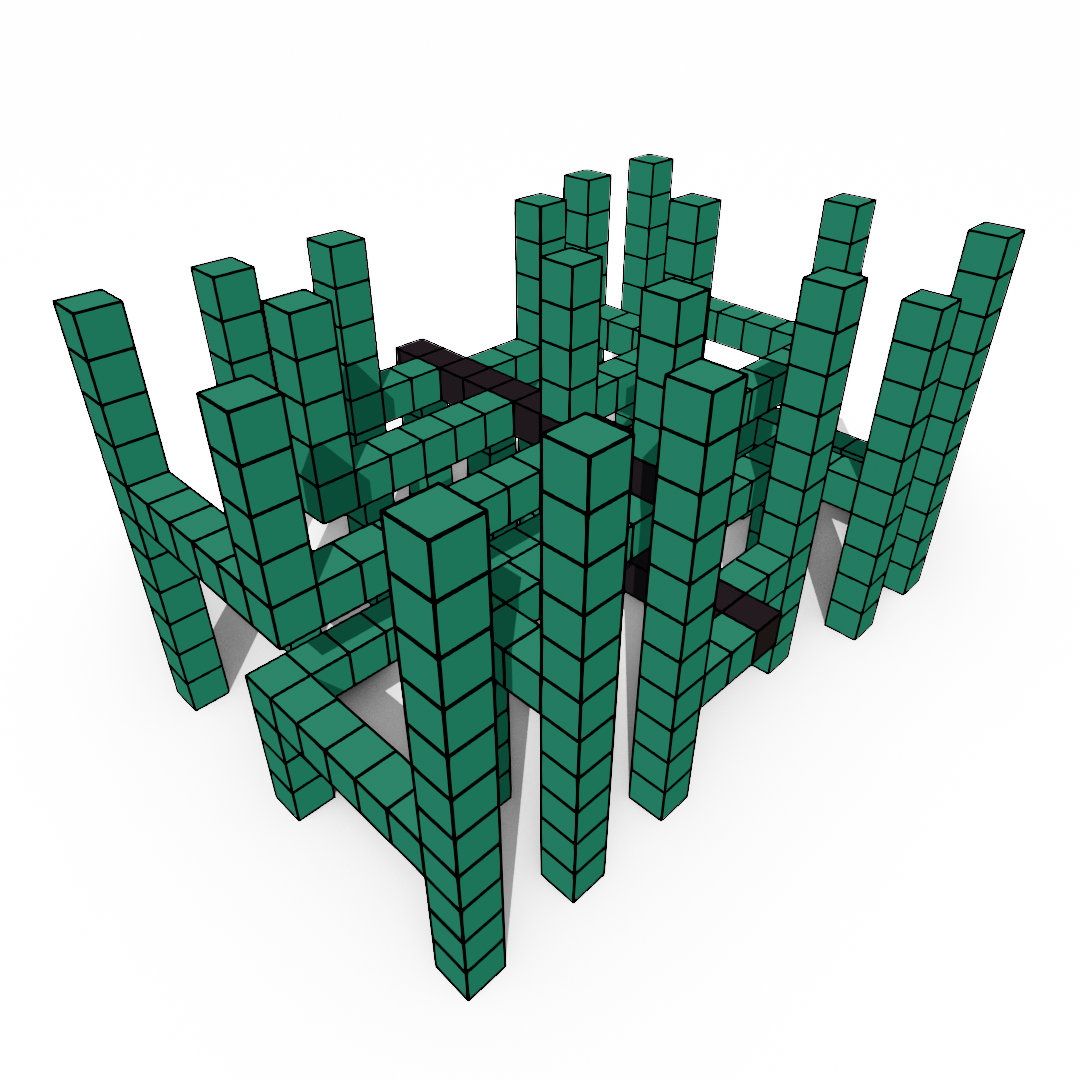}%
			\subcaption{...with lots of spikes attached.}
			\label{fig:pathb}
		\end{subfigure}%
		\par%
		\vspace*{0.75cm}%
		\begin{subfigure}[b]{\columnwidth}%
			\hfil\includegraphics[page=1]{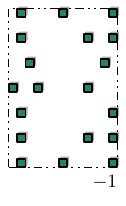}
			\hfil\includegraphics[page=2]{figures/sc-reduction/layer-gadget-diagram}%
			\hfil\includegraphics[page=3]{figures/sc-reduction/layer-gadget-diagram}
			\hfil\includegraphics[page=4]{figures/sc-reduction/layer-gadget-diagram}%
			\hfil\includegraphics[page=5]{figures/sc-reduction/layer-gadget-diagram}
			\hfil\includegraphics[page=6]{figures/sc-reduction/layer-gadget-diagram}%
			\subcaption{Three layers of the path gadget.}
		\end{subfigure}
		\caption{The layer gadget connects path gadget across vertical distances (in multiples of three).}
		\label{fig:layer-gadget}
	\end{figure}
	\item[Selector gadget.] The \newterm{selector gadget} is depicted as a blue square.
	This gadget is used to represent the containment of a number $j$ from $1,\ldots, n$ in a set $s_i$.
	In a selector gadget, one path of immobile modules is oriented orthogonally to a second path.
	The terminal module of the first path is placed at distance $1$ from the second path.
	Each path connects to the cover gadget associated with the number $j$, but from different directions.
	Moving a mobile robot to connect the two paths establishes the backbone constraint that in turn enables the orange module of the cover gadget to move.
	
	\begin{figure}[htb]
		\captionsetup[figure]{justification=centering}%
		\begin{subfigure}[b]{0.5\columnwidth-0.5em}%
			\centering%
			\includegraphics[width=0.7\columnwidth]{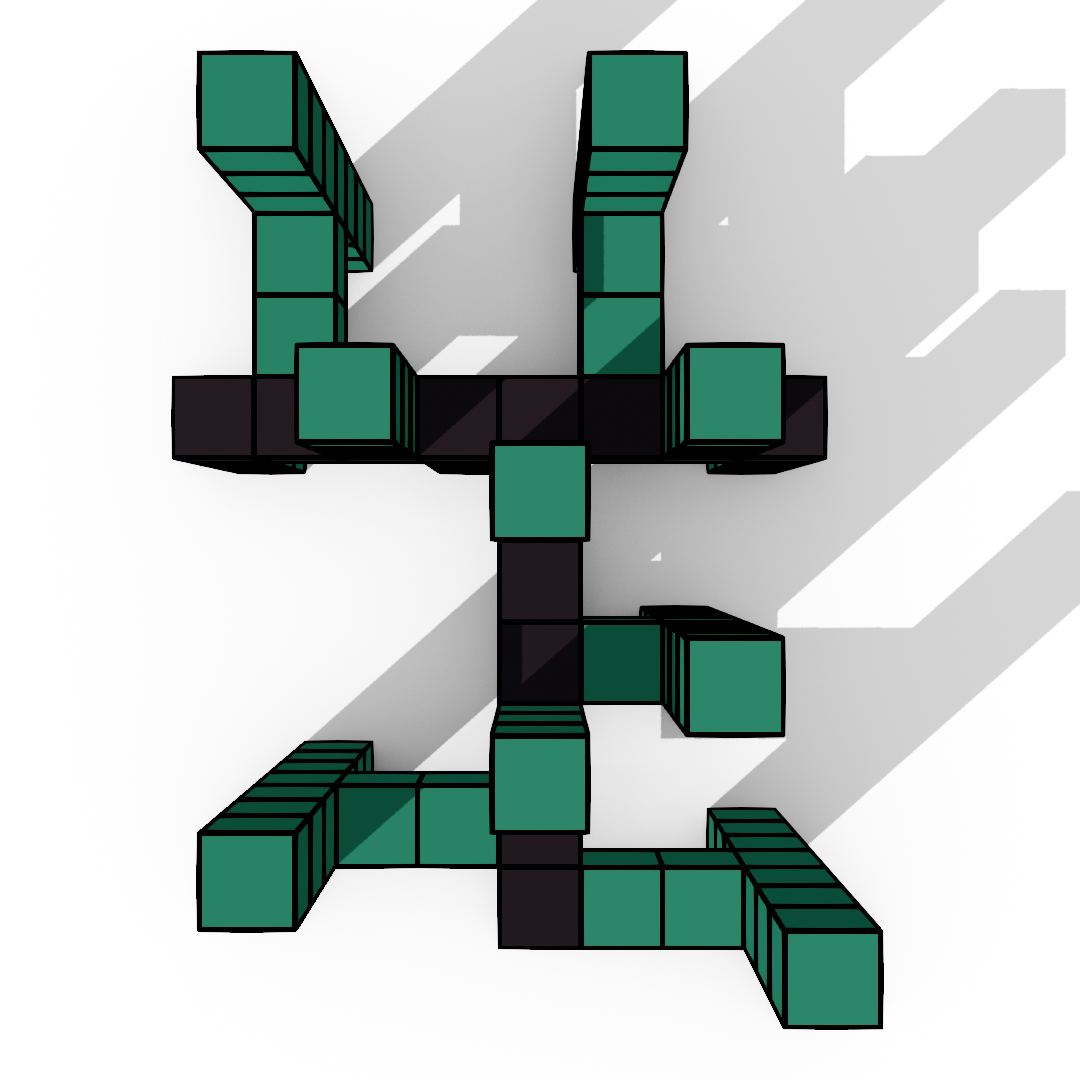}%
			\subcaption{A tee-crossing gadget.}
		\end{subfigure}
		\hfill%
		\begin{subfigure}[b]{0.5\columnwidth-0.5em}%
			\centering%
			\includegraphics[width=0.7\columnwidth]{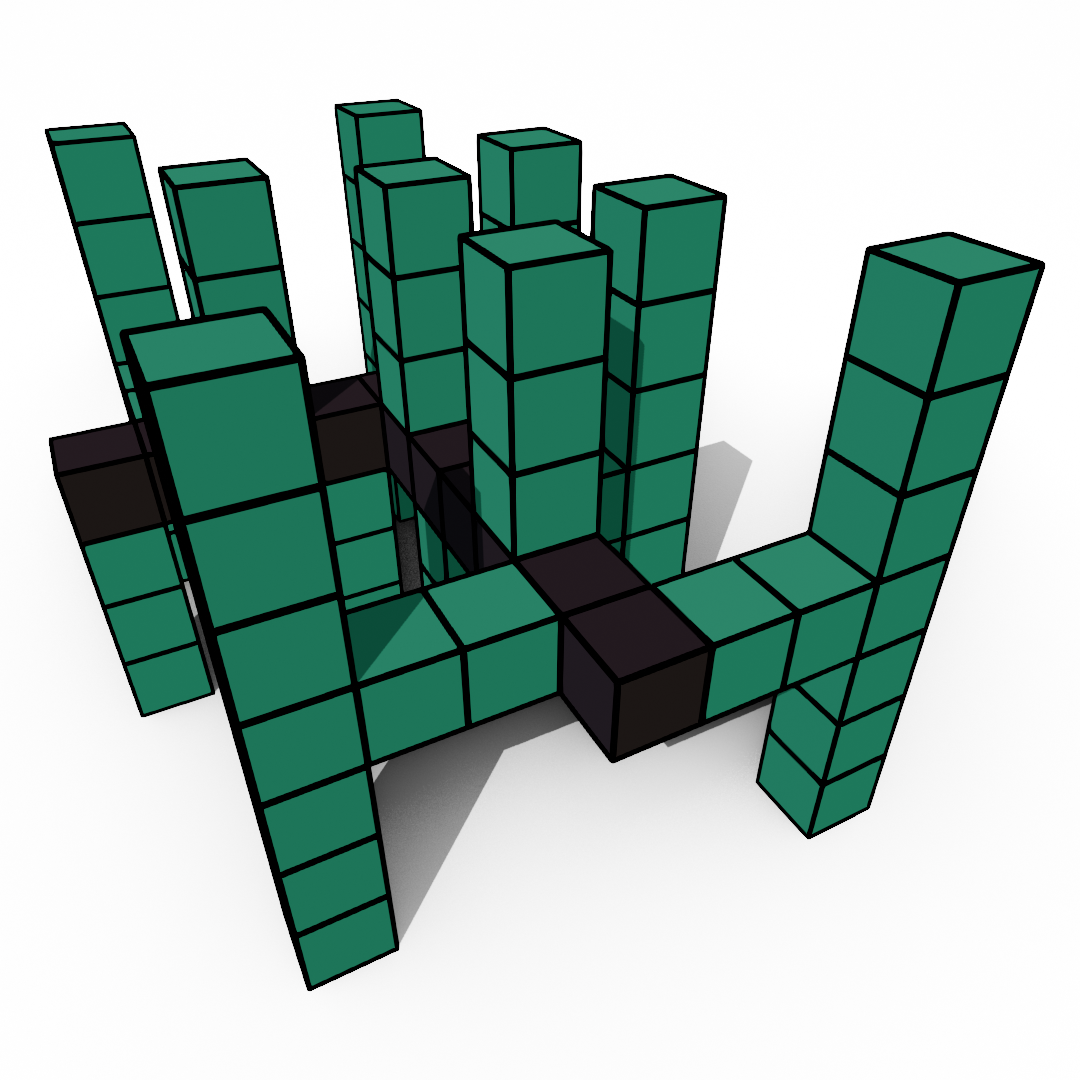}%
			\subcaption{A nicer view.}
		\end{subfigure}%
		\par%
		\vspace*{0.75cm}%
		\begin{subfigure}[b]{\columnwidth}%
			\hfil%
			\includegraphics[page=1]{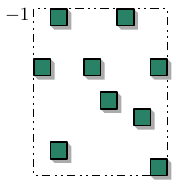}\hspace{2em}%
			\includegraphics[page=2]{figures/sc-reduction/tee-gadget-diagram}\hspace{2em}%
			\includegraphics[page=3]{figures/sc-reduction/tee-gadget-diagram}%
			\subcaption{Three layers of the tee-crossing gadget.}
		\end{subfigure}
			\caption{Tee gadgets connect three path gadgets meeting in a common point.}
			\label{fig:tee-gadget}
	\end{figure}
	
	\begin{figure}[htb]
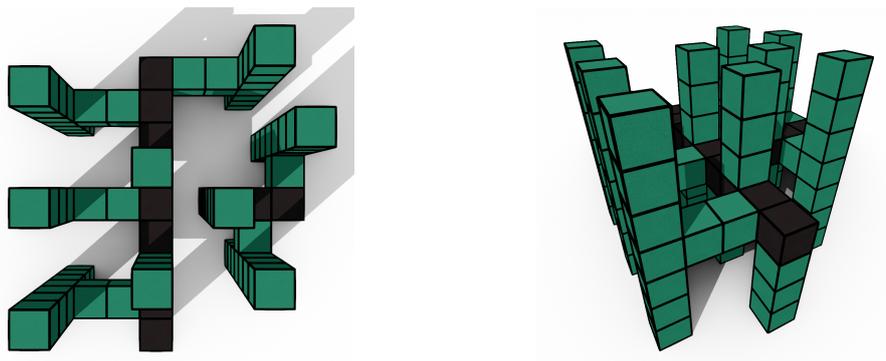

		\captionsetup[figure]{justification=centering}%
		\begin{subfigure}[b]{0.5\columnwidth-0.5em}%
			\centering%
			\includegraphics[width=0.7\columnwidth]{figures/sc-reduction/selector-top-view}%
			\subcaption{A selector gadget.}
		\end{subfigure}
		\hfill%
		\begin{subfigure}[b]{0.5\columnwidth-0.5em}%
			\centering%
			\includegraphics[width=0.7\columnwidth]{figures/sc-reduction/selector-nice-view}%
			\subcaption{A nicer view.}
		\end{subfigure}%
		\par%
		\vspace*{0.75cm}%
		\begin{subfigure}[b]{\columnwidth}%
			\hfil%
			\includegraphics[page=1]{figures/sc-reduction/selector-gadget-diagram}\hspace{2em}%
			\includegraphics[page=2]{figures/sc-reduction/selector-gadget-diagram}\hspace{2em}%
			\includegraphics[page=3]{figures/sc-reduction/selector-gadget-diagram}%
			\subcaption{Three layers of the selector gadget.}
		\end{subfigure}
		\caption{Selector gadgets enable the cover gadgets to perform their operation.}
		\label{fig:selector-gadget}
	\end{figure}
	
	\item[Number gadget.] To encode a number $j$, we construct a chain gadgets (as illustrated in the gray rectangle labeled $1$ in~\cref{fig:sc-reduction-diagram}) along a horizontal path.
	On its right side, it includes a cover gadget. 
	The boxes with orange hatching serve as placeholders, indicating that the actual path is considerably longer than depicted; the figure compresses it for readability.
	For every set that contains $j$, an additional vertical path towards the set the set gadgets (gray rectangles labeled $s_i$) branches off from the main horizontal path via a~\newterm{tee-crossing}.
	
	\item[Combining all numbers $1,\ldots,n$:] Each number  $1,\ldots,n$ is represented by a dedicated horizontal path each. 
	These paths are connected on their right side to a single comb. 
	The largest number is placed foremost, occupying the lowest layer among all number gadgets.
	Between any two consecutive number gadgets, we insert a layer gadget so that the gadget for  $j-1$ is positioned three layers above the gadget for $j$; see the light-blue squares along the rightmost vertical path in~\cref{fig:sc-reduction-diagram}. 
	This ensures that the paths leading to the selector gadgets originate at distinct vertical positions for each number gadget.
	\pagebreak
	\item[Set gadget:] To represent a set $s_i \in S$, we introduce the \newterm{set gadget} (marked with a gray rectangle labeled $s_i$).
	The set gadget consists of a path that is drawn vertically in~\cref{fig:sc-reduction-diagram}, from the bottommost row to the gray rectangles.
	At its upper end, it hosts selectors connected to number gadgets, with a selector for number $j$ present precisely when $j\in s_i$.
	They are ordered so that selectors higher on the path correspond to larger numbers.
	
	\item[Combining set gadgets:] The bottommost horizontal path in~\cref{fig:sc-reduction-diagram}, together with the set gadgets, constitutes a comb structure.
	The set gadgets are attached from left to right in the order prescribed by their indices.
	In $C_I$, the mobile module $\module m$ sits at the corner where the gadget for $s_1$ attaches, whereas in $C_F$ it sits at the tee-crossing where the gadget for $s_k$ attaches to the path.
	
	\item[Putting it all together:] The comb formed by the set gadgets and the comb formed by the number gadgets intersect at a single point (shown at the bottom right of \cref{fig:sc-reduction-diagram}).
	The cube where the two paths meet lies at a considerable distance from all set gadgets as well as all number gadgets.
	This large spacing is represented by the orange-hatched rectangles.
\end{description}

Next, we discuss the different spatial separations among the gadgets as depicted in~\cref{fig:sc-reduction-diagram}.

\begin{itemize}
	\item The distance between two consecutive selector gadgets is set to $B=30$.
	As a result, the vertical distance from a number gadget to a selector gadget is also $B$.
	For paths that connect to number $k$, the $90^\circ$ turns, that is where the paths changes from horizontal to vertical, occur after a horizontal distance of $B(k+1-j)$.
	This ensures that, after the turn, the vertical distance between consecutive paths remains $B$.
	\item To prevent interference, the vertical distance between two set gadgets is chosen so that all $90^\circ$ turns of selectors from the previous gadget lie further to the left. This distance is set to $C = 2\cdot B\cdot k$.
	\item The paths depicted as black-hatched rectangles are of length $A=k\cdot C + 2k(n-1)B + n$.
	\item The paths depicted as orange-hatched rectangles are of length $100\cdot k \cdot A$. Note that this length also specifies how far the spike gadgets extend above and below the path.
\end{itemize}

We provide several results that allow us to draw conclusions about the computational complexity of determining how many steps are required to reconfigure $C_I$ into $C_F$ in the \textsc{Immobile} variant.
Before moving forward, we would like to make a few remarks.
(1) The dual graph of $C_I$, with vertices correspond to modules and two vertices are connected exactly if their corresponding modules share a common 2-dimensional face, forms a tree.
(2) $\module m$ is the only mobile module that can perform a move in~$C_I$.
It is trivial that $\module m_j$ cannot perform a slide because doing so would separate the tree into two parts, thereby violating the backbone constraint.
To allow an $\module m_j$ gadget to move, we can move~$\module m$ to a selector gadget and use it to fill the gap in between the two paths of that separator.
By doing so, we create a cycle that contains both $\module m$ and $\module m_j$, so that $\module m_j$ can perform a slide onto its respective target position without violating the backbone constraint.

We now establish a simple upper bound on the number of moves required to reconfigure $C_I$ into~$C_F$ in the \textsc{Immobile} variant, which lets us rule out certain events in the shortest sequence, since performing them would already exceed this bound.

\begin{lemma}
	\label{lem:trivialbound}
	In the \textsc{Immobile} variant, there exists a sequence of parallel slides from $C_I$ to $C_F$ of length $2\cdot k \cdot A + k\cdot (C+6) + 2k(n-1)(B+6)+n$.
\end{lemma}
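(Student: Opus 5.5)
The proof is a direct construction: we exhibit an explicit sequential schedule in which only $\module m$ and the orange modules $\module m_j$ ever move, and we count its moves term by term against the stated bound. The natural greedy strategy is to use the trivial cover $S'=S$ (all $k$ sets). The module $\module m$ visits the set gadgets $s_1,\ldots,s_k$ in index order. For each set gadget $s_i$, it climbs from the bottom comb up the vertical path of $s_i$ to its selectors, then comes back down. On the way it services every selector of $s_i$: it parks temporarily in the gap of the selector for $j\in s_i$, which closes a cycle through the cover gadget of $j$, and then $\module m_j$ slides up one level to its target. Since $S$ covers $\{1,\ldots,n\}$, every $\module m_j$ is raised at least once. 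Each $\module m_j$ needs exactly one move, which gives the additive term $n$. Finally $\module m$ ends at the tee-crossing of $s_k$ as required in $C_F$. If $\module m$ ends one step short, this is absorbed in the bound.

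The counting proceeds as follows.

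\emph{Vertical paths.} Going up and down the black-hatched segment of each set gadget costs at most $2A$ per set, so $2kA$ in total. This is the dominant term.

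\emph{Bottom comb.} Walking along the bottom comb between consecutive set gadgets covers a distance of about $C$ per gap. The additive $6$ in $C+6$ accounts for the constant detours around path corners and spikes. Moving along a width-$1$ path of immobile modules is a sequence of slides along its surface, but at corners and spikes $\module m$ needs a constant number of convex transitions to change faces. This gives $k(C+6)$.

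\emph{Selector region.} Within the upper part of a set gadget, consecutive selectors are spaced $B$ apart, and there are at most $n$ of them. A generous bound charges each of the at most $n-1$ gaps between selectors twice (up and down), at $B+6$ each including the detour into and out of a selector gap. Summed over the $k$ sets this gives $2k(n-1)(B+6)$.

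The main obstacle is not the arithmetic but verifying legality. First, each step of $\module m$ must be a valid slide or convex transition whose surface support comes from immobile modules; spikes must not block $\module m$ when it walks around them. Second, the backbone must remain connected. Since $\module m$ is a leaf of the dual tree, hence free, its moves are safe. Each $\module m_j$ slide, in turn, is performed only while $\module m$ plugs a selector, so $\module m_j$ lies on a cycle and its removal does not disconnect the configuration. To handle the first point, I would check the gadget figures (cover, selector, tee, layer) one by one and confirm that a width-$1$ path surrounded by at most the drawn spikes always admits a surface route with constant overhead per gadget. This is exactly where the slack constants $+6$ come from.
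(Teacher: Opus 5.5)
Your proposal is correct and follows essentially the same route as the paper. Both use a sequential walk of $\module m$ through all $k$ set gadgets in index order, plugging selector gaps so that each $\module m_j$ slides up exactly once, with the cost charged term by term: $2kA$ for the vertical paths, $B+6$ per selector spacing traversed up and down, $C+6$ per move along the bottom comb, and $n$ for the orange modules. The only difference is cosmetic: the paper skips selectors whose number has already been handled, while you revisit them, and this is harmless because the $2(n-1)(B+6)$ charge per set gadget covers the full traversal either way.
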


\begin{proof}
	The mobile module $\module m$ is initially placed in the bottom-left corner, where the set gadget of $s_1$ connects to the horizontal path at the bottom. 
	We perform a sequence of sequential moves that moves $\module m$ to the first selector of the number $j_1$ within the set gadget of~$s_1$. 
	Next, $\module m$ moves into the gap between the two paths in the selector gadget. 
	This closes a cycle that also includes the module $\module m_{j_1}$. 
	Consequently, we can perform a slide that moves $\module m_{j_1}$ from its initial position to its final position, as $\module m$ provides a backbone.
	After sliding $\module m_{j_1}$ to its final position, we remove $\module m$ from the gap and move it to the next selector gadget of some number $j_2$. 
	Again, we move $\module m$ into the gap between the two paths of the selector gadget, move $\module m_{j_2}$ to its final position, and remove $\module m$ from the gap.
	We repeat this for all selector gadgets within the set gadget of $s_1$. 
	Once all selector gadgets have been processed, we move $\module m$ back to the point where the set gadget of $s_1$ connects to the horizontal path.
	
	We continue by moving $\module m$ to the point were the set gadget of $s_2$ connects to the horizontal path. 
	The entire set gadget of $s_2$ is processed in the same way as $s_1$, with the exception that we skip selector gadgets linked to numbers already handled in $s_1$. 
	This procedure is repeated for all set gadgets corresponding to every $s\in S$. 
	In the end, $\module m$ reaches its final position, and because $\bigcup_{i=1}^k s_i = \{1,\ldots, n\}$, every $\module m_j$ in a cover gadget has been moved to its final position.
	
	Moving $\module m$ up and down a single set gadget requires $2\cdot A$ moves, and $2\cdot k \cdot A$ in total for all set gadgets. 
	Additionally, accessing all selector gadgets within a single set gadget takes at most  $2(n-1)(B+6)$ moves, summing to $2k(n-1)(B+6)$ across all set gadgets. 
	The ``$+6$'' accounts for spikes pointing upwards, which require an extra slide to move past, once when traversing the gadget from bottom to top, and once from top to bottom.
	Two more slides are needed for inserting and removing $\module m$ into the gap between the two paths of a selector gadget. 
	Moving from the attachment point of one set gadget to another requires $(C+6)$ moves, adding up to $(k-1)(C+6)$ in total. 
	Here, the ``$+6$'' accounts again from moving towards spikes pointing upwards in turn and tee gadgets. 
	Finally, moving each $\module m_j$ from its initial position to its target position requires one slide per module, giving $n$ moves in total.
\end{proof}

From~\cref{lem:trivialbound}, we derive the following crucial property of any shortest sliding sequence between $C_I$ and $C_F$.

\begin{lemma}
	In any shortest sliding sequence in the \textsc{Immobile} variant, any $\module m_j$ for any $j=1,\ldots,n$ is involved in exactly one sliding operation, namely the slide from its initial position to its target position.
\end{lemma}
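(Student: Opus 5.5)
Since $\module m_j$ occupies different cells in $C_I$ and $C_F$, every sequence must move it at least once. The plan is to show that moving it more than once costs more than the budget of \cref{lem:trivialbound}. The key structural fact is that $\module m_j$ can only move while the backbone contains a cycle through its position. In $C_I$ the dual graph is a tree, and the only other mobile module is $\module m$. Hence at any time $\module m_j$ moves, $\module m$ must be sitting in the gap of some selector gadget associated with $j$ (or in a cell that closes an equivalent cycle), or $\module m_j$ is already free because of an earlier configuration change that itself required $\module m$ at a selector.

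First I would use the spikes to enumerate the cells $\module m_j$ can reach. The spikes attached to the two path modules next to $\module m_j$ extend $100kA$ up and down. Consequently $\module m_j$, whether at level $0$ or level $+1$, can only slide between its initial cell and the target cell directly above it; any other move either leaves it with no face contact or requires travelling around a spike. Going around a spike costs at least $100kA$ moves, which exceeds the bound $2kA+k(C+6)+2k(n-1)(B+6)+n$ of \cref{lem:trivialbound} for the chosen parameters. A short computation with $A\ge kC$ confirms this exceeds the bound.

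It follows that in a shortest sequence, every move of $\module m_j$ is a slide between these two cells. I would then argue by exchange. Suppose $\module m_j$ slides more than once, say up, down, and up again. Deleting one up–down pair yields a valid sequence, because $\module m_j$ returning to level $0$ restores the tree connection it had before. The intervening moves of $\module m$ never depend on $\module m_j$ being at level $+1$, since $\module m$ travels only along immobile paths and $\module m_j$'s raised cell touches only spikes and the gap region of the cover gadget. The shortened sequence is strictly shorter, contradicting minimality.

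The main obstacle is justifying that the intermediate moves stay valid after the deletion. This requires checking that $\module m$ could not have used the raised $\module m_j$ as a stepping stone or as backbone. I would handle it by case analysis on the cover gadget's three layers, showing that the only cells adjacent to $\module m_j$'s target are the spike cells and its original position. I also need that an interleaved sequence, where $\module m_j$ moves down at some point, cannot be cheaper. That follows because the final state requires $\module m_j$ up, so moves occur in up/down alternation and an even number can be cancelled in pairs.
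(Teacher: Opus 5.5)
\textbf{There is a gap: the geometric claim that restricts the moves of $\module m_j$ is false.} You argue that the spikes confine $\module m_j$ to its initial cell and the cell directly above it. From that you conclude its moves alternate up/down and cancel in pairs.

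The two vertical spikes on the path modules next to $\module m_j$ form a width-one corridor that is open on the other two sides. Whenever $\module m_j$ is free, which is exactly when it can move at all, it has other options:
\begin{itemize}
\item It can slide from the target cell further up the corridor, or from its initial cell further down. Each such slide is supported by two face-adjacent spike modules.
\item It can make a convex transition around the side of a neighbouring spike column onto that column's outer face, slide along it, and come back.
\end{itemize}
None of these moves loses face contact or goes around the far end of a spike, so the $100kA$ bound does not exclude them. For example, take a convex transition around a neighbouring path module, then one slide up along its spike, then a convex transition back into the target cell. This three-move itinerary contains no up--down pair to cancel. So ``every move of $\module m_j$ is a slide between these two cells'' is unjustified, and pair cancellation does not handle general detours. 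Your plan to check only which cells are \emph{adjacent} to the target does not help either: the empty corridor cell above it is a legal slide destination.

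\textbf{How to repair it.} Replace the \emph{entire} itinerary of $\module m_j$ by the single direct slide, performed at the time of its first move.
\begin{itemize}
\item At that moment $\module m_j$ is free, since it moves in the original sequence. The slide is supported by the adjacent spike modules.
\item You must show that $\module m$ and every $\module m_{j'}$ stay far from cover gadget $j$ within the budget of \cref{lem:trivialbound}. This gives two things: the target cell is still empty, and $\module m_j$ must end in its own target cell. The paper argues the latter via the $200kA$ distance.
\item Every later move stays valid. Near its gadget, $\module m_j$ can only be adjacent to modules on the two sides of the interrupted path. Placing it in the target cell joins both sides, so the backbone for each remaining move is still connected, and no collisions arise.
\end{itemize}

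This gives a genuinely different, local exchange argument. The paper instead argues globally. It rebuilds a canonical sequence using one bridged selector per element. It then observes that this sequence uses no more set-gadget ascents, selector visits, or orange slides, and strictly fewer orange slides if some $\module m_j$ moved twice. That route never has to analyse where $\module m_j$ can wander.
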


\begin{proof}
	We first note that the term ``its target position'' is indeed well-defined:
	indeed, for distinct $j_1\neq j_2$, sliding a mobile module from the initial position of $j_1$ and the target position of $j_2$ necessarily requires moving it across two orange-hatched rectangles, as visualized in~\cref{fig:sc-reduction-diagram}.
	This requires at least $200\cdot k \cdot A$ many slides, which exceeds the length of the trivial upper bound given in~\cref{lem:trivialbound}.
	
	Consider an arbitrary sliding sequence from $C_I$ to $C_F$ that is strictly shorter than the trivial one.
	Within this sequence, the mobile module~$\module m_j$ may participate in multiple sliding operations.
	However, it can only be involved while $\module m$ sits in a gap of a selector gadget linked to the number gadget for $j$.
	For each set $i$ containing the number $j$, let $\selector(i,j)$ denote the corresponding selector gadget.
	Let $\selector(\cdot,j)$ be the set of all selector gadgets connected to the number gadget of $j$ for which $\module m$ has bridged the gap between the two paths of the selector gadget.
	For every $j = 1,\ldots,n$, we choose exactly one selector gadget $\selector(i,j)$ and include it in the set $\selector^\star$.
	We now construct a sliding sequence as follows:
	Initially, $\module m$ is in its starting position.
	We slide $\module m$ from its initial position to its final position along the horizontal path of the comb formed by all set gadgets.
	Whenever $\module m$ reaches a set gadget that contains a selector gadget $\selector(i,j) \in \selector^\star$, we slide $\module m$ upwards into the set gadget.
	Within it, we place $\module m$ into each selector gadget belonging to the set gadget to connect the two paths.
	This enables us to move every $\module m_j$ to its target position using a single slide per orange module.
	
	The resulting sliding sequence ascends no more set gadgets, uses no more selector gadgets, and performs no more slides of orange mobile modules than the original sequence.
	Moreover, it performs strictly fewer slides of orange modules whenever any orange module participated in more than one slide in the original sequence.
	Consequently, the length of our constructed sliding sequence is at most that of the original one, and strictly smaller whenever the original sequence slid any orange module more than once.
	
	We conclude that in any shortest sliding sequence any orange mobile module is part of at most one slide.
\end{proof}

\begin{proposition}
	\label{prop:nphard}
	In the \textsc{Immobile} variant, there exists a sequence of parallel slides from $C_I$ to $C_F$ of length $2\cdot \ell\cdot A + (k-1)\cdot (C+6) + 2k(n-1)(B+6)+n$ if and only if $\Sigma_{n,k}$ admits a cover of size $\ell$.
\end{proposition}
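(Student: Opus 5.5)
We prove both directions by comparing schedules with the cost accounting already used in the proof of \cref{lem:trivialbound}. There, the total length splits into four parts. Climbing set gadgets costs $2A$ per visited set gadget. Travelling along the bottom comb costs $(k-1)(C+6)$. Visiting selectors inside a set gadget costs at most $2(n-1)(B+6)$ per gadget, including the two insertion/removal slides. The $n$ slides of the orange modules cost $n$.

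For the ``if'' direction, take a cover $S'$ of size $\ell$. We run the schedule of \cref{lem:trivialbound}, but ascend only the set gadgets of $S'$. The module $\module m$ travels once along the bottom path from its start to its target, costing $(k-1)(C+6)$. It climbs each of the $\ell$ chosen set gadgets and returns, costing $2\ell A$. Inside each climbed set gadget it visits only the selectors of numbers not yet handled, costing at most $2(n-1)(B+6)$ per gadget and hence at most $2k(n-1)(B+6)$ overall. Finally each $\module m_j$ is slid once, since $S'$ covers every $j$. The total is therefore at most the claimed bound. We read ``a sequence of length'' as ``at most''; shorter sequences can be padded with idle steps.

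For the ``only if'' direction, fix a schedule of at most the stated length. By \cref{lem:trivialbound} and the preceding lemma, we may assume it is a shortest one, in which every $\module m_j$ slides exactly once. That slide needs a closed cycle through a selector $\selector(i,j)$ bridged by $\module m$. Module $\module m$ is the only other mobile module, so it must have physically reached a selector in set gadget $s_i$ with $j\in s_i$. Let $S'$ be the set of indices $i$ whose set gadget $\module m$ entered beyond its attachment point. Then $S'$ is a cover. It remains to show $|S'|\le\ell$.

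This is the main obstacle: we must show that entering $|S'|$ set gadgets forces length greater than $2|S'|A$ plus the remaining terms, and that the slack terms cannot absorb an extra $2A$. The choice $A=kC+2k(n-1)B+n$ makes this work. If $|S'|\ge\ell+1$, reaching any selector in each of these gadgets and returning forces at least $2(\ell+1)A$ moves. The return is needed because the endpoint of $\module m$ lies on the bottom path and the set gadgets are connected only through it. The orange-hatched distance of $100kA$ rules out shortcuts through the number comb, since crossing it would exceed the trivial bound. The extra $2A$ exceeds $(k-1)(C+6)+2k(n-1)(B+6)+n$ up to the $+6$ terms. Here I must verify the constants: $2A$ contains $2kC+4k(n-1)B+2n$, which should dominate $(k-1)\cdot 6+12k(n-1)$ since $B=30$. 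This contradicts the length bound, so $|S'|\le\ell$, and padding $S'$ if needed gives a cover of size exactly $\ell$.

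Beyond the constants, the delicate points are these. First, parallel moves give no speed-up, because only $\module m$ (and an $\module m_j$ once enabled) can ever move. Second, $\module m$ cannot enter a set gadget without paying the full length $A$, because the spikes of the immobile variant block all detours.
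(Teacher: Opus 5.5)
Your proposal is correct and follows the paper's own proof. The ``if'' direction uses the same schedule and cost accounting. Your ``only if'' contradiction (each set gadget actually used costs $2A$, while the slack $(k-1)(C+6)+2k(n-1)(B+6)+n$ is strictly below $2A$) is the paper's floor computation in another form, and your explicit constant check is cleaner than the paper's slightly miswritten denominator. One small repair: define $S'$ as the set gadgets in which $\module m$ actually bridged a selector, not those it merely ``entered beyond its attachment point,'' since only the former are charged $2A$ in your lower bound.
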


\begin{proof}
	Suppose $\Sigma_{n,k}$ admits a cover $S'$ of size $\ell$.
	We perform the following sequence of moves.
	The free module moves from left to right:
	whenever it reaches an set gadget corresponding to a set in $S'$, it ascends the gadget and performs a move in every selector gadget linked to cover gadgets that have not yet been transformed.
	
	Moving from left to right requires at most $k\cdot (C+6)$ moves.
	Transforming the selector gadgets then takes one move per element, for a total of $n$ moves.
	Ascending a set gadget takes $A$ slides, and descending it requires an additional $A$ slides. The module repeats this process for $\ell$ set gadgets.
	This gives a total of $2\cdot \ell\cdot A$ slides for set gadgets.
	Within a set gadget, the module must still visit all of the selector gadgets.
	This may take up to $2(n-1)(B+6)$ slides per set gadget, so at most $2k(n-1)(B+6)$ slides in total.
	
	Now assume that the sliding distance between $C_I$ and $C_F$ is at most $2\cdot \ell\cdot A + k\cdot (C+6) + 2k(n-1)(B+6)+n$. 
	To make all necessary slides within the cover gadgets, the mobile module $\module m$ must reach the selector gadgets containing the respective numbers. 
	To do so, it must walk up and down the set gadgets.
	Given the total length of the sliding sequence, it can ascend and descend at most 
	\begin{align*}
		&\leq \bigg\lfloor \frac{2\cdot \ell\cdot A + k\cdot (C+6) + 2k(n-1)(B+6)+n}{2\cdot A} \bigg\rfloor \\ &= \bigg\lfloor \frac{2\cdot \ell\cdot A}{2\cdot A} + \frac{ k\cdot (C+6) + 2k(n-1)(B+6)+n}{2\cdot A} \bigg\rfloor\\
		&= \bigg\lfloor \frac{2\cdot \ell\cdot A}{2\cdot A} + \frac{ k\cdot (C+6) + 2k(n-1)(B+6)+n}{2\cdot ( k\cdot (C+6) + 2k(n-1)(B+6)+n)} \bigg\rfloor  = \ell
	\end{align*}
	many set gadgets. 
	Let $S'\subset S$ denote the sets whose set gadgets the module has walked up and down. 
	Because all $m_j$ were able to move to their target positions, $S'$ forms a cover of size at most $\ell$.
\end{proof}

\begin{remark}
	\label{rem:size}
	Rewriting the estimates in~\cref{prop:nphard} shows that, given a sliding distance~$d$, we can determine the size of the cover $\ell$ by calculating $\ell=\lfloor \frac{d}{2 \cdot A}\rfloor$, or equivalently $\ell = \frac{d-R}{2 \cdot A}$ for some positive $R<((k-1)\cdot (C+6) + 2k(n-1)(B+6)+n)$.
	In particular, $\ell^\ast = \frac{d^\ast-R^\ast}{2 \cdot A}$ with $R^\ast<((k-1)\cdot (C+6) + 2k(n-1)(B+6)+n)$.
\end{remark}

Clearly, any valid sliding sequence in the \textsc{Immobile} variant is also valid for \parallelcubes.
What remains to be shown is that allowing multiple slides in parallel, without the immobile modules, does not lead to any improvement.
To this end, we attach a \newterm{spike} to many of the immobile modules such that removing an immobile module would disconnect its spike from the rest of the configuration, thereby violating the backbone constraint.
The simplest illustration of spike placement appears in~\cref{fig:path-gadget}, where spikes are positioned along a straight path.
Here, the placement alternates in a three-way pattern: one spike extending left, the next extending right, and the following extending downward.
This alternating pattern, however, cannot be applied uniformly throughout the construction.
At turns, as shown in~\cref{fig:turn-gadget}, we temporarily alternate spikes pointing up, down, and outward from the turn to maintain connectivity.

\begin{figure}[htb]
	\captionsetup[figure]{justification=centering}%
	\begin{subfigure}[b]{0.5\columnwidth-0.5em}%
		\centering%
		\includegraphics[width=0.7\columnwidth]{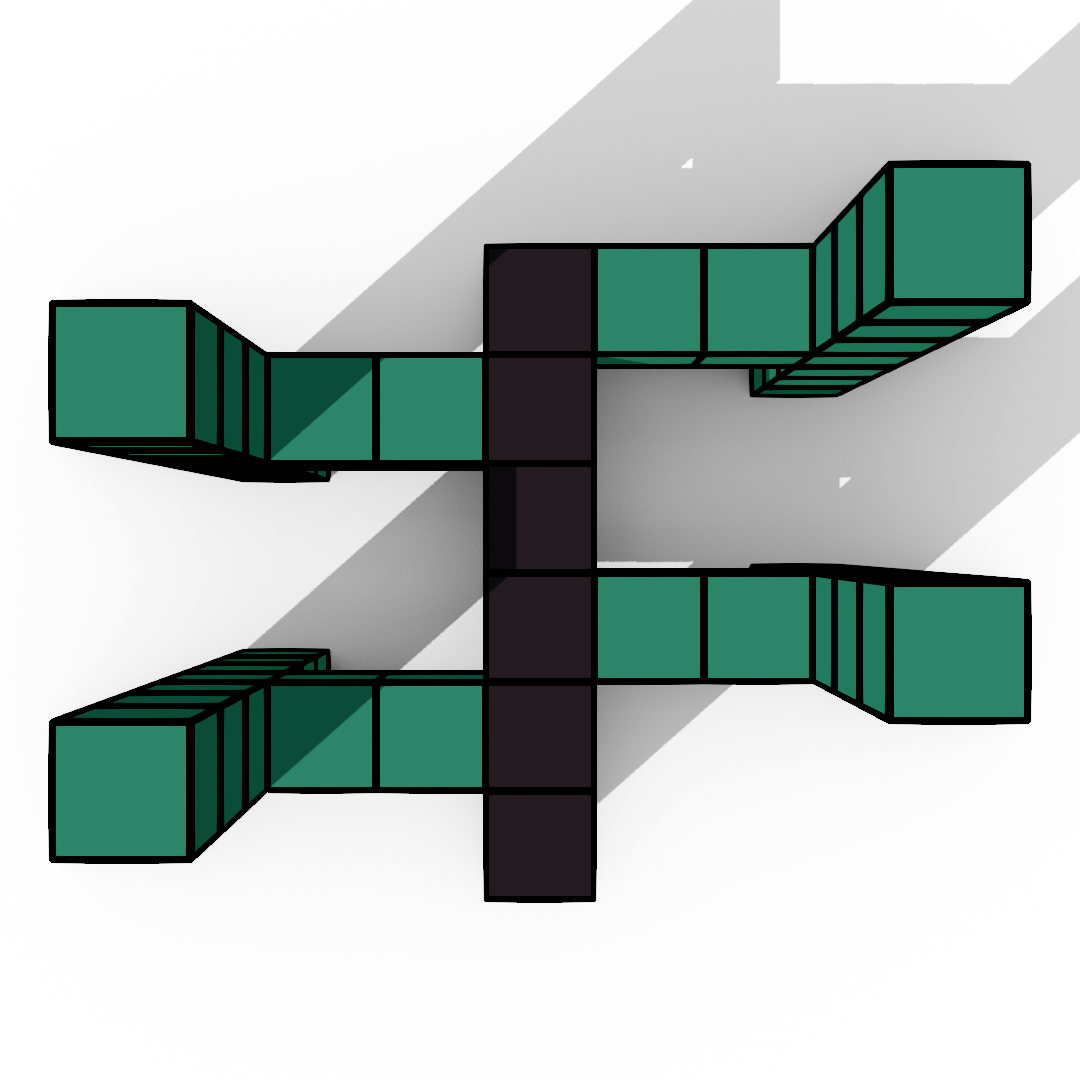}%
		\subcaption{A path gadget.}
	\end{subfigure}
	\hfill%
	\begin{subfigure}[b]{0.5\columnwidth-0.5em}%
		\centering%
		\includegraphics[width=0.7\columnwidth]{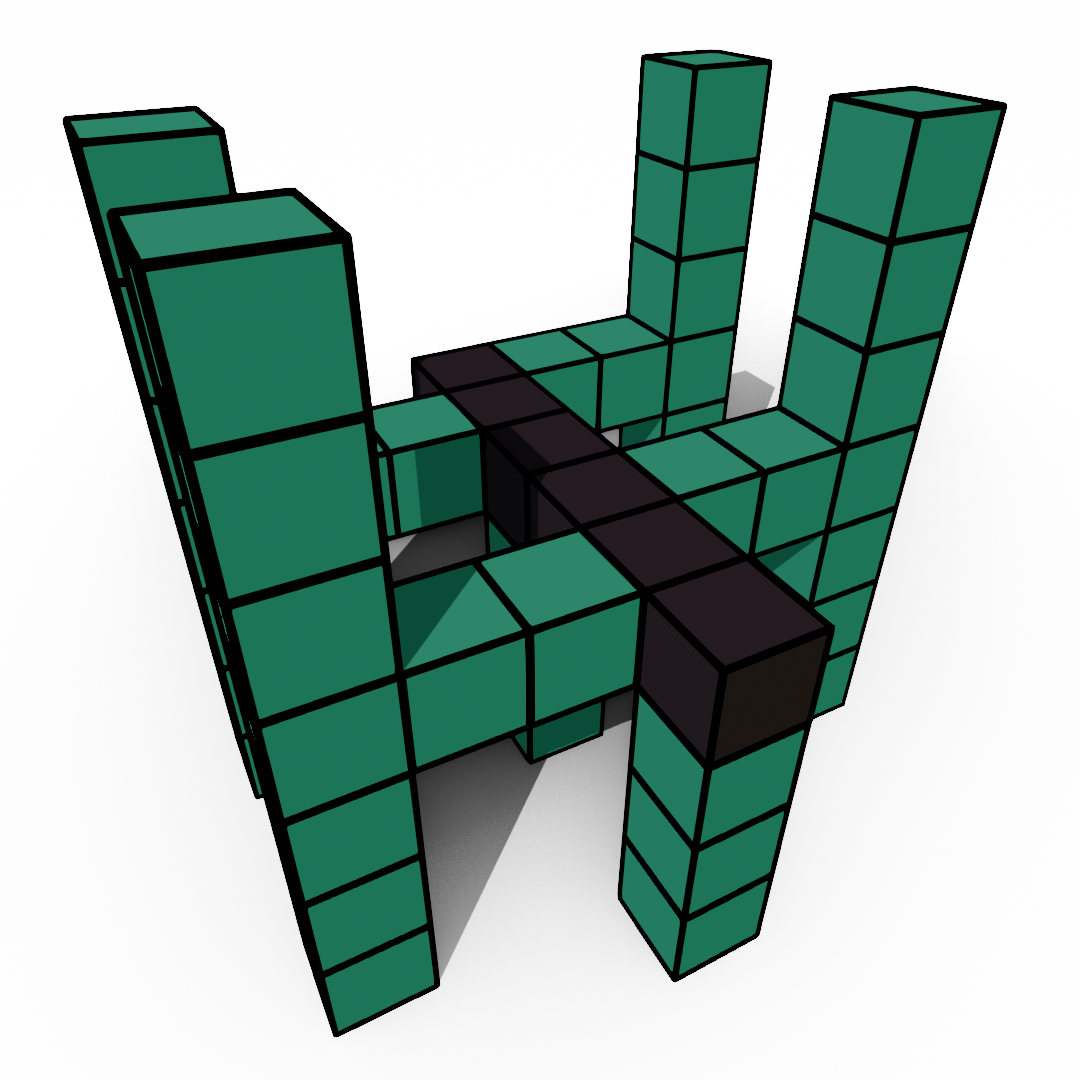}%
		\subcaption{A nicer view.}
	\end{subfigure}%
	\par%
	\vspace*{0.75cm}%
	\begin{subfigure}[b]{\columnwidth}%
		\hfil%
		\includegraphics[page=1]{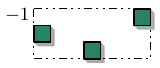}\hspace{2em}%
		\includegraphics[page=2]{figures/sc-reduction/path-gadget-diagram}\hspace{2em}%
		\includegraphics[page=3]{figures/sc-reduction/path-gadget-diagram}%
		\subcaption{Three layers of the path gadget.}
	\end{subfigure}
		\caption{The base component of our reduction is the path gadget.}
		\label{fig:path-gadget}
\end{figure}

\begin{figure}[htb]
	\captionsetup[figure]{justification=centering}%
	\begin{subfigure}[b]{0.5\columnwidth-0.5em}%
		\centering%
		\includegraphics[width=0.7\columnwidth]{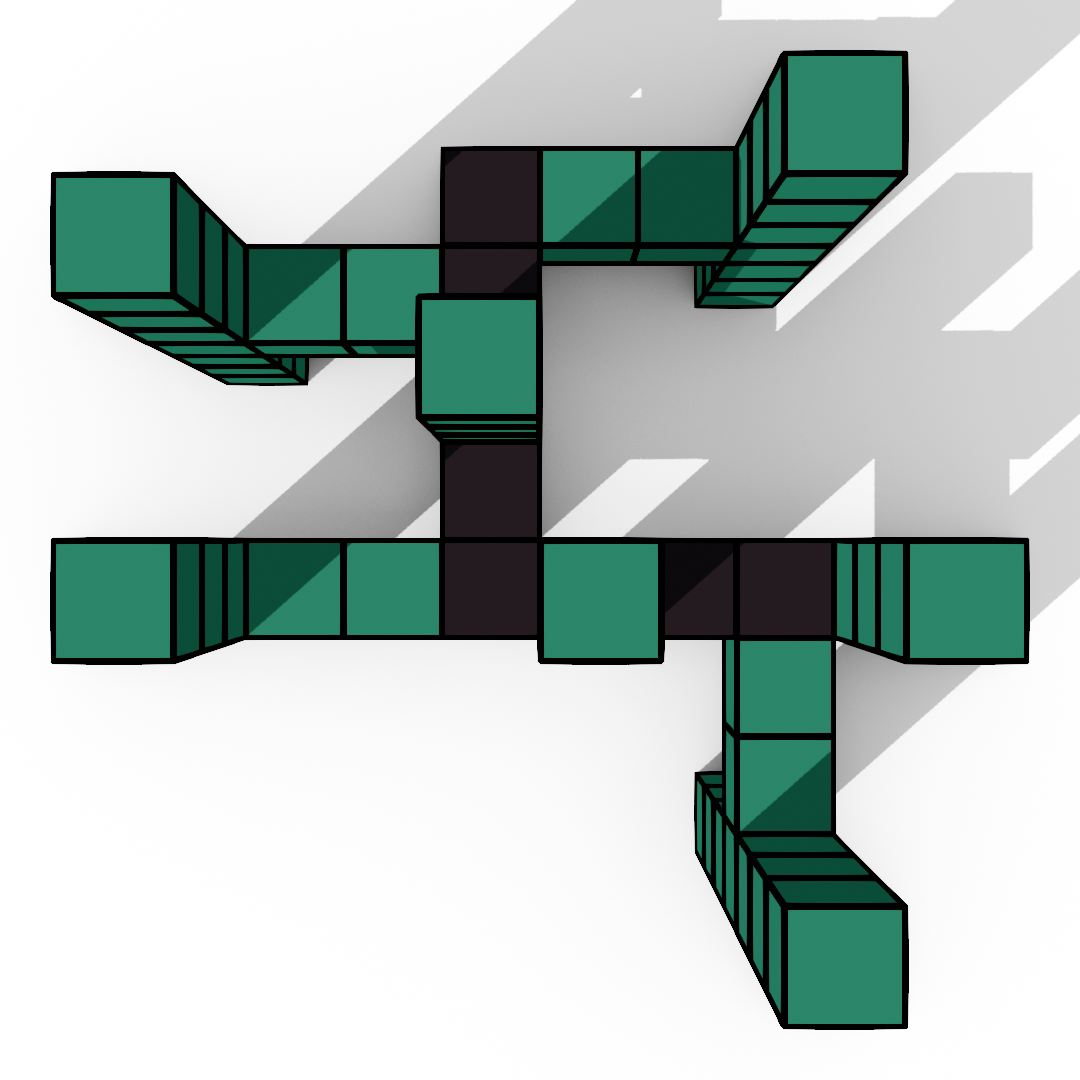}%
		\subcaption{A turn gadget.}
	\end{subfigure}
	\hfill%
	\begin{subfigure}[b]{0.5\columnwidth-0.5em}%
		\centering%
		\includegraphics[width=0.7\columnwidth]{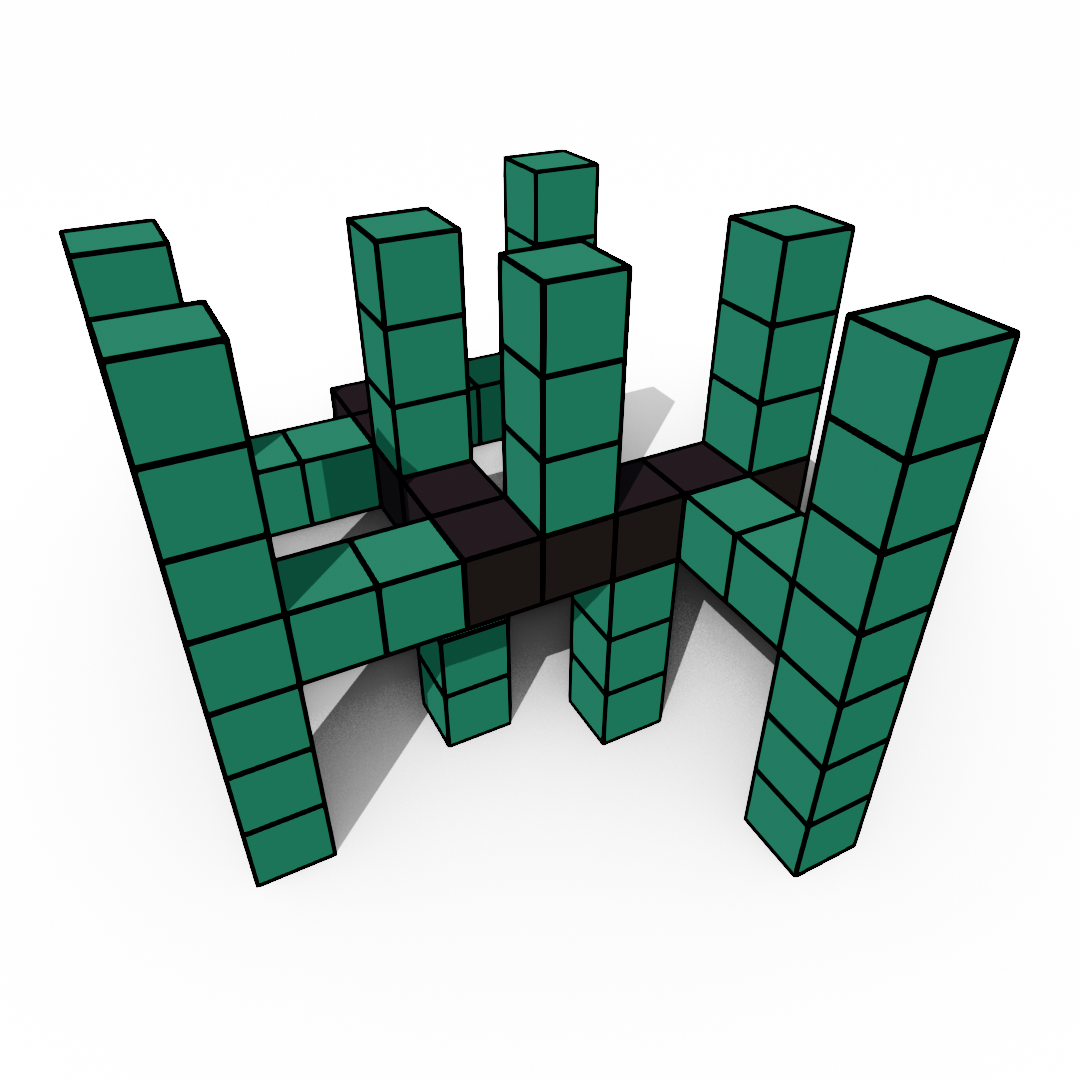}%
		\subcaption{A nicer view.}
	\end{subfigure}%
	\par%
	\vspace*{0.75cm}%
	\begin{subfigure}[b]{\columnwidth}%
		\hfil%
		\includegraphics[page=1]{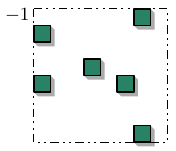}\hspace{2em}%
		\includegraphics[page=2]{figures/sc-reduction/turn-gadget-diagram}\hspace{2em}%
		\includegraphics[page=3]{figures/sc-reduction/turn-gadget-diagram}%
		\subcaption{Three layers of the turn gadget.}
	\end{subfigure}
		\caption{Turn gadgets connect path gadgets orthogonally}
		\label{fig:turn-gadget}
\end{figure}

Additional spike attachment methods are shown for crossings, tee-crossings, selectors, layer gadgets, and cover gadgets, respectively, with crossings specifically involving a horizontal and vertical path separated by a height difference of at least $3$ (see~\cref{fig:crossing-gadget,fig:tee-gadget,fig:selector-gadget,fig:layer-gadget,fig:path-gadget,fig:turn-gadget,fig:cover-gadget}).

\begin{figure}[htb]%
	\captionsetup[figure]{justification=centering}%
	\begin{subfigure}[b]{\columnwidth/2 -0.5em}%
		\centering%
		\includegraphics[width=\columnwidth*2/3]{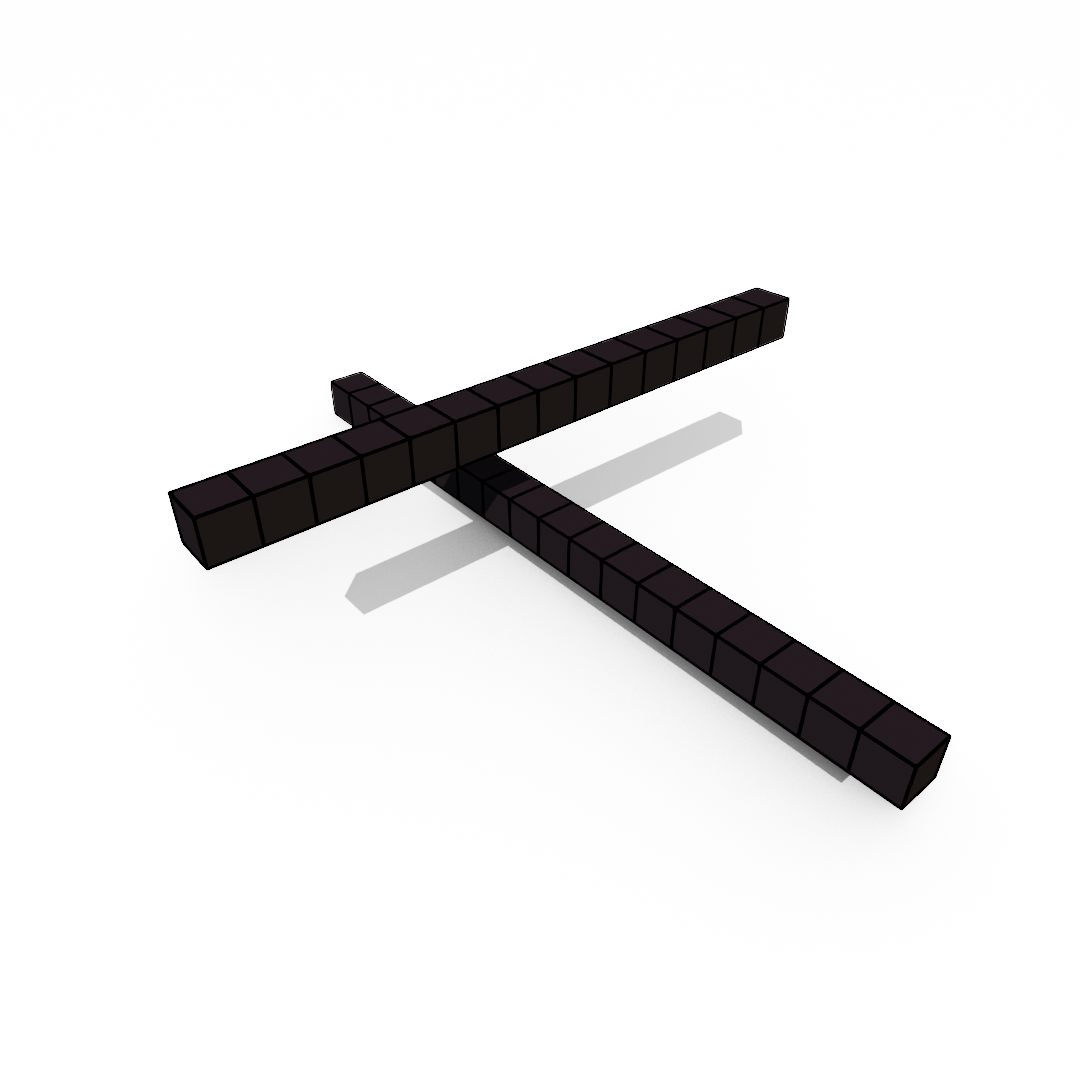}%
		\subcaption{Paths crossing at a vertical distance of 3...}
	\end{subfigure}
	\hfill%
	\begin{subfigure}[b]{\columnwidth/2 -0.5em}%
		\centering%
		\includegraphics[width=\columnwidth*2/3]{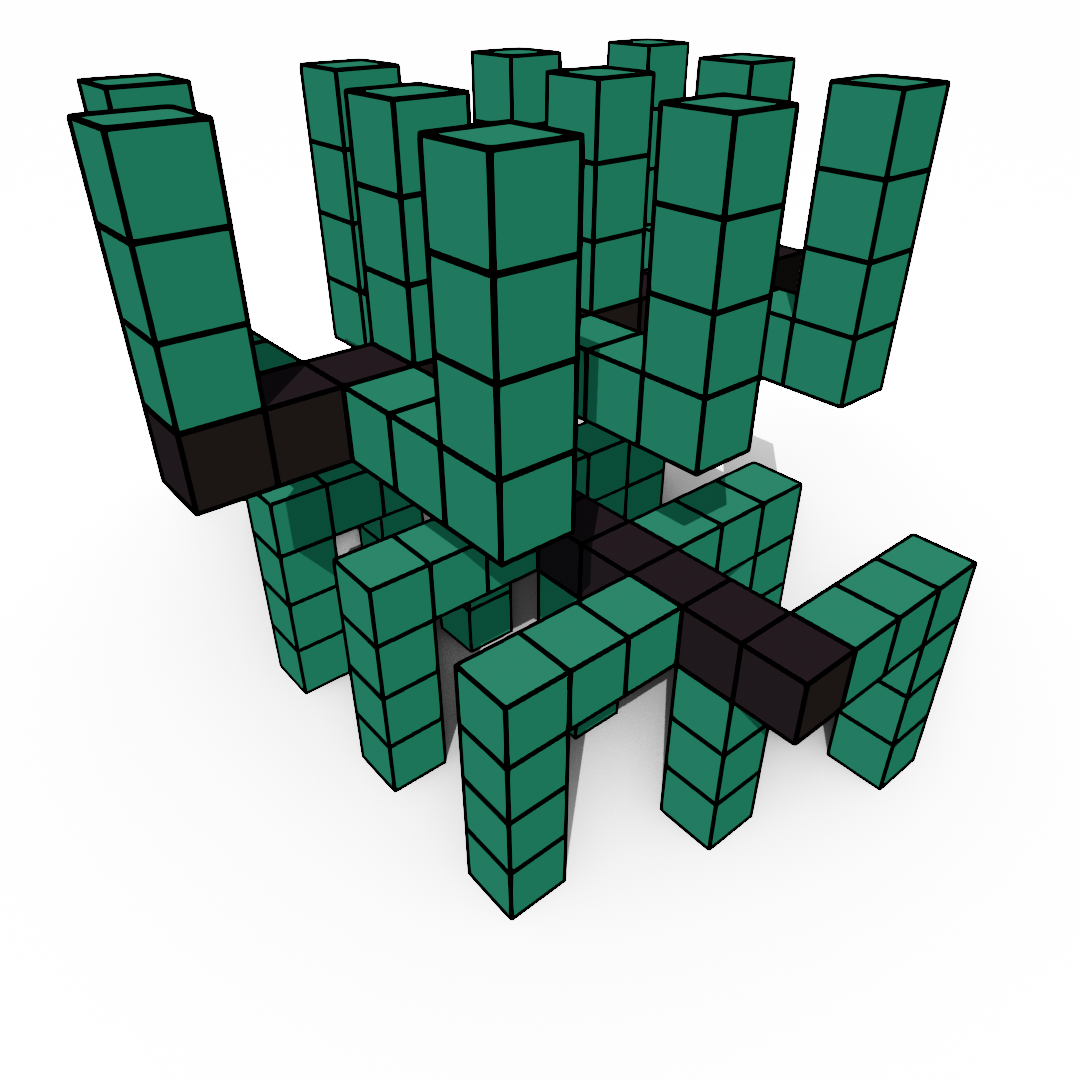}%
		\subcaption{...with lots of spikes attached.}
	\end{subfigure}%
	\par%
	\vspace*{0.75cm}%
	\begin{subfigure}[b]{\columnwidth}%
		\hfil\includegraphics[page=1]{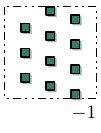}
		\hfil\includegraphics[page=4]{figures/sc-reduction/crossing-gadget-diagram}%
		\hfil\includegraphics[page=2]{figures/sc-reduction/crossing-gadget-diagram}
		\hfil\includegraphics[page=5]{figures/sc-reduction/crossing-gadget-diagram}%
		\hfil\includegraphics[page=3]{figures/sc-reduction/crossing-gadget-diagram}
		\hfil\includegraphics[page=6]{figures/sc-reduction/crossing-gadget-diagram}%
		\subcaption{Six layers of two crossing path gadgets.}
	\end{subfigure}
	\caption{For crossing gadgets, paths temporarily forego full vertical spikes, extending them either up or down.
		This weakens the spike constraint: Crossings have to be far away from free~modules.}
	\label{fig:crossing-gadget}
\end{figure}

In figures, all spikes are indicated by straight lines of length $3$ pointing upward and downward.
Once they extend beyond the instance illustrated in~\cref{fig:sc-reduction-diagram}, the spikes spread out to occupy additional space.
In particular, we require that the endpoint of each spike is at least $100 \cdot k \cdot A$ away from any module that does not belong to the same spike.
We~note that the instance in~\cref{fig:sc-reduction-diagram} can be described by a polynomial $p$ in the size of the original set cover problem, determined by the choices of $A$, $B$, and $C$.
There are at most $p$ immobile modules, each fixed by its own spike.
To ensure that the endpoint of each spike remains at the required distance from all modules of other spikes, one side of the large cube, as shown in~\cref{fig:projection}, must have an area of $(100 \cdot k \cdot A)^2 \cdot p$.
This area requirement remains polynomial in the size of the original set cover instance.

\begin{lemma}
	Without loss of generality, it is sufficient to consider valid sliding sequences for \parallelcubes that are also valid sliding sequences in the \textsc{Immobile} variant.
\end{lemma}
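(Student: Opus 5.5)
We need to show that any valid schedule in the spiked construction can be assumed to be a schedule in the \textsc{Immobile} variant. This has two parts: no immobile or spike module ever moves, and parallelism gives no advantage. We prove both by an exchange argument against the bound of \cref{lem:trivialbound}. Throughout, the schedules under consideration have makespan at most $L\coloneqq 2kA+k(C+6)+2k(n-1)(B+6)+n$, which is much smaller than the spike length $100\cdot k\cdot A$.

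\emph{Step 1 (spiked modules never move).} Fix a schedule of makespan at most $L$. Consider the first transformation in which some module $\module u$ that is immobile in the \textsc{Immobile} variant, or some spike module, moves.

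If $\module u$ is on a path, its spike is attached only through $\module u$, so moving $\module u$ disconnects the spike unless another connection already exists. Any such connection would have to reach the spike from elsewhere. Spike endpoints are at distance at least $100kA>L$ from every foreign module, and before this step no immobile module has moved. So the only modules available to build such a connection are mobile ones, and they cannot travel that far within $L$ steps.

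If $\module u$ belongs to a spike, note that spikes are width-1 chains, so every interior spike module is a cut vertex. Only the tip is free. Moving the tip is possible, but it cannot help: the tip is far from all other structure, and any sequence of moves that makes use of it must be undone or ignored. We would delete such moves, and this requires checking that the tip's moves do not serve as backbone. That holds because a moving module never provides connectivity.

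The delicate case is the crossings of \cref{fig:crossing-gadget}, where spikes are only half-length. There we argue that crossings lie at distance more than $L$ from $\module m$ and from every $\module m_j$, so no mobile module reaches them in time.

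\emph{Step 2 (from parallel to sequential).} After Step 1, the only modules that move are $\module m$ and the $\module m_j$. Each $\module m_j$ can move only while a selector for $j$ is bridged, and only $\module m$ can bridge a selector. Moreover, an $\module m_j$ cannot move before $\module m$ reaches its selector. We serialize the schedule, keeping the moves of $\module m$ in order and inserting each slide of $\module m_j$ right after the move of $\module m$ that bridges its selector.

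Simultaneous moves are therefore rare. Each simultaneous move of an $\module m_j$ adds at most one step when serialized, and there are at most $n$ such moves. This extra cost is absorbed by the slack $R$ of Remark~\ref{rem:size}, since $n<2A$. As a result, the ratio $\ell=\lfloor d/(2A)\rfloor$ is preserved, and the approximation-preserving argument of \cref{prop:nphard} goes through.

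We expect Step 1 to be the main obstacle. It requires a case analysis over every gadget type (path, turn, tee, selector, layer, cover, crossing) to confirm that each non-mobile module either carries a spike or is a cut vertex whose removal isolates a long chain. Immobile modules adjacent to the cover and selector gaps need particular care, because there the bridging by $\module m$ creates cycles locally. In those places we rely on the extra up- and down-spikes next to $\module m_j$ to keep the neighbours pinned.
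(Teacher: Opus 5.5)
Your Step~1 has a genuine gap, in the case the whole lemma rests on: a module $\module u$ of a path (or turn, tee, selector, layer, cover) gadget. You argue that moving $\module u$ would require its spike to be reconnected elsewhere, and you rule this out by distance, since spike endpoints are $100kA>L$ away from all foreign modules. But a reconnecting module does not have to touch the spike at its tip. It can touch any spike module, in particular those at the base, which are face-adjacent to cells right next to the path that $\module m$ walks along. So $\module m$ can get close enough, and the distance bound says nothing there. You also account only for the spike. Removing $\module u$ splits the path itself into two pieces as well, and that is the real obstruction, which is local.

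What is needed is a combinatorial property of each gadget, and this is how the paper argues. Every immobile or spike module $\module M$ satisfies one of three conditions. Either (i) removing it splits $C_I$ into three components, and any single added module is face-adjacent to at most two of them; or (ii) removing it leaves two components that can only be rejoined by a module in $\module M$'s own cell; or (iii) $\module M$ lies on a half-length spike at a crossing, the only place where the distance argument is used. Freeing $\module M$ requires a mobile module to act as backbone, so (i) and (ii) rule it out.

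The invariant you plan to verify instead, ``carries a spike or is a cut vertex isolating a long chain'', is too weak. A cut vertex can be bypassed by a single module in a diagonal cell (e.g.\ the corner of a turn if it had no spike), however long the separated chain is. This is exactly why spikes are arranged so that one extra module can rejoin at most two components.

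Two smaller points:
\begin{itemize}
\item Tip moves can be deleted because everything happening at a spike end stays within distance $L<100kA$ of it and never meets a module outside that spike. The reason you give, ``a moving module never provides connectivity'', does not work: a displaced tip module is stationary afterwards, and moving the tip frees the next spike module.
\item Your Step~2, which the paper's proof does not address, needs no slack argument. The set $\{\module m,\module m_j\}$ is never free, since $\module m_j$ is free only while $\module m$ rests in a selector gap linked to $j$, and $\module m$ bridges one selector at a time. So after Step~1 no two modules ever move in the same transformation.
\end{itemize}
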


\begin{proof}
	We show that for any sliding sequence that is shorter than the one obtained from~\cref{lem:trivialbound}, there exists another sliding sequence of length at most the same in which none of the modules designated as \emph{immobile} are moved.
	At the beginning of a sliding sequence, the only modules that can move freely are $\module m$ and those at spike endpoints.
	
	The modules at spike endpoints have a distance of $100\cdot k \cdot A$ to any module that is not part of the spike.
	Thus, in a shortest sliding sequence, they cannot be moved into any of the gadgets of $C_I$.
	Moreover, they cannot form cycles with modules beyond those in their own spike or free any modules near the gadgets.
	Thus, any non-trivial sliding sequence that moves end-of-spike modules can be replaced by an equivalent one that performs the same slides without moving those end-of-spike modules.
	The new sliding sequence still transforms $C_I$ into $C_F$ and does not exceed the length of the original.
	
	Thus, to free a module considered immobile in the \textsc{Immobile} variant, a mobile module must serve as a backbone for its removal.
	We claim that this is impossible.
	To see why, note that all immobile modules $\module M$ and spike modules fall into one of the following three categories:
	
	\begin{enumerate}[(i)]
		\item Removing $\module M$ from $C_I$ splits $C_I$ into three connected components, and any additional module can connect at most two of these components simultaneously.
		\item Removing $\module M$ from $C_I$ splits it into two connected components, and the only way to reconnect them is by placing a module at $\module M$’s original position.
		\item Removing $\module M$ from $C_I$ splits it into two connected components, and $\module M$ is part of a spike attached to a crossing gadget. Since all crossing gadgets are at least a distance of $100 \cdot k \cdot A$ away from any of the modules $\module m$ and $\module m_j$ for any $j=1,\ldots, n$, no sliding sequence shorter than the trivial one can move a mobile module far enough to serve the backbone constraint to free a module.
	\end{enumerate}
	
	In all cases, we conclude that for any shortest sliding sequence, there exists a sequence of the same length in which no immobile or spike modules are moved.
\end{proof}

\begin{figure}
	\centering
	\includegraphics[page=2]{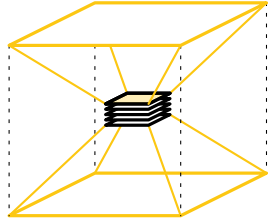}
	\caption{Schematic idea how to project spikes to the boundary of a large cube.}
	\label{fig:projection}
\end{figure}

In order to finally derive \logAPX-hardness of \parallelcubes, we perform a so-called \AP-reduction.

\begin{definition}[{\cite[pp. 257--258]{ausiello2012complexity}}]
	\label{def:ap}
	Let $P_1$ and $P_2$ be two \NP\ optimization problems. $P_1$ is \AP-reducible to $P_2$ if two functions $f$ and $g$ and a constant $\alpha\geq1$ exist such that
	\begin{itemize}
		\item for any instance $X$ of $P_1$ and any rational $r>1$, $f(X,r)$ is an instance of $P_2$.
		\item for any instance $X$ of $P_1$ and any rational $r>1$, if there is a feasible solution of $X$, then there is a feasible solution of $f(X,r)$.
		\item for any instance $X$ of $P_1$ and any rational $r>1$, and for any $Y$ that is a feasible solution of $f(X,r)$, $g(X,Y,r)$ is a feasible solution of $X$.
		\item $f$ and $g$ are polynomial-time computable.
		\item for any $X$ of $P_1$ and any rational $r>1$, and any feasible solution $Y$ for $f(X,r)$, a performance ratio of at most $r$ for $Y$ implies a performance ratio of at most $1+\alpha(r-1)$ for $g(X,Y,r)$.
	\end{itemize}
\end{definition}

In our case, $f$ corresponds to the reduction from \textsc{Set Cover} to \parallelcubes, and $g$ is the operation, in which we obtain a set cover from a sequence of slides as described in the proof of~\cref{prop:nphard}.
We require $f$ and $g$ to be polynomial-time computable.
For a regular \AP-reduction, a weaker requirement suffices: it is enough that $f$ and $g$ are polynomial-time computable for any fixed value of $r$.
This is a standard technique for proving that a problem is \APX-hard.
In contrast, to establish \logAPX-hardness, a stronger condition is necessary: the runtimes of $f$ and $g$ must be independent of $r$.
We further choose $\alpha = 4$.

\theoremLogAPX*

\begin{proof}
	Suppose we can approximate the length of the optimal sliding sequence within a factor of at most $r$.
	Then we can compute a sliding sequence of length at most $d<r\cdot d^\ast$ in polynomial time.
	By~\cref{rem:size}, this allows us to obtain a set cover of size
	\begin{align*}
		\ell &= \frac{d-R}{2 \cdot A} < \frac{r\cdot d^\ast - R}{2\cdot A} = \frac{r\cdot d^\ast - r\cdot R^\ast + r\cdot R^\ast - R}{2\cdot A}\\
		&=  \frac{r\cdot d^\ast - r\cdot R^\ast}{2\cdot A} +  \frac{r\cdot R^\ast - R}{2\cdot A} < r\cdot\ell^\ast + \frac{r\cdot R^\ast}{2\cdot A} < r\cdot \ell^\ast + \frac{r}{2}
	\end{align*}
	We now have an estimate for $\ell$ based on $\ell^\ast$ and $r$.
	The next step is to rewrite it in the form specified by~\cref{def:ap}, that is, $\ell \leq \ell^\ast( 1 + \alpha(r-1))$.
	
	To this end, we distinguish two cases.
	First, assume $(r-1)\geq \frac{1}{2\ell^\ast + 1}$.
	We make the following estimate
	
	\begin{align*}
		r\cdot \ell^\ast + \frac{r}{2} &= \ell^\ast + \frac{1}{2} + (r-1)\ell^\ast + \frac{(r-1)}{2} + (3\cdot\ell^\ast - 3\cdot\ell^\ast)\\
		&= \ell^\ast + 4(r-1)\ell^\ast + \frac{1}{2} - (r-1)(3\cdot\ell^\ast-\frac{1}{2}) \leq \ell^\ast ( 1 + \alpha(r-1))
	\end{align*}
	
	For the second case, assume that $(r-1)<\frac{1}{2\ell^\ast + 1}$.
	We estimate in the following way.
	
	\begin{align*}
		r\cdot \ell^\ast + \frac{r}{2} = \ell^\ast + \frac{1}{2} + (r-1)\ell^\ast + \frac{(r-1)}{2} < \ell^\ast + (r-1)(\ell^\ast + \frac{1}{2}) + \frac{1}{2} < \ell^\ast + 1
	\end{align*}
	
	In the second case, the estimate allows us to recover the optimal solution, while in the first case, the inequality holds as expected and required by the \AP-reduction.
	This concludes our proof.
\end{proof}

%% file: A04-teleport.tex
\section{Omitted details and missing proofs from~\cref{sec:teleport}}
\label{app:sec:teleport}

\lemmafreeexists*
\begin{proof}
	Let $\mathcal{F}(C)$ denote the set of module faces comprising the outer extended face of configuration $C$. We proceed by induction on $|\mathcal{F}(C)|$.
	
	\descriptionlabel{Base case:} 
	 Consider a configuration $C$ where $|\mathcal{F}(C)| \leq 3$. This is only possible if $C = \mathcal{F}(C)$. If $|\mathcal{F}(C)| = 1$ then $C$ is one module and it is trivially free. Otherwise, $C$ has two free modules, as every connected configuration does (Lemma 6 \cite{rus.vona2001crystalline-robots}).
	 Since $C = \mathcal{F}(C)$ there are two free modules in $\mathcal{F}(C)$. 
	
	Now assume that for any $C$ with an outer extended face of size $ |\mathcal{F}(C)| < k$, there exists two free modules in $\mathcal{F}(C)$ (unless $|C| = 1$).
	Consider a configuration $C$ with $|\mathcal{F}(C)| = k$. We identify two extremal modules, the highest, leftmost, topmost module $\module m_{max}$ and the lowest, rightmost, bottom most module $\module{m}_{min}$. Due to their extremal nature, both modules belong to the extended outer face.
	
	We then check the following:
	\begin{itemize}
		\item If either $\module{m}_{min}$ or $\module{m}_{max}$ is free, the lemma holds.
		\item If both are not free, consider $C \setminus \module{m}_{max}$, which has at most three components, as $\module m_{max}$ could only have three neighbors $\module a , \module b , \module c$.
	\end{itemize}
	
	We select one of these components, without loss of generality, say the one containing $\module a$, $K_a$.
	If $\module m_{max}$ was a cut module, then at least one of $\module b$ or $\module c$ are not elements of $K_a$. 
	Again, without loss of generality, say $\module b \notin K_a$, or, $K_b \cap K_a = \emptyset$. 
	We claim that $\mathcal{F}(K_a) < \mathcal{F}(C)$. 
	Observe, $\module a$ and $\module b$ now form a pinched edge in $C \setminus {\module m_{max} }$. 
	$\module a$ and $\module b$ have two cells in their shared neighborhood, one of which was occupied by $\module m_{max}$, the other we call $z$. The latter, $z$, must be empty, or else $\module a$ and $\module b$ would be connected in $C \setminus {m_{max}}$.
	If $\mathcal{F}(K_a)$ was in fact larger than $\mathcal{F}(C)$, then the face of $a$ that borders $z$, call it $f_z$, must be an element of $\mathcal{F}(K_a)$ but not of $\mathcal{F}(C)$.
	 
	For a contradiction suppose that this is the case ($f_z \in \mathcal{F}(K_a)$ but $f_z \notin \mathcal{F}(C)$).
	Clearly, $\module b$ and $\module c$ are adjacent to $f_z$ as well.
	Consider the shared edges between $\module b, \module c$ and $\module a , \module c$.
	If~$f_z \notin \mathcal{F}(C)$ then neither of these edges cannot be pinch edges in $C$. 
	However, this is only possible if $\module a$ and $\module c$ have a shared neighbor as well as $\module b$ and $\module c$.
	But if these shared neighbors existed, then $\module a$, $\module b$, and $\module c$ would all be connected and $K_a \cup K_b \neq \emptyset$, a contradiction.
	Hence, $|\mathcal{F}(K_a)| < |\mathcal{F}(C)|$ and our inductive hypothesis applies.  
	$\mathcal{F}(K_a)$ has at least two free modules. 
	Therefore at one of them is not $\module a$, and so that module is free in $C$. 
	We can also induct on~$K_b$ to get another free module, satisfying our inducting hypothesis.
\end{proof}

\lemmaTeleport*

\begin{proof}
	We use induction on $|S|^2+\delta_F(\module m, e)$ where $\delta_F(\module m, e)$ denotes the distance between $\module m$ and $e$ along the extended face $F$ of  $S\setminus\{\module m\}$ containing both. That is, in the adjacency graph of module faces induced by $F$, $\delta(\module m, e)$ is the length of the shortest path between the face of~$\module m$ in $F$ and the face of $e$ in $F$.
	Note that $|S|$ is an upper bound for $\delta_F(\module m, e)$.
	The base case is when $\module m$ is slide-adjacent to $e$ and $\module m$ can move to $e$, in which case we are done.
	Note that the case when $\module m$ is adjacent to $e$ and $\module m$ can perform a convex transition through $e$ cannot happen since that would make  $(C\setminus\{\module m\})\cup\{e\}$ disconnected.
	
	We now proceed with a case analysis for the inductive step.
	Note that if there is a sequence of moves bringing $\module m$ to $e$ on the surface of $S$ the claim is trivial. Thus, below we assume no such schedules exist, i.e., all such paths are blocked.
	Refer to \Cref{fig:teleport}.
	
	\begin{figure}
		\centering
		\includegraphics[page=2]{teleport/teleport-cases-shaded.pdf}
		\caption{Cases for the proof of~\cref{lem:teleport}.}
		\label{fig:teleport}
	\end{figure}
	
	\begin{description}
		\item[Case 1.]  $\module m$ is slide-adjacent to an empty cell $c$ with $\delta_F(c, e)<\delta_F(\module m, e)$. Then we move $\module m$ to $c$ and the inductive hypothesis holds.
		
		\item [Case 2.] The path from $\module m$ to $e$ is blocked by a module $\module b\in S$.
		Then, $\module b$ has a pinched edge in $S\setminus\{\module m\}$  with another module  $\module c\in S$ and $\module m$ is adjacent to both.
		Let $e'$ be the other cell adjacent to $\module b$ and $\module c$. Note that  $e'\notin C$ or else $\module m$ to $e$ would not be in the same extended face.
		If $\module b$ is free, slide it to $e'$ and slide $\module m$ to $\module b$'s original position. Then the inductive hypothesis holds by swapping the labels $\module m$ and  $\module b$.
		Else, let $S'$ be a cut component of $\module b$ that does not contain $\module c$ or $\module m$.
		We can find a free module $\module m'\neq \module b$  in $S'$ by \Cref{lem:free-existance}.
		We can then use inductive hypothesis in $S'$ to teleport $\module m'$ to $e'$ and then to teleport $\module m$ to the original position of $\module m'$, effectively teleporting $\module m$ to $e'$.
		Then, the inductive hypothesis holds since $\delta_F(e', e)<\delta_F(\module m, e)$.
		
		\item [Case 3.] The path from $\module m$ to $e$ is blocked by a module $\module b\in C$.
		Let $\module c\in S$ be the module of $S$ that $\module m$ is adjacent to when blocked by $\module b$.
		Either $\module m$ is already adjacent to $\module b$ or it can slide to be adjacent to $\module b$.
		Thus we can assume that $(C\setminus S)\cup\{\module m\}$ is connected.
		We split into subcases.
		
		\begin{description}
			\item [Case 3a.] $\module c$ is free. Then, if $\module c$ and $\module b$ form a pinched edge in $C\setminus\{\module m\}$, let $e'$ be the empty cell adjacent to both $\module c$ and $\module b$. Note that $\delta_F(e', e)<\delta_F(\module m, e)$ by the definition of $\module b$.
			We can then move $\module c$ to $e'$ and $\module m$ to $\module c$'s original position, effectively teleporting $\module m$ to $e'$.
			Then, the inductive hypothesis holds.
			If $\module c$ and $\module b$ do not form a pinched edge, then an even simpler procedure works.
			We can use the inductive hypothesis to move $\module c$ to $e$, and afterwards move $\module m$ to $\module c$'s original position.
			
			\item [Case 3b.] $\module c$ is a cut module in $S\setminus \{\module m\}$ and a component $S'$ of $S\setminus \{\module c, \module m\}$ that is not adjacent to $e$ is adjacent to $C\setminus S$.
			Then $C \cup S'$ is a connected configuration. We can then apply the inductive hypothesis to $S \setminus S'$ teleporting $\module m$ to $e$ on an extended face of $S\setminus S'$.
			
			\item [Case 3c.] $\module c$ is a cut module in $S \setminus \{\module m\}$ and all components $S'$ of $S\setminus \{\module c, \module m\}$ that are not adjacent to $e$ are also not adjacent to $C\setminus S$.
			Then, $\module m$ is either adjacent to a $S'$ or can move to be adjacent to a $S'$.
			We apply the inductive hypothesis to teleport a free module $\module m'\in S'$ (which exists by \Cref{lem:free-existance}) to $e$ in $S\setminus\{m\}$.
			We then use it again to teleport $\module m$ to the original position of $\module m'$ on the extended face of $S'\cup\{\module m\}$.
			
			\item [Case 3d.] $\module c$ is a not a cut module in $S\setminus \{\module m\}$, which means that its deletion separates two components: $C'=(C\setminus S)\cup \{m\}$ and  $S'=S\setminus \{\module c, \module m\}$.
			Then, there is an empty position $e'$ adjacent to both $C'$ and  $S'$, or else a direct path from $\module m$ to $e$ would exist.
			This cell must be closer to $e$ because it is adjacent to $S'$.
			If $|S|\ge3$, we can apply the inductive hypothesis to teleport a free module $\module m'\in S'$ to $e'$ on $S\setminus\{\module m\}$.
			We can apply the inductive hypothesis again to teleport $\module m$ to the original position of $\module m'$ on the extended face of $S\setminus\{\module m'\}$.
			That effectively teleported $\module m$ to $e'$ and since $\delta_F(e', e)<\delta_F(\module m, e)$, the inductive fills $e$.
			Finally, if $|S|=3$, moving $\module m'$ to $e'$ disconnects $S$. We can obtain the same result as above by moving $\module m'$ to $e'$,  $\module c$ to $\module m'$'s original position, and $\module m$ to $\module c$'s original position.
		\end{description}
	\end{description}
	
	We can directly transform the induction above into a recursive algorithm.
	To do so, let $T(|S|,\delta_F(\module m, e))$ be the runtime of the recursive algorithm.
	The worst case recurrence occurs in Case 3d, when:
	\[
	T(|S|,\delta_F(\module m, e)) \le 2\cdot T(|S|-1,|S|-1) + T(|S|,\delta_F(\module m, e)-1).
	\]
	
	However, in that case the $T(|S|,\delta_F(\module m, e)-1)$ term is in reality linear on $|S|$ since we have the guarantee of a direct path from $\module m'$ to $e'$.
	Thus the number of moves performed is~$2^{\mathcal{O}(|S|)}$.
\end{proof}

%% file: A05-algorithm.tex
\section{Omitted details and missing proofs from~\cref{sec:alg}}
\label{app:sec:alg}

\subsection{Finding ${T}_\module{s}$}
\label{app:sec:alg:Ts}

\begin{lemma}
	\label{lem:subtree}
	Given a constant $t\in [n]$, ${T}$ contains a subtree ${T}_\module{p}$ rooted at a module $\module{p}$ so that $t\le |{T}_\module{p}|\le 5 t$.
\end{lemma}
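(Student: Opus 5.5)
The plan is the standard ``heaviest-light-child'' argument on a rooted tree of bounded degree. The only input needed from the geometry is that the dual graph of $C$ has maximum degree $6$, because a cube has six faces. Hence every module has at most $6$ neighbours in $T$. A non-root module uses one of these for its parent, so it has at most $5$ children. I will use this degree bound together with the subtree-size function $|T_\module{q}|$, where $T_\module{q}$ denotes the subtree of $T$ consisting of $\module q$ and all its descendants.

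Since $t\le n=|T_{\text{root}}|$, the set of modules $\module q$ with $|T_\module{q}|\ge t$ is nonempty and closed under taking ancestors. I choose $\module p$ to be a deepest module in this set. Then every child $\module q$ of $\module p$ satisfies $|T_\module{q}|\le t-1$, so
\[
  t\le |T_\module{p}| \le 1 + \deg^{+}(\module p)\,(t-1),
\]
where $\deg^{+}$ is the number of children. If $\module p$ is not the root, then $\deg^{+}(\module p)\le 5$, which gives $|T_\module{p}|\le 5t-4\le 5t$, as claimed.

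The only obstacle is the case where $\module p$ is the root of $T$ and has six children. In that case the bound degrades to $6t-5$, which exceeds $5t$ when $t>5$. I handle this by exploiting the freedom in the construction of $T$: the root of the root component of $\mathcal{T}$ is an arbitrary module, and DFS from it is otherwise unconstrained. I would fix this choice so that the root of $T$ has tree-degree at most $5$. For instance, I would first build the tree and then re-root it at a leaf of $T$. A leaf has degree $1$, and re-rooting at it keeps every other node at $\le 5$ children, because each node's number of children equals its degree minus one. I must check that re-rooting is compatible with the meta-cell structure used later. The subtree property in this lemma only concerns $T$ as an abstract spanning tree of the dual graph, so I would argue that later uses only need $T_\module{p}$ to be connected and of size $\Theta(t)$, which re-rooting preserves.

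Alternatively, to avoid re-rooting, in the bad case I would take $\module p$ to be the root itself when $n\le 5t$. If $n>5t$, then some child subtree of the root must have size at least $(n-1)/6$. In that case, rather than stopping at the root, I descend into the largest child and rerun the deepest-node argument inside it. This works as long as that child subtree has size $\ge t$. Since the root was chosen as deepest with size $\ge t$, this route fails, so the re-rooting fix is the one I would commit to.
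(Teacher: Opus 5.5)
Your core argument is the paper's bounded-branching argument read bottom-up. The paper descends from the root into a child of size $\ge t$ while the current subtree exceeds $5t$, using pigeonhole over at most $5$ children. You instead take a deepest module with $|T_{\module p}|\ge t$ and bound $|T_{\module p}|\le 1+5(t-1)$. These are two sides of the same count.

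The real difference is the root. The paper simply asserts that the root has at most $5$ children, which its construction does not guarantee. Your worry is justified: take a module with six arms of $t-1$ modules each, rooted at its centre. Its subtree sizes are $6t-5$ and at most $t-1$, so for $t\ge 6$ no subtree lies in $[t,5t]$. You found a real gap that the paper's proof skips over.

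Your repair, re-rooting $T$ at a leaf after it is built, gives a tree that the paper's construction would no longer produce. That construction runs DFS inside each meta-cell component from a designated root and attaches component roots to their parent components. Your claim that later uses need only connectivity and size is therefore too quick: the proof of \cref{lem:grow-ball} explicitly invokes property~(\textsf{M}), which is derived from that construction.

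A cleaner way to use the freedom you identified is to fix the root \emph{before} building $T$. Pick a module with at most five face-neighbours, e.g.\ a topmost module of $C$. Make its component the root of $\mathcal{T}$ and start that component's DFS there. Since $T$ is a subgraph of the dual graph, this root has at most $5$ children, as does every other module, so your bound $|T_{\module p}|\le 5t-4$ holds unchanged and $T$ is still a genuine output of the construction. Alternatively, accept $6t$ in place of $5t$, which suffices for every use of the lemma in the paper.

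Your final ``alternatively'' paragraph is self-refuting as written and can simply be dropped.
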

\begin{proof}
	The claim follows from the fact that the maximum degree in ${T}$ is 6.
	If $n/5\le t\le n$, $T_\module r$ satisfies the claim where $\module r$ is the root of $T$.
	Else, $\module r$ has a child $\module p$ with $|T_\module p|> t$ since $\module r$ has at most 5 children.
	We can then recurse on $T_\module p$.
\end{proof}

We use \Cref{lem:subtree} to obtain a tree ${T}_\module{s}$ of size $\Theta(A+h)$.

\subsection{Omitted proofs}

\lemmaGrowball*

\begin{proof}
	Let $(\module m, \module m_1, \module m_2, \module m_3, \module m_4)$ be a path in $T_\module{m}$ (which exists due to the size of $T_\module m$).
	We first claim that we can build a ball of radius 2 centered at either $\module{m}_1$, $\module{m}_2$, or $\module{m}_3$.
	Note that by construction, $T$ has the following property:
	
	\begin{enumerate}[\hspace{2em}]
		\item[(\textsf{M})] \label{prop:M} Let $T_\module m$ be a subtree of $T$ rooted at $\module m$. If $\module p \in T_\module m$ is adjacent to a module $\module q$ in the same meta-cell as $\module p$, then either $\module q$ is the parent of $\module p = \module m$, or  $\module q \in T_\module m$.
	\end{enumerate}
	
	\begin{claim}
		\label{cl:ball-2}
		Within $\mathcal{O}(1)$ makespan, there is a schedule fills an $L_1$ ball of radius 2 centered at either $\module{m}_1$, $\module{m}_2$ or $\module{m}_3$.
	\end{claim}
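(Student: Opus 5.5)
We fill the radius-2 ball by teleporting modules of $T_\module{m}$ into its empty cells one at a time via \Cref{lem:teleport}. Each teleport uses a sub-configuration $S\subseteq T_\module{m}$ of constant size, so it costs $2^{\mathcal{O}(1)}=\mathcal{O}(1)$ makespan. The $L_1$ ball of radius $2$ has $25$ cells, so at most $25$ teleports are needed. The total makespan is therefore $\mathcal{O}(1)$, and no module outside $T_\module{m}$ moves.

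\textbf{Key invariant.} We maintain a connected set $S\subseteq T_\module{m}$ with $C\setminus S$ connected. We take $S$ to be the subtree of $T_\module{m}$ hanging below the path, i.e.\ everything except what is needed to keep the rest of $T$ attached through $\module m$. Property (\textsf{M}) is what makes this work: any module of $T\setminus T_\module{m}$ adjacent to a module of $T_\module{m}$ in the same meta-cell is the parent of $\module m$. So the only connections from $T_\module{m}$ to the rest of the configuration inside a meta-cell go through $\module m$, or cross meta-cell boundaries.

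\textbf{Choice of center.} Among $\module m_1,\module m_2,\module m_3$, we pick a center whose radius-2 ball avoids cells occupied by $C\setminus T_\module{m}$ that would block filling. This choice is where the case analysis lives. A foreign module $\module q\notin T_\module{m}$ may lie near the path, and by (\textsf{M}) such a $\module q$ must lie in a different meta-cell from its neighbours in $T_\module{m}$. A meta-cell boundary is a plane, and the path $\module m_1\module m_2\module m_3$ of length $2$ can cross at most a bounded number of such planes. I would argue that for at least one of the three centers, every foreign module in its radius-2 ball either is harmless, because it is already an occupied cell of the ball and counts as filled, or can be routed around. The essential case to rule out is foreign modules that disconnect the target empty cells from the extended face containing $S$.

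\textbf{Filling procedure.} We fill the empty cells of the chosen ball in order of increasing $L_1$ distance from the center. Each new cell $e$ is then adjacent to the current grown ball, which keeps $(S\setminus\{\module m'\})\cup\{e\}$ connected. For each $e$, we take a free module $\module m'$ of $S$ lying in the outer extended face, which exists by \Cref{lem:free-existance}, and is chosen to be neither on the ball nor on the path. We then apply \Cref{lem:teleport} to move $\module m'$ to $e$. The hypotheses to verify are that $e$ lies in the same extended face as $\module m'$, and that $(C\setminus\{\module m'\})\cup\{e\}$ is connected; the latter is immediate since $e$ is adjacent to the ball. The size bound $|T_\module{m}|\ge 2625$ leaves plenty of modules, so we never run out.

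\textbf{Main obstacle.} I expect the main obstacle to be the face condition. A foreign module may enclose an empty ball cell in a pocket not on the extended face of $S$. The first thing I would try is to show that such enclosure requires foreign modules on both sides of the path within one meta-cell, contradicting (\textsf{M}). Failing that, shifting the center along the path to $\module m_2$ or $\module m_3$ should escape the pocket.
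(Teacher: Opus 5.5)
\textbf{There is a genuine gap.} Your overall scheme matches the paper's: teleport modules of $T_\module{m}$ one at a time into a constant-size ball via \Cref{lem:teleport}, with (\textsf{M}) confining foreign modules to neighbouring meta-cells. But the part that carries all the difficulty (which centre works, and how every cell gets filled) is only announced, and the obstacle you plan to attack is not the real one.

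The failing step is in your filling procedure. You say that since each new cell $e$ is adjacent to the current grown ball, $(S\setminus\{\module{m}'\})\cup\{e\}$ is connected. But the grown ball contains foreign modules, and $e$ may touch only those. Concretely, let $\module{m}_1$ lie on the bottom face of its meta-cell, and let the module at $\module{m}_1+(0,0,-1)$ not belong to $T_\module{m}$. This is allowed by (\textsf{M}), because it sits in a different meta-cell. The ball cell $e=\module{m}_1+(0,0,-2)$ then has that foreign module as its only neighbour inside the ball. Its remaining neighbours lie in the lower meta-cell and may all be empty. So $e$ is not adjacent to any module of $S$, and \Cref{lem:teleport} with $S\subseteq T_\module{m}$ simply does not apply, regardless of any extended-face or pocket issue.

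Re-centring does not rescue this. If $\module{m}_1,\module{m}_2,\module{m}_3$ all run along that face with foreign modules directly below them, every candidate ball contains such a cell. Your self-imposed rule that no module outside $T_\module{m}$ ever moves is exactly what blocks the fix.

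The paper resolves this by letting the foreign module take part. Call it $\module{m}'$. The paper first fills every other cell of $\mathcal{B}_1$. Then $\module{m}'$ is free, since each of its other neighbours is now adjacent to $S_\module{m}$, and $\module{m}$, through which $S_\module{m}$ hangs off the rest of $T$, is not a neighbour of $\module{m}'$. Moreover $\{\module{m}_1,\module{m}'\}$ is free. So \Cref{lem:teleport} with $S=\{\module{m}_1,\module{m}'\}$ fills $e$: in effect $\module{m}'$ drops into $e$ and $\module{m}_1$ takes its cell, so all foreign cells stay occupied. Afterwards $\module{m}_1$'s old cell is refilled from $S_\module{m}$.

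The paper's case analysis is organised by the number of foreign face-neighbours of $\module{m}_1$. By (\textsf{M}), this is at most the number of meta-cell faces $\module{m}_1$ touches:
\begin{itemize}
\item none, where your argument essentially works;
\item one, handled by the trick above;
\item two or three, i.e.\ $\module{m}_1$ on an edge or corner of its meta-cell.
\end{itemize}
Only the last case uses $\module{m}_2$ and $\module{m}_3$. It has subcases on whether $\module{m}_1+(0,-1,-1)$ is full and whether the foreign neighbours are free, and it uses a convex transition by the module below $\module{m}_2$. None of this appears in your proposal, so as written it proves the claim only when the centre has no foreign neighbours.
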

	\begin{claimproof}
		Let $\mathcal{B}_1$ and $\mathcal{B}_2$ denote the set of cells corresponding to the
		$L_1$ ball of radius 2 centered at $\module{m}_1$ or $\module{m}_2$, respectively.
		Let $S_\module m$ be the set of modules originally in $T_\module{m}$.
		We use the following case analysis.
		
		\begin{description}
			\item[Case 1.] Assume that $\module{m}_1$ is not on the boundary of its meta-cell.
			Then, every cell in $N(\module{m}_1)$ must be either empty or in $T_\module{m}$, by the definition of $T$.
			If they are empty, we can apply \Cref{lem:teleport} with $S = S_\module{m}$ and a free module outside of the component of $T_\module{m}\cap \mathcal{B}_1$ containing $\module m$, which must exists due to the size of $T_\module{m}$.
			Thus, we can assume that $N(\module{m}_1)$ is full with modules in $S_\module m$.
			Similarly, if there are cells in $\mathcal{B}_1$ that remain empty, we can apply \Cref{lem:teleport} to fill them since every such empty cell is adjacent to $S_\module m$.
			
			\item[Case 2.] If all neighbors of $\module{m}_1$ that are outside of its meta-cell are either empty or in $S_\module m$, the same arguments of the first case apply, and we can apply \Cref{lem:teleport} until $\mathcal{B}_1$ is full.
			
			\item[Case 3.] Assume, without loss of generality, that $\module{m}_1$ is on the bottom face of its meta-cell, and that its bottom neighbor $\module m'$ is its only neighbor which is a module not in $S_\module m$.
			Then, similarly to case 1, by applying \Cref{lem:teleport} we can fill every position of $\mathcal{B}_1$, except for the cell $e$ below $\module m'$.
			If $e$ is already full we are done.
			Else, note that $\module m'$ is free: each of its original neighbors not in $S_\module m$ are now adjacent to a module in $S_\module m$, and $S_\module m$ is connected to the remaining configuration through $\module m$ which cannot be a neighbor of $\module m'$.
			Also note that  $\{\module m_1,\module m'\}$ is free since $\mathcal{B}_1$ is full except for $e$.
			Thus, we can apply \Cref{lem:teleport} with $S = \{\module m_1,\module m'\}$ to teleport $\module m_1$ to $e$.
			This leaves the original cell of $\module m_1$ empty, and $S_\module m \setminus\{\module m_1\}$ still connected.
			Since this cell is adjacent to $S_\module m \setminus\{\module m_1\}$, we can fill it with \Cref{lem:teleport}.
			
			\item[Case 4.] Assume, without loss of generality, that $\module{m}_1$ is on the bottom-front edge of its meta-cell, and that its bottom ($(-z)$-direction) and front ($(-y)$-direction) neighbors $\module m_b$ and $\module m_f$ are its only neighbors not in $S_\module m$.
			Note that cases 1--3 also apply for $\module{m}_2$. Therefore, we can also assume $\module{m}_2$ is on bottom-front edge of its meta-cell, else we apply one of the previous cases to $\module{m}_2$.
			We also assume that $\module{m}_2$ has exactly two neighbors not in $S_\module m$. (If it has only 1, case 3 applies.)
			Using the same argument as case 1, we can fill all positions in $\mathcal{B}_1$ except for the cell $c_f$ in front of $\module m_f$ ($\module m_1 + (0,-2,0)$), the cell  $c_d$ to the bottom of $\module m_f$ ($\module m_1 + (0,-1,-1)$), and the cell  $c_b$ to the bottom of $\module m_b$ ($\module m_1 + (0,0,-2)$).
			We break into sub-cases.
			
			\begin{description}
				\item[Case 4a.] If $c_d$ is full, and $c_b$ is empty, the same argument as Case 3 applies, and we can teleport $\module m_1$ to $c_b$ by \Cref{lem:teleport}.
				Symmetrically, the same holds for when $c_d$ is full, and $c_f$ is empty.
				\item[Case 4b.] Now, assume $c_d$ is empty.
				If $\module m_b$ is free, the set  $\{\module m_1,\module m_b\}$ is free, and we can teleport $\module m_1$ to $c_d$ and fill the original cell of $\module m_1$ similar to Case 3. The same holds for  $\module m_f$.
				If $\module m_b$ is not free then $c_b$ must be full because every other neighbor of $\module m_b$ is now adjacent to $S_\module m$.
				Then, deleting $\module m_b$ disconnects the module at $c_b$ with the other neighbors of $\module m_b$.
				Thus, the cells  $c_b+(1,0,0)$ and $c_b+(-1,0,0)$ must also be empty.
				This implies that the module $\module m_{2b}$ below $\module m_2$ is free (one of $c_b+(1,0,0)$ or  $c_b+(-1,0,0)$ is below $\module m_{2b}$ and all its other neighbors are adjacent to other full modules in $\mathcal{B}_1$).
				If $\module m_{2b}$ can preform a convex transition to $c_d$, we can move $\module m_2$ down and fill the empty position left by $\module m_2$ using \Cref{lem:teleport}.
				Else, the common neighbor of $\module m_{2b}$ and $c_d$ is full.
				In that case, we can apply the first sub-case of case 4 to $\module m_2$ since the corresponding $c_d = \module m_2+ (0,-1,-1)$ is full.
			\end{description}
			\item[Case 5.] Assume $\module{m}_1$ has 3 neighbors not in $S_\module m$.
			Without loss of generality, $\module{m}_1$ is the bottom front corner of its meta-cell.
			Then, $\module{m}_2$ and $\module{m}_3$ are both on an edge of the same meta-cell and thus have at most two neighbors not in $S_\module m$.
			We can then repeat case 4 with $\module{m}_1$ and $\module{m}_2$ replaced by $\module{m}_2$ and $\module{m}_3$.\claimqedhere
		\end{description}
	\end{claimproof}
	
	\begin{figure}
		\hfil%
		\begin{subfigure}{0.4\columnwidth}
			\includegraphics[width=\columnwidth,page=1]{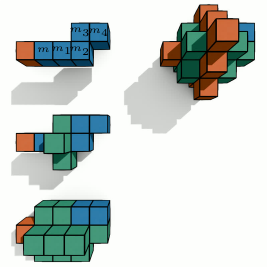}%
			\subcaption{Case 1.}
			\label{fig:cl-ball-2-case-1}
		\end{subfigure}%
		\hfil%
		\begin{subfigure}{0.4\columnwidth}
			\includegraphics[width=\columnwidth,page=2]{lem14/annotated}%
			\subcaption{Case 3.}
			\label{fig:cl-ball-2-case-3}
		\end{subfigure}%
		\par\hfil%
		\begin{subfigure}{0.4\columnwidth}
			\includegraphics[width=\columnwidth,page=3]{lem14/annotated}%
			\subcaption{Case 4.}
			\label{fig:cl-ball-2-case-4}
		\end{subfigure}%
		\hfil%
		\begin{subfigure}{0.4\columnwidth}
			\includegraphics[width=\columnwidth,page=4]{lem14/annotated}%
			\subcaption{Case 5.}
			\label{fig:cl-ball-2-case-5}
		\end{subfigure}%
		\caption{Proof of \Cref{cl:ball-2}. Modules $\module m_1$, $\module m_2$, $\module m_3$ and  $\module m_4$ are shown in blue, modules that are guaranteed to be in $S_\module m$ are in green and modules that are not guaranteed to be in $S_\module m$ are red.}
		\label{fig:ball-2}
	\end{figure}
	
	Once we have a ball of radius 2 we can grow it using the following claim.
	Let $\mathcal{B}$ be the maximal $L_1$ ball of full cells centered at a module $\module m^*$ produced by \Cref{cl:ball-2}.
	Let $S'$ be the set of modules of $T_{\module m}\setminus \mathcal{B}$ that were left in place after the construction of $\mathcal{B}$.
	
	\begin{claim}
		\label{cl:grow-ball}
		Let $\mathcal{B}$ be the maximal $L_1$ ball of full cells centered at a module $\module m$.
		If the $L_1$-radius $\text{radius}(\mathcal{B})$ is greater than or equal to $2$, there is a schedule of makespan $\mathcal{O}(\text{radius}(\mathcal{B}))$ that teleports $\module m$ to an empty position adjacent to $\mathcal{B}$.
	\end{claim}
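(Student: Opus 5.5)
We prove the claim by moving the vacancy rather than the module: a hole created at the center $\module m$ is pushed outward along a straight ray of the ball by a chain of unit moves, one per step, until the last module on the ray reaches the empty target cell. Let $r=\text{radius}(\mathcal{B})\ge 2$ and let $e$ be an empty cell adjacent to $\mathcal{B}$; one exists because $\mathcal{B}$ is maximal. Then $e$ is face-adjacent to a boundary cell $b\in\mathcal{B}$ with $\norm{b-\module m}_1=r$. We first treat the case where $e$ is adjacent to a boundary cell lying on an axis ray from $\module m$, say $b=\module m+(r,0,0)$ and $e=\module m+(r+1,0,0)$. The general case reduces to this one at the end.

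\textbf{Shifting the column.} Let $c_i=\module m+(i,0,0)$ for $i=0,\dots,r$. We shift the column $c_0,\dots,c_r$ one step east: first $c_r$ slides to $e$, then $c_{r-1}$ slides to $c_r$, and so on down to $c_0$. This leaves $c_0$ empty and $e$ filled, which is exactly teleporting $\module m$ to $e$. It uses $r+1$ sequential transformations, so the makespan is $\mathcal{O}(r)$. We could pipeline the shifts, but that is not needed.

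Each slide needs two things.
\begin{itemize}
\item \emph{Freeness.} Every $c_i$ with $1\le i\le r-1$ has its four neighbours $c_i\pm(0,1,0)$ and $c_i\pm(0,0,1)$ inside $\mathcal{B}$. These neighbours are full, and consecutive ones are linked through the full cells $c_i+(0,\pm1,\pm1)$, which are also in $\mathcal{B}$ since $r\ge2$. So removing $c_i$ keeps $\mathcal{B}\setminus\{c_i\}$ connected. The outside configuration was attached to $\mathcal{B}$ before and is still attached, because only one cell changes at a time.
\item \emph{A sliding surface.} A slide from $c_{i}$ to $c_{i+1}$ needs a pair of occupied cells beside the move, for instance $c_{i}+(0,1,0)$ and $c_{i+1}+(0,1,0)$.
\end{itemize}

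\textbf{The main obstacle: the first and last slides.}
\begin{itemize}
\item \emph{The boundary cell $c_r$.} It has no full neighbours beside the ray, since those cells have $L_1$-distance $r+1$ from $\module m$, outside $\mathcal{B}$. So $c_r$ cannot slide east along $\mathcal{B}$. Instead, $c_r$ moves to $e$ by sliding along the surface of modules outside $\mathcal{B}$ if such exist. Otherwise it performs a convex transition around $c_{r-1}+(0,1,0)$, going via $c_r+(0,1,0)$ to $e$. We must also check that the module landing in $e$ stays attached: $e$ touches $c_r$ after the refill, and the remaining configuration stays connected through $\mathcal{B}$. For the convex transition, we need the intermediate cell $c_r+(0,1,0)$ to be empty; it lies outside $\mathcal{B}$, and if it is occupied we instead use the slide over that occupied cell and $c_{r-1}+(0,1,0)$.
\item \emph{The center $c_0=\module m$.} Removing it disconnects nothing, because all its neighbours lie in $\mathcal{B}$ and are mutually connected through the radius-2 shell.
\end{itemize}

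\textbf{Off-axis targets.} If $e$ is adjacent to a boundary cell not on an axis ray, we route along a monotone staircase path from $\module m$ to $b$ inside $\mathcal{B}$, of length $r$. Each turn of the path is handled by a convex transition through a full corner of the ball, or equivalently a slide. Instead of the explicit schedule, we can invoke \Cref{lem:teleport} for the constant-size local pieces near $b$ and $e$. This is where the case analysis is fiddliest. We bound each step by $\mathcal{O}(1)$ transformations, which gives the total $\mathcal{O}(\text{radius}(\mathcal{B}))$.
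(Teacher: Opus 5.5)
\textbf{Verdict: genuine gap.} Your strategy is the paper's: empty the center by shifting a chain of modules one step outward, so the vacancy travels inward. This is the paper's induction on the radius, unrolled. The gap is exactly at the step you flag as the main obstacle.

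The moves you propose for getting the module at $c_r=m+(r,0,0)$ into $e=c_r+(1,0,0)$ are not legal moves.
\begin{itemize}
\item A convex transition always displaces a module by a diagonal vector, so no single convex transition takes $c_r$ to the face-adjacent cell $e$.
\item $c_r$ is only edge-adjacent to $c_{r-1}+(0,1,0)$, so that cell cannot act as a pivot.
\item The full pair $c_r+(0,1,0)$, $c_{r-1}+(0,1,0)$ supports sliding between $c_{r-1}$ and $c_r$, not into $e$.
\item A slide into $e$ needs a full pair $c_r+w$, $e+w$, both at distance at least $r+1$ from $m$, and nothing guarantees such a pair.
\end{itemize}
The same defect already breaks your second slide $c_{r-1}\to c_r$. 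Its support $c_r+(0,1,0)$ lies outside $\mathcal{B}$, so if the four lateral neighbours of $c_r$ are empty, $c_{r-1}$ cannot slide into $c_r$ at all. Your freeness argument has a related slip: it uses $c_{r-1}+(0,\pm1,\pm1)$, which lie outside $\mathcal{B}$. Freeness still holds via $c_{r-2}$.

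\textbf{What is missing is the paper's case split.}
\begin{itemize}
\item \emph{Some empty cell touches a non-extremal boundary module.} Aim the chain at that cell. Then every slide, including the last, has its supporting pair inside $\mathcal{B}$.
\item \emph{Otherwise.} Every lateral neighbour of every extremal module is full, since an empty one would touch a non-extremal boundary module. This full lateral neighbour supports $c_{r-1}\to c_r$. The last step is done in two stages. If $e$ has another full neighbour, slide directly into $e$. Else let $c_r$ make a convex transition around the full cell $c_r+w$, through $e$, into $e+w$. Then refill $c_r$ by the chain, and only then slide from $e+w$ into $e$ along $c_r+w$ and $c_r$.
\end{itemize}

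Your off-axis paragraph, which you consider the fiddly part, is therefore really the easy case. Your sketch of it still misses a connectivity constraint. For an axis unit vector $\hat u$, the cell $m+(r-1)\hat u$ is the only neighbour inside $\mathcal{B}$ of the extremal cell $m+r\hat u$. A staircase through that cell can therefore strand the extremal module. For $r\ge 3$ you can choose the staircase to avoid such cells. For $r=2$ you cannot: the first step of every path from $m$ enters some $m+\hat u$. This is why the paper needs an explicit radius-$2$ base case. Appealing to the teleport lemma for unspecified ``local pieces'' does not replace that argument.
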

	\begin{claimproof}
		We prove the claim by induction on the radius of $\mathcal{B}$. For the inductive step, assume  $r = \text{radius}(\mathcal{B})> 2$.
		
		\descriptionlabel{Inductive step.}
		Assume the claim holds for any ball of radius $r-1$. We call a cell of $\mathcal{B}$ \emph{extremal} if it is one of the 6 cells that are maximal or minimal in one of the axis directions.
		We wish to fill an empty position $e$ adjacent to the boundary of $\mathcal{B}$. We distinguish two cases:
		
		\begin{description}
			\item [Case 1.] There exists an empty position $e$ adjacent to a module $\module{u} \in \mathcal{B}$ that is not extremal.
			Since $r > 2$ and $\module{u}$ is not extremal, $\module{u}$ has sufficient connectivity within $\mathcal{B}$: it is not a cut-vertex since all its neighbors not in $\mathcal{B}$ are also adjacent to other modules in $\mathcal{B}$.
			We move $\module{u}$ to $e$. This creates a vacancy at the original position of $\module{u}$, denoted $p_{vac}$, which lies on the boundary of the ball of radius $r-1$.
			By the inductive hypothesis, we can teleport the center module $\module{m}^*$ to fill $p_{vac}$.
			
			\item [Case 2.] All empty cells adjacent to $\mathcal{B}$ are only adjacent to extremal module of $\mathcal{B}$.
			Without loss of generality, assume $e$ is an empty cell above the topmost module $\module{u}$ of $\mathcal{B}$.
			By the definition of the case $N^*(\module{u})\setminus\{e\}$ is full, ensuring that $\module{u}$ is free: all its neighbors not in $\mathcal{B}$ are also adjacent to other modules in $\mathcal{B}$.
			We move $\module{u}$ up towards $e$. Depending on the local support, this occurs in two ways:
			\begin{description}
				\item [Case 2a (Direct Slide).] If $e$ is adjacent to another occupied cell, $\module{u}$ slides directly into $e$. We then apply the inductive hypothesis to fill the vacancy at $\module{u}$'s original position.
				
				\item [Case 2b (Convex Transition).] If $e$ has no other occupied neighbors to support a slide, $\module{u}$ performs a convex transition to a temporary position $e'$ adjacent to $e$. This vacates $\module{u}$'s original position, $p_{vac}$.
				We then apply the inductive hypothesis to teleport $\module{m}^*$ into $p_{vac}$.
				Once $p_{vac}$ is refilled, it provides the necessary support for $\module{u}$ to perform a slide from $e'$ into its final destination $e$.
			\end{description}
		\end{description}

		\descriptionlabel{Base case.} The base case happens when $\mathcal{B}$ has radius 2.
		\begin{description}
			\item [Case 1.] Assume the only empty positions adjacent to $\mathcal{B}$ are extreme in one of the cardinal directions.
			Without loss of generality, let $e = \module m^*+(0,0,3)$.
			Then, the set $\{\module m^*, \module m^*+(0,0,1), \module m^*+(0,0,2)\}$ is free and they can all move up in one parallel move.
			\item [Case 2.] Assume that  $e = \module m^*+(1,0,2)$ is empty.
			Move the
			\item module below $e$ ($\module m^*+(1,0,1)$) up to $e$.
			If module $\module m^*+(0,0,1)$ is free, move it right to $\module m^*+(1,0,1)$ and then move $\module m^*$ up to $\module m^*+(0,0,1)$.
			Else, module $\module m^*+(0,0,1)$ is not free and, therefore, its deletion disconnects $\module m^*+(0,0,2)$ with the rest of $\mathcal{B}$.
			Thus, all horizontally neighboring cells (with the same $z$-coordinate) of $\module m^*+(0,0,2)$ are empty, else they would connect it with $\module m^*+(1,0,1) \in \mathcal{B}$.
			We move $\module m^*+(1,0,1)$ to $e$ and further break down into subcases.
			\begin{description}
				\item [Case 2a.] Assume that $\module m^*+(1,-1,2)$ is empty. Then, we can convex transition the module in $e$ to $\module m^*+(0,-1,2)$. This bridges $\module m^*+(0,0,2)$ with $\module m^*+(0,-1,1) \in \mathcal{B}$ freeing $\module m^*+(0,0,1)$. We can proceed by moving $\module m^*+(0,0,1)$ right and $\module m^*$ up as before. We are done since we filed an empty position adjacent to $\mathcal{B}$.
				
				\item [Case 2b.] Assume that both $\module m^*+(1,-1,2)$ and $\module m^*+(1,-1,1)$ are full.
				Then, $\module m^*+(0,0,1)$ is free (se the path $(\module m^*+(0,0,2), e, \module m^*+(1,-1,2), \module m^*+(1,-1,1), \module m^*+(0,-1,1))$) and we can proceed as the standard Case 2 moving $\module m^*+(0,0,1)$ right and $\module m^*$ up.
				
				\item [Case 2c.] Assume that $\module m^*+(1,-1,2)$ is full and $\module m^*+(1,-1,1)$ is empty.
				Then, we move the module in $\module m^*+(1,-1,0)$ up to $\module m^*+(1,-1,1)$.
				Then we can apply Case 2b. After $\module m^* +(1,0,1)$ is full, $\module m^*+(1,-1,1)$ is not needed for connectivity, and we can move the module there down to its original position $\module m^*+(1,-1,0)$.
			\end{description}
			\item [Case 3.] Assume that $\module m^*+(1,-1,1)$ is empty and all horizontally neighboring cells of $\module m^*+(0,0,2)$ are full.
			Then, we can move module $\module m^*+(1,0,1)$ to $\module m^*+(1,-1,1)$.
			Since $\module m^*+(0,-1,2)$ is full, $\module m^*+(0,0,1)$ is free and we can proceed as Case 2 moving $\module m^*+(0,0,1)$ right and $\module m^*$ up.\claimqedhere
		\end{description}
	\end{claimproof}

	We use \Cref{cl:grow-ball} to fill an empty position adjacent to $\mathcal{B}$ and empty the center of $\mathcal{B}$. If~$S'$ has a free module adjacent to $\mathcal{B}$ we can then use \Cref{cl:grow-ball} in reverse to fill the center of $\mathcal{B}$.
	We delete such module from $S'$ since it is now part of the ball $\mathcal{B}$.
	If $S'$ has no free module adjacent to $\mathcal{B}$, there must be an empty position adjacent to both $S'$ and $\mathcal{B}$. This is because by definition of $T$ and $S'$, modules in $S'$ are only adjacent to other modules of $S'$ and module in $\mathcal{B}$, and \Cref{cl:grow-ball} maintains that $S'$ is adjacent to $\mathcal{B}$.
	We can then teleport a free module of $S'$ to an empty position adjacent to $\mathcal{B}$ by \Cref{lem:teleport}.
	By repeating this process, we can grow $\mathcal{B}$ until it contains an entire meta-cell.
\end{proof}

\subsection{Details of Scaffold and Compress}

After growing the snake to an appropriate size, we construct a ``scaffolding'', a temporary substructure that allow us to preform many snake operations in parallel to ``compress'' the configuration and eventually reconfigure into a compact configuration.
Once the snake reaches a size of $\mathcal{O}(A + h)$ we stop growing and start building a scaffolding, which will be a scaled version of $z[C]$. We then use the scaffolding to ``compress'' $C$ into a compact configuration.

We use push and pull operations to move the head of the snake to a position of height equal to the maximal module of $C$.
Assume that the head is aligned with a $5\times 5 \times 5$ meta-cell of $B_1$.
We then compute a spanning tree $T$ on the scaled cells of the $z[C]$ rooted at the meta-cell containing the snake's head.
Using \cref{alg:snake-dfs} we traverse $T$.
Note, that as the starting size of our snake is $\mathcal{O}(A + H)$, the initial meta-cells of the snake outnumber the metacells of $T$.
Hence, at the end of this traversal, our snake can occupy every meta cell of~$T$.
We fork at each meta cell of $T$ splitting into many snakes.

We now have our scaffolding a connected component of snakes, each of which have their head at the same height, we denote this collective of heads by  $\text{\ss}$.
We can move snake each downwards in parallel via Push operations.
Note for a single snake, at every stage of the push operation a large portion the skin cells are stationary.
Hence every snake in $\text{\ss}$ can preform a push in parallel and their skins will maintain connectivity.
As a snake encounters module $\module m \in C \setminus \text{\ss}$ the snake \newterm{consumes} $\module m$.

We terminate this process when $\text{\ss}$ reaches the bottom of $C$.
At this point $C$ is a set of vertical snakes, arranged in the scaled projection of $z[C_1]$.
We now rearrange these to form a compact configuration.
First, we ``compress'' each snake filling the empty spine cells by ``teleporting'' modules from the tail, to any empty spine positions.
This results in a configuration consisting of solid vertical towers, i.e. for every module, $\module m \in C$, any cell directly below $\module m$ is full.
The following will reconfigure $C$ so every position to the left of $\module m$ is full:
\begin{enumerate}
	\item Take every module of even $y$ coordinate.
	\begin{enumerate}
		\item Of this set, move every module with even $z$ coordinate left, unless it would move to a position of negative $x$ coordinate, or its left neighbor does not move. (A module with no left neighbor will have to preform a convex transition up or down, and then slide into the appropriate position)
		\item Now move the remaining modules with even $y$ coordinate and odd $z$ to the left, unless it would move to a position of negative $x$ coordinate, or its left neighbor does not move.
	\end{enumerate}
	\item Now move every module with odd $y$ coordinate left, following the same procedure, first those with even $z$ and then those with odd $z$.
	\item Repeat the above until $C$ is monotone with respect to the $y$-axis
\end{enumerate}

By alternating even and odd $z$ coordinates, we assure that connectivity is maintained.
When every module of even $y$ and $z$ coordinate are moving, every module with odd $y$ coordinate is connected through those with even $y$ and odd $z$. 
The same argument applies for the other three combinations of even and odd $y$/$z$ coordinates.
After this process, it is still true that for any module $\module m$ every cell directly below it is full.
Further this is true of every position directly left of $\module m$, as otherwise we would not have terminated.
This is transitive, so for every such $\module m$, every cell with lesser $z$ or $x$ coordinate is full. 

If we repeat this procedure, replacing any reference to ``$y$ coordinate'' with ``$x$ coordinate'', and replacing ``move left'' with ``move down''.
$C$ is now compact, it is still true that for every $\module m$ every cell of lesser $z$ or $x$ coordinate is full.
However, by a symmetric argument it is also true that every position of lesser $y$ is also full. 
This is the definition of a compact configuration.

%% file: A05_1-gather.tex
\lemmaSnakeInBall*
\begin{proof}
    \begin{figure}[htb]
        \hfil%
        \includegraphics[page=1]{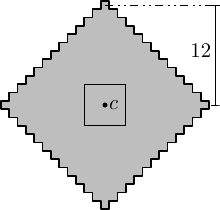}%
        \hfil%
        \includegraphics[page=2]{snake/snake-in-ball}%
        \caption{A cross-section of the $L_1$ ball and the snake it contains.}%
        \label{fig:starter-snake}
    \end{figure}
    Let $C$ be a configuration that contains an $L_1$ ball of radius $12$ which has its center at~$c\in\mathbb{Z}^3$.
    We denote by $P=(c-(0,0,5),\ldots,c+(0,0,5))$ a spine path of length eleven and argue that $S(P)$ (i)  induces a valid snake configuration $S(P)\cap C$ and (ii) corresponds to a free subconfiguration of $C$.
    Recall that by definition, $S(P)\subset N^*_4(P)$.
    Furthermore, every vertex of the spine path is within five units of the center $c$, i.e., $P\subset N^*_5(c)$.
    By triangle inequality, it follows that $S(P)\subset N^*_9(c)$, so the snake subconfiguration is fully contained within the ball.
    As there remain three full outer layers of the $L_1$ ball, $S(P)$ is free.
\end{proof}

\lemmaPush*
\begin{proof}
    Consider a snake subconfiguration ${S}(P)\subseteq C$ and let $v_h\in\mathbb{Z}^3$ be the endpoint of the spine $P$ that defines the snake's head.
    We assume, without loss of generality, that $v_h=(0,0,0)$ and that the direction of the push operation is upward.
    Let the nine skin cells of the snake head at $z=2$ be the \emph{old front} $F_c$, and their upward neighbors the \emph{new front} $F_n$.

    \begin{figure}[htb]%
        \captionsetup[subfigure]{justification=centering}%
        \hfil%
        \begin{subfigure}[t]{\columnwidth/4}%
            \includegraphics[width=\columnwidth]{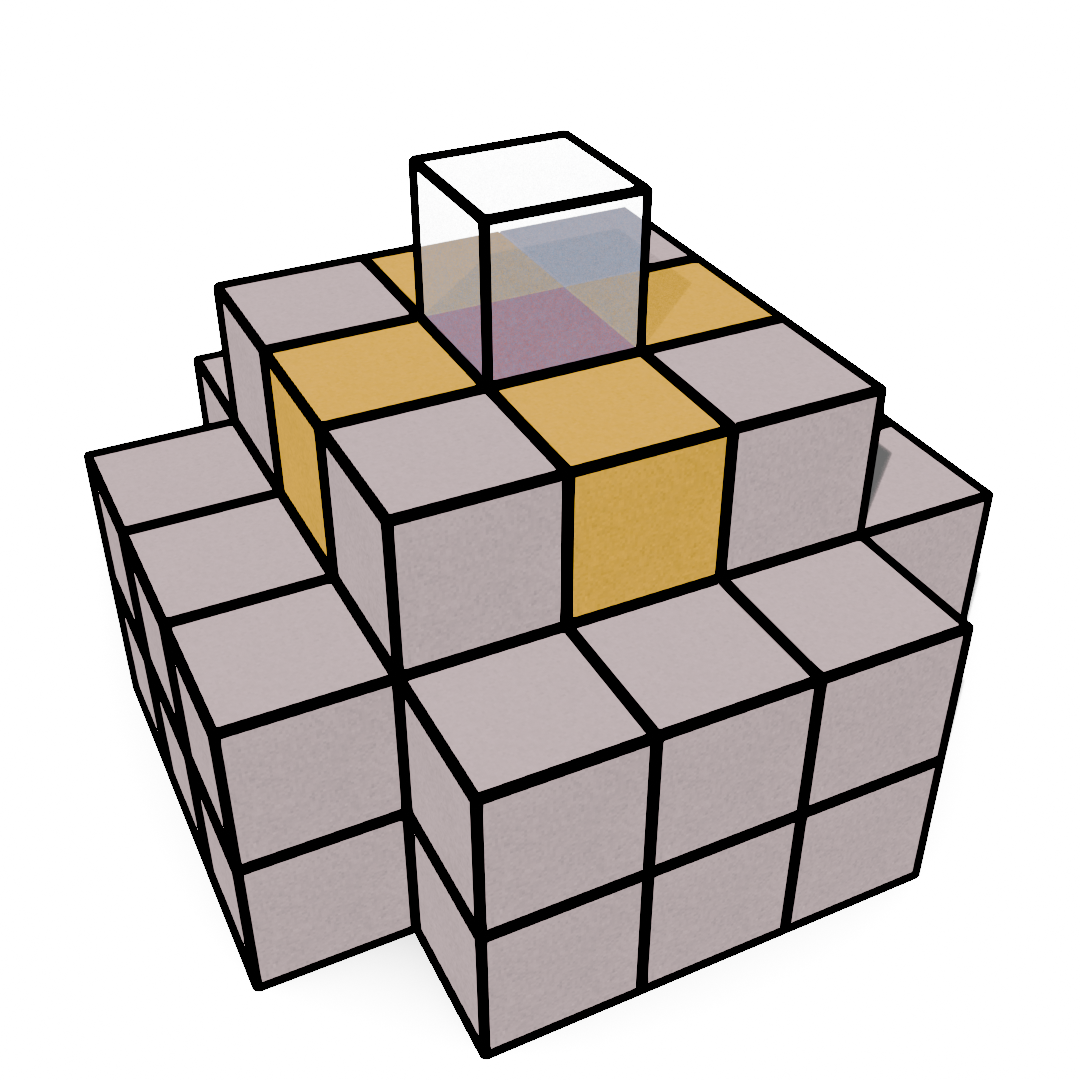}%
            \llap{\includegraphics[width=\columnwidth]{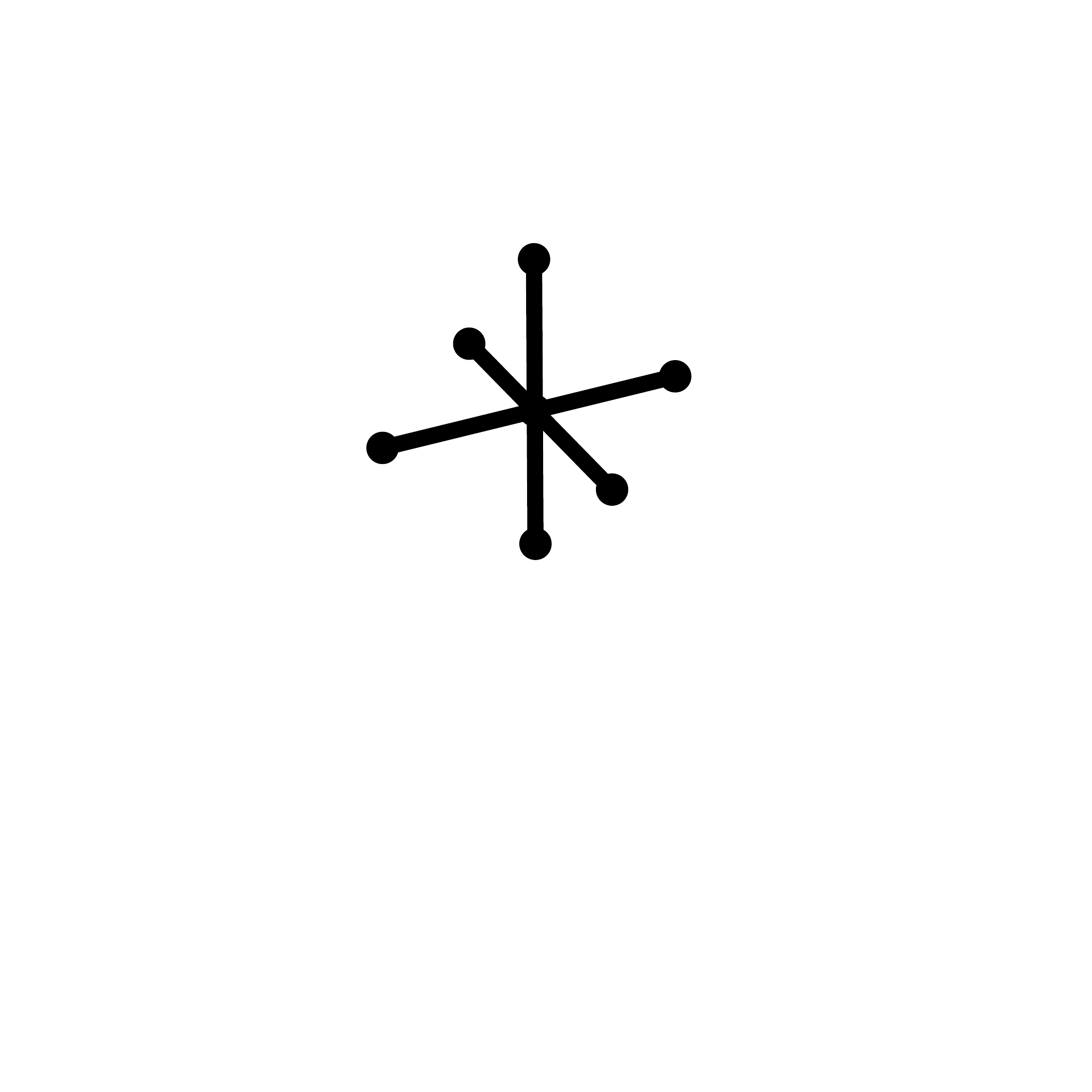}}%
            \subcaption{}
            \label{fig:head-push-center-free}%
        \end{subfigure}%
        \hfil%
        \begin{subfigure}[t]{\columnwidth/4}%
            \includegraphics[width=\columnwidth]{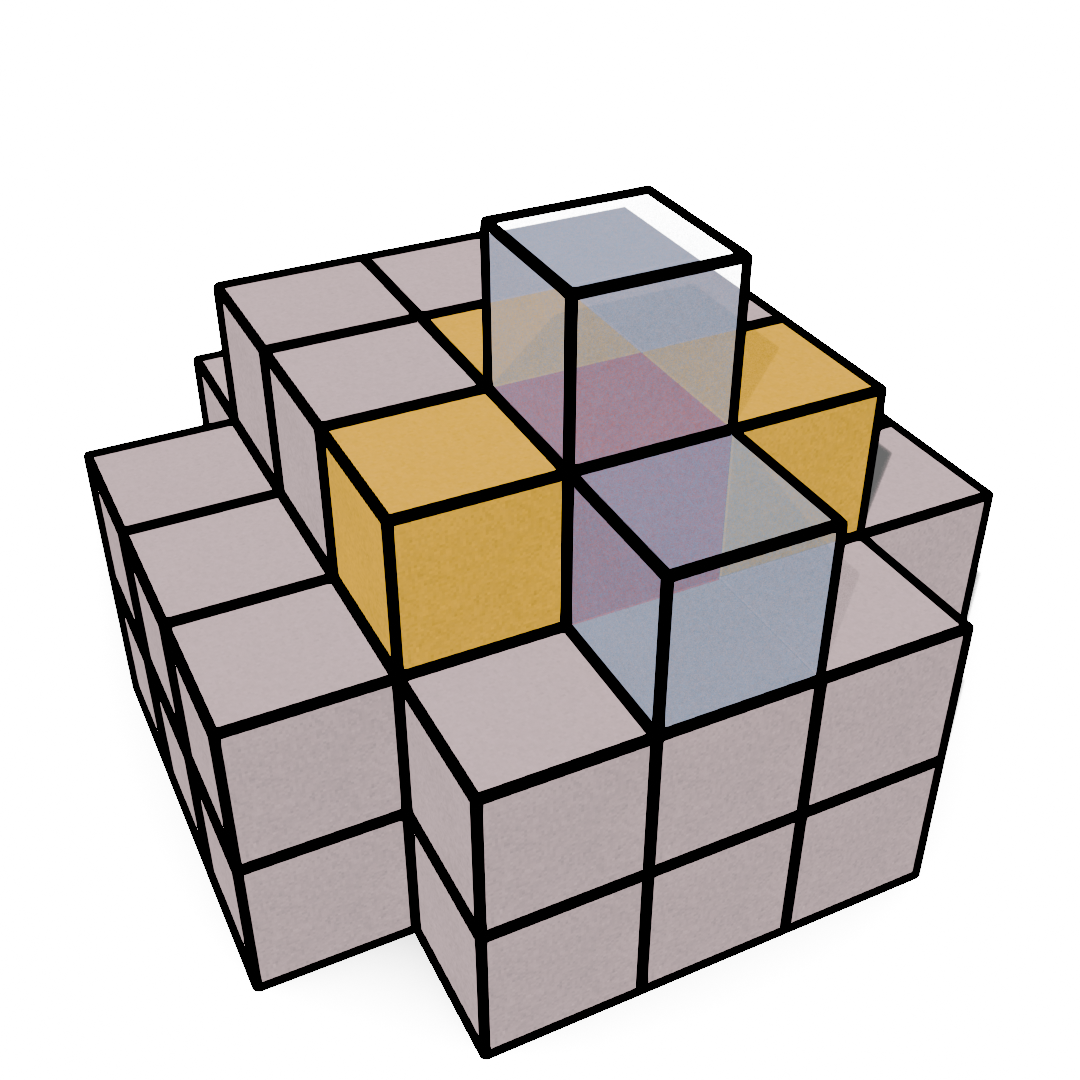}%
            \llap{\includegraphics[width=\columnwidth]{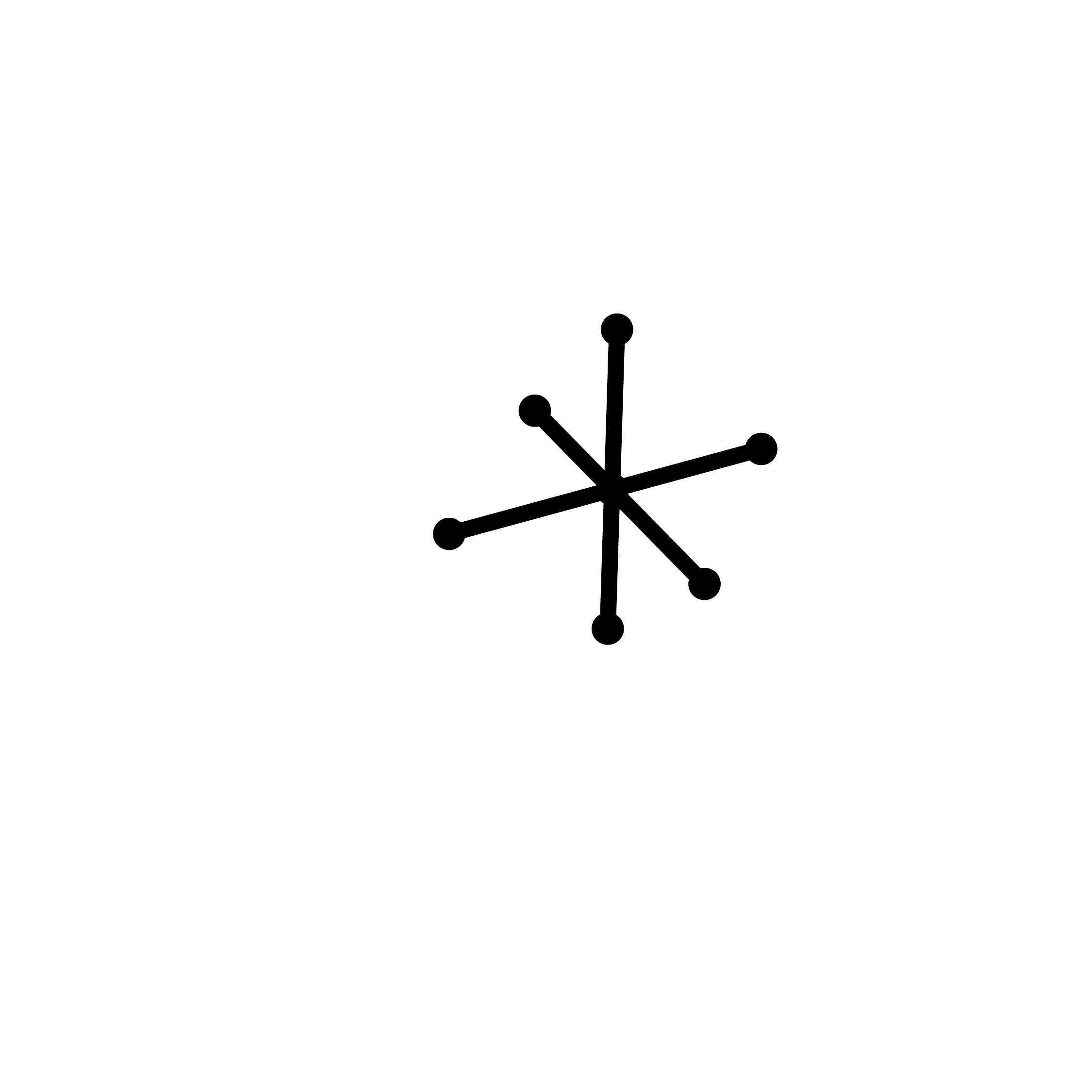}}%
            \subcaption{}
            \label{fig:head-push-edge-free}%
        \end{subfigure}
        \hfil%
        \begin{subfigure}[t]{\columnwidth/4}%
            \includegraphics[width=\columnwidth]{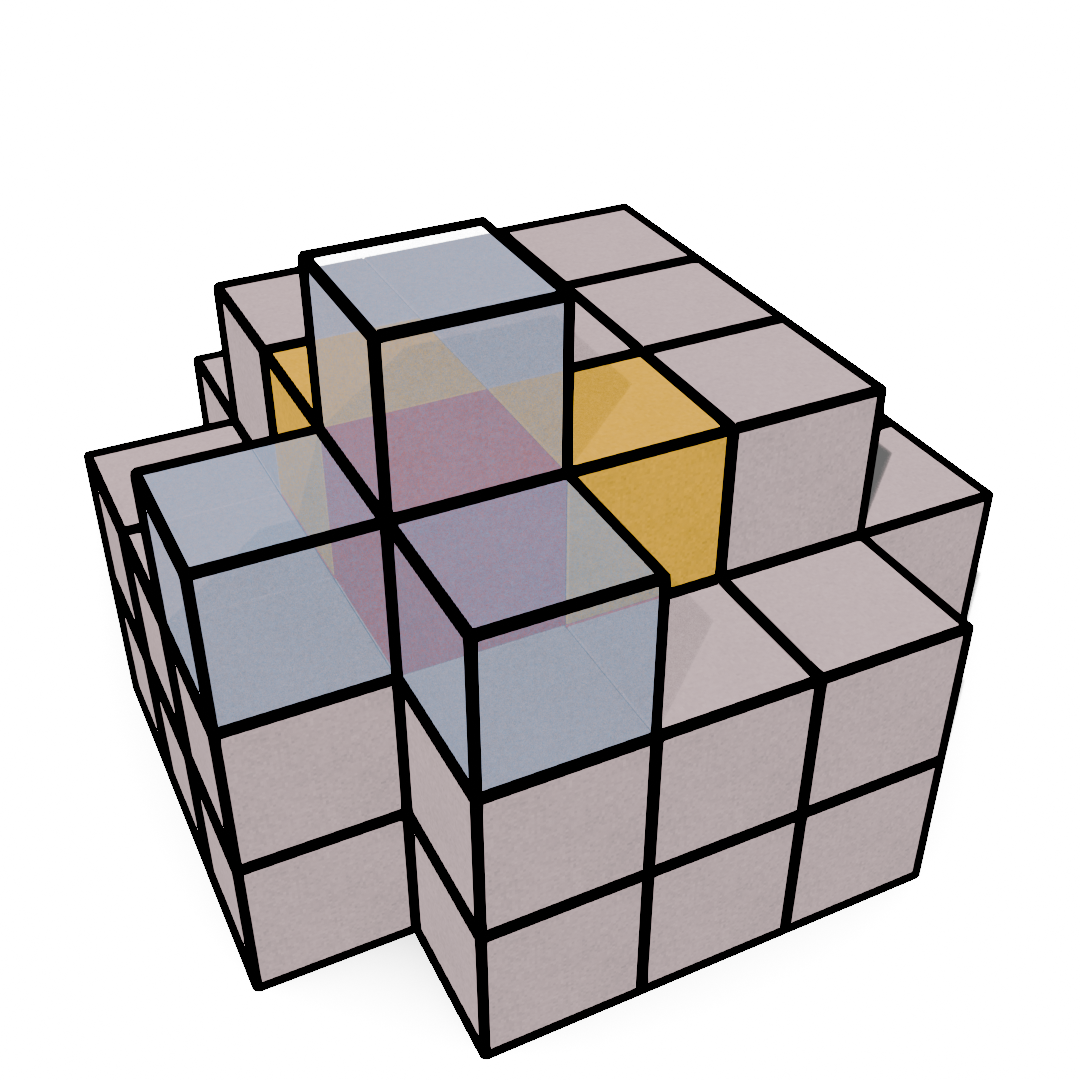}%
            \llap{\includegraphics[width=\columnwidth]{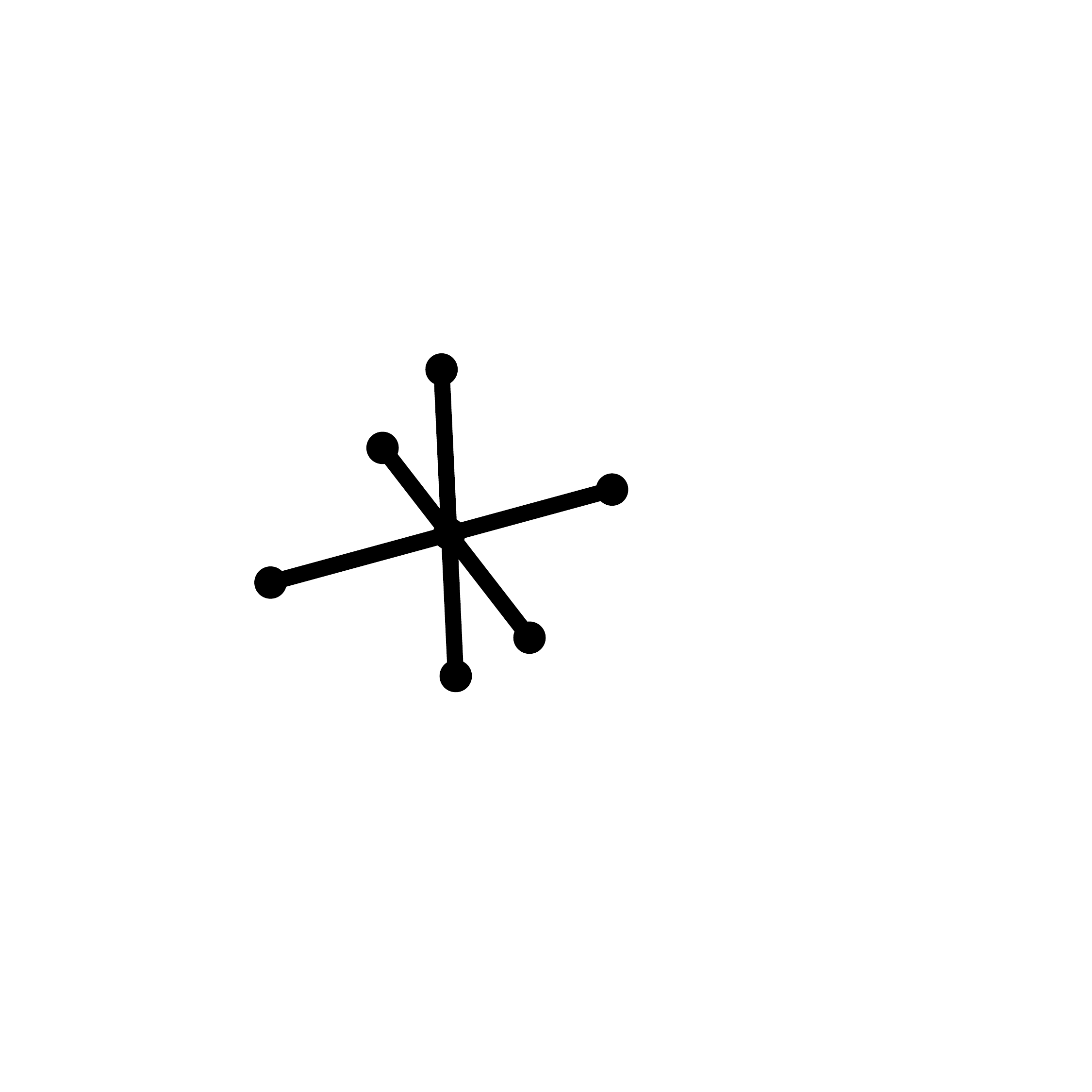}}%
            \subcaption{}
            \label{fig:head-push-corner-free}%
        \end{subfigure}%
        \caption{Cells in the current front are either free or adjacent to a module in the new front:
        For~${\module{m}\in F_c}$ (red), we split $N(\module{m})$ into cells within (yellow) and outside the snake (blue, transparent).}%
        \label{fig:push}
    \end{figure}

    We start by arguing that we can iteratively fill the new front by emptying the eight occupied interior cells of the head at $z(v_h)$ and the module above $v_h$.

    \begin{claim}
        If a module in $F_c$ is not free, the face-adjacent cell in $F_n$ is already occupied.
        \label{clm:push-front-free}%
    \end{claim}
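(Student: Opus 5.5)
The plan is to prove the contrapositive. Suppose $\module m\in F_c$ sits at $(a,b,2)$ with $|a|,|b|\le 1$, and the cell $(a,b,3)\in F_n$ directly above it is empty. I will show that $\module m$ is free, i.e.\ $C\setminus\{\module m\}$ is connected. The approach is to list the face-neighbours of $\module m$ explicitly, using $v_h=(0,0,0)$ and an upward push, and to check that each of them stays connected to the rest of the snake once $\module m$ is removed.

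Since $(a,b,3)$ is empty, $\module m$ has at most five occupied neighbours.
\begin{itemize}
\item The cell $(a,b,1)$ below $\module m$ lies in $N^*[v_h]$, so it is an interior cell of the head section. It is occupied by \textsf{(S.1)} and \textsf{(S.2)}: the head is an endpoint, not a bend, and only the tail section's interior may be partially empty.
\item A horizontal neighbour $(a',b',2)$ with $|a'|,|b'|\le 1$ is another module of $F_c$. It is occupied, and it also lies directly above the occupied interior cell $(a',b',1)$.
\item A horizontal neighbour with $|a'|=2$ or $|b'|=2$ lies outside the snake's cell space and may or may not be occupied. If occupied, the cell $(a',b',1)$ directly below it is at $L_1$-distance $1$ from the interior cell $(a'\mp1,b',1)$ (respectively $(a',b'\mp1,1)$). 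Hence $(a',b',1)$ is a skin cell of the head section and is occupied.
\end{itemize}

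Next I argue that the snake subconfiguration minus $\module m$ is still connected. Every skin and interior cell of the head section other than $\module m$ is occupied, and these cells form a thick shell around the spine. Removing a single front cell $\module m$ cannot separate that shell, because all neighbours of $\module m$ inside the snake are joined through the layer $z=1$. The rest of the snake stays attached through the spine's interior, exactly as before. Now take an arbitrary module $\module q\in C\setminus\{\module m\}$. In $C$ it is joined to the snake by some path; if that path used $\module m$, it entered $\module m$ through one of the neighbours listed above. Each of those neighbours is face-adjacent to an occupied snake cell other than $\module m$, namely the cell directly below it. So the path can be rerouted around $\module m$, $C\setminus\{\module m\}$ is connected, and $\module m$ is free. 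This contradicts the assumption, which proves the claim.

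The step I expect to need the most care is the verification that the cell below each outside horizontal neighbour really is a skin cell of the head section. This has to be checked separately for edge cells of $F_c$ (one outside neighbour) and corner cells of $F_c$ (two outside neighbours), with the skin defined as $N(N^*[P])$. The same check must rule out that the head section overlaps a bend whose major vertex is empty. For that I will use the spine-path conditions: the head is followed by up to $9$ minor vertices, and vertices at distance $\ge5$ along the path are at $L_1$-distance $\ge5$. Together these keep the cells at height $z\in\{1,2\}$ near $v_h$ away from any \textsf{(S.1)} vacancy.
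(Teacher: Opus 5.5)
Your proof is correct and is essentially the paper's argument. The paper also uses two-vertex-connectivity of the head to conclude that non-freeness of $\module{m}$ must come from an outside neighbour, and notes that any outside horizontal neighbour is adjacent to the occupied skin cell below it and so lies on a cycle with $\module{m}$; your contrapositive with explicit rerouting through the occupied interior layer $z=1$ is a more explicit version of that cycle argument, and the caveat you flag (no empty bend vertex or tail-interior cell near the head) is assumed implicitly in the paper when it speaks of the nine front cells.
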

    \begin{claimproof}
        Let $\module{m}\in F_c$; we argue based on its open $L_1$-neighborhood $N(c)$.
        Note that every module in the old front lies on a $4$-cycle in the face-adjacency relation of modules in the head, i.e., the face-adjacency relation is two-vertex-connected.
        In particular, this means that if $\module{m}$ is not free, there has to exist an adjacent module that is not part of the head.

        For $x(\module{m}) = y(\module{m}) = 0$, the claim is trivial: the only cell of $N(\module{m})$ not in the head is its upward neighbor, which is then also the only reason why $\module{m}$ may not be free, see~\cref{fig:head-push-center-free}.
        Assume now that $\module{m}$ is an outer module of the old front, e.g., $x(\module{m}) = 1$ and $y(\module{m}) = 0$; we illustrate the possible cases in~\cref{fig:head-push-edge-free,fig:head-push-corner-free}.
        If $\module{m}$ is not free and its upward neighbor cell is not occupied, there has to exist a module in its east neighbor cell that causes it to be a cut module.
        However, a module in this cell would necessarily have to lie on a common cycle with $\module{m}$ in the face-adjacency relation of the configuration as it is then adjacent to two modules of the head, which itself is two-vertex-connected.
    \end{claimproof}

    To fill a new front cell ${n}\in F_n$, we now take the (up to) four unit tall column of modules below it, down to $z(v_h)$, and perform parallel slides up, see~\cref{fig:push-schedule-a}.
    Note that the topmost module may be unable to perform an upward slide if none of its edge-adjacent cells in the new front are occupied.
    In this case, that module performs a convex transition instead, and then a slide to the desired cell of the new front, as illustrated in~\cref{fig:push-schedule-b}.
    \begin{figure}[htb]
        \captionsetup[subfigure]{justification=centering}%
        \begin{subfigure}[t]{0.4\columnwidth}%
            \centering%
            \includegraphics[page=1]{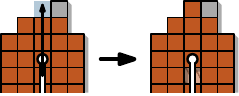}%
            \subcaption{}
            \label{fig:push-schedule-a}%
        \end{subfigure}%
        \begin{subfigure}[t]{0.6\columnwidth}%
            \centering%
            \includegraphics[page=2]{figures/snake/push/push-schedule}%
            \subcaption{}
            \label{fig:push-schedule-b}%
        \end{subfigure}
        \caption{Two-dimensional representations of the schedules used to fill cells of the new front.}
        \label{fig:push-schedule}%
    \end{figure}

    Once this completes, both new and old front are fully occupied.
    We can then apply an analogous argument to~\cref{clm:push-front-free} to argue that the eight outer modules of the old front are free:
    Since the new front is then two-vertex-connected, it cannot cause any module in the old front to be non-free.
    To close the skin of the new head section, those eight modules simply move outward and occupy those cells.
    Using up to four additional modules from the interior of the head section, we can fill the remaining four positions, closing the skin.

    \begin{figure}[htb]
        \captionsetup[subfigure]{justification=centering}%
        \begin{subfigure}[t]{\columnwidth/3 -0.3em}%
            \centering%
            \includegraphics[width=\columnwidth]{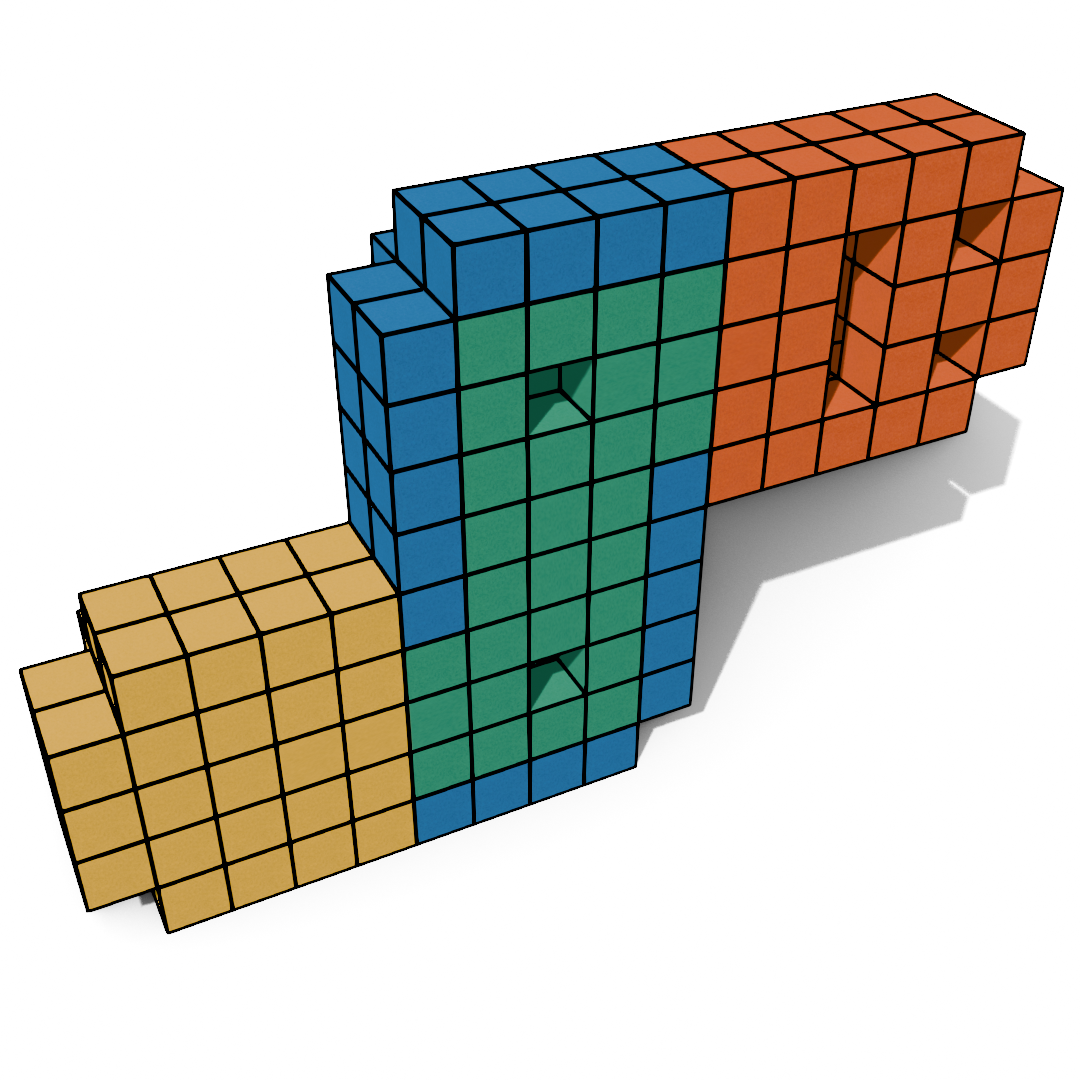}%
            \subcaption{}
        \end{subfigure}%
        \begin{subfigure}[t]{\columnwidth/3 -0.3em}%
            \centering%
            \includegraphics[width=\columnwidth]{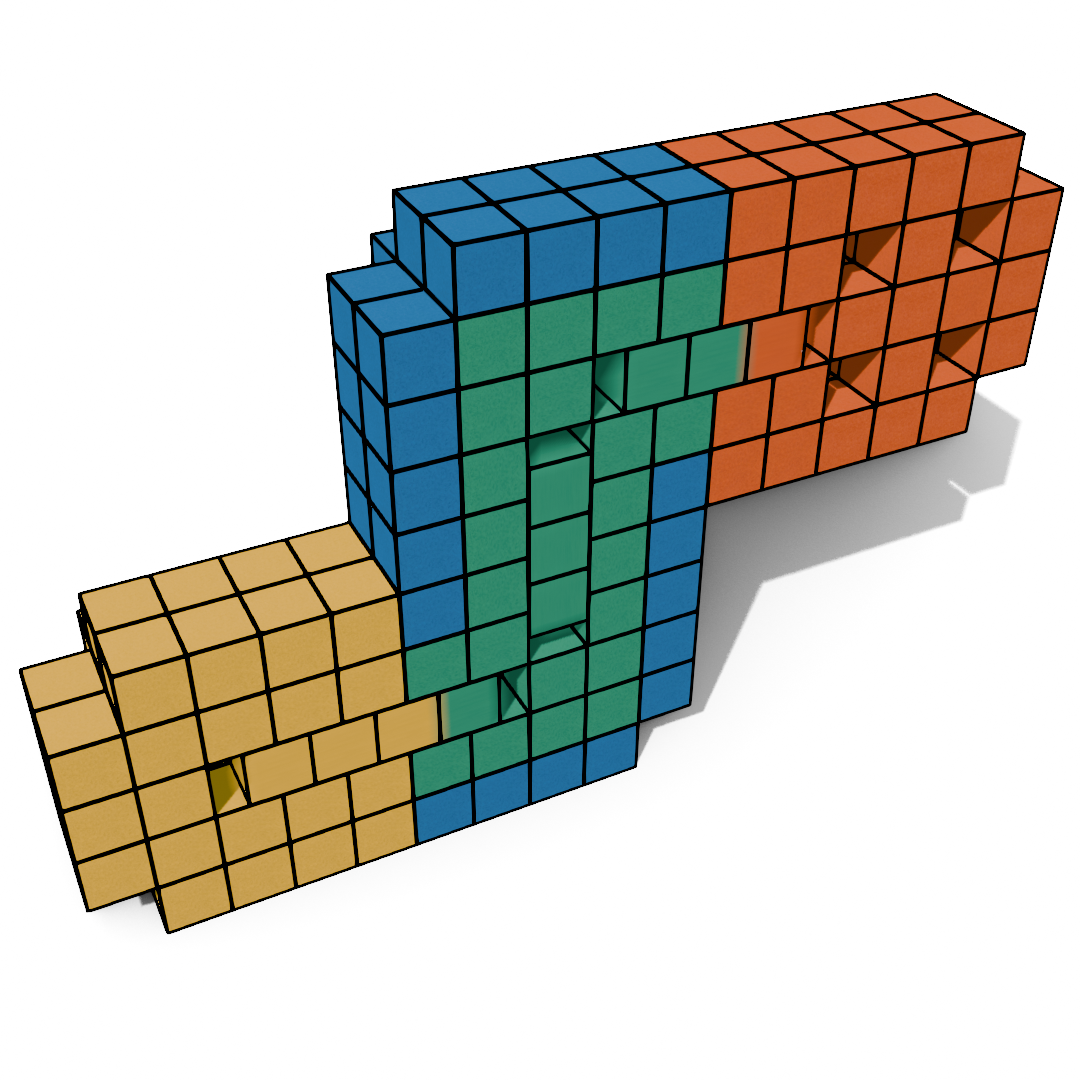}%
            \llap{\includegraphics[width=\columnwidth]{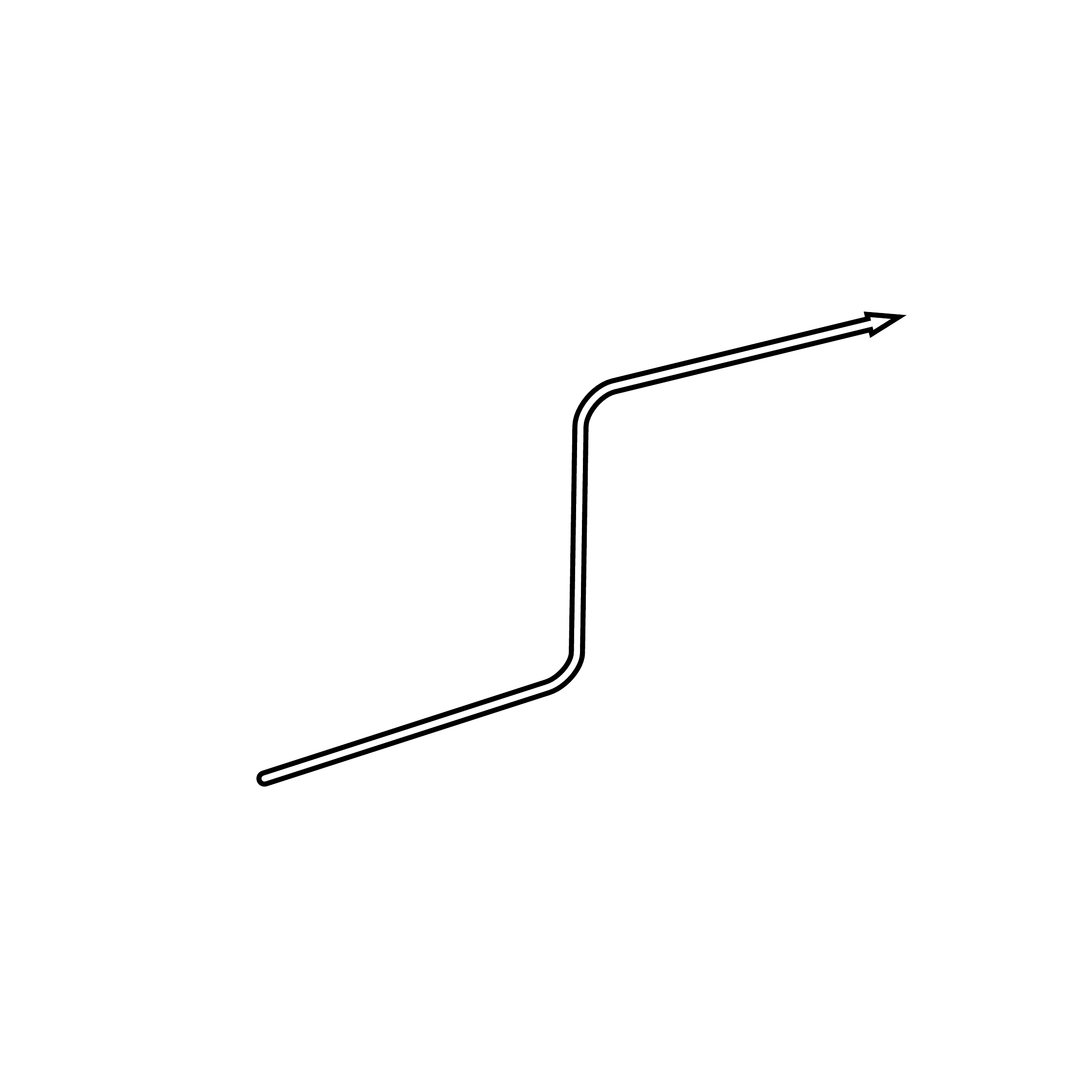}}%
            \subcaption{}
            \label{fig:refilling-head-teleport}
        \end{subfigure}
        \begin{subfigure}[t]{\columnwidth/3 -0.3em}%
            \centering%
            \includegraphics[width=\columnwidth]{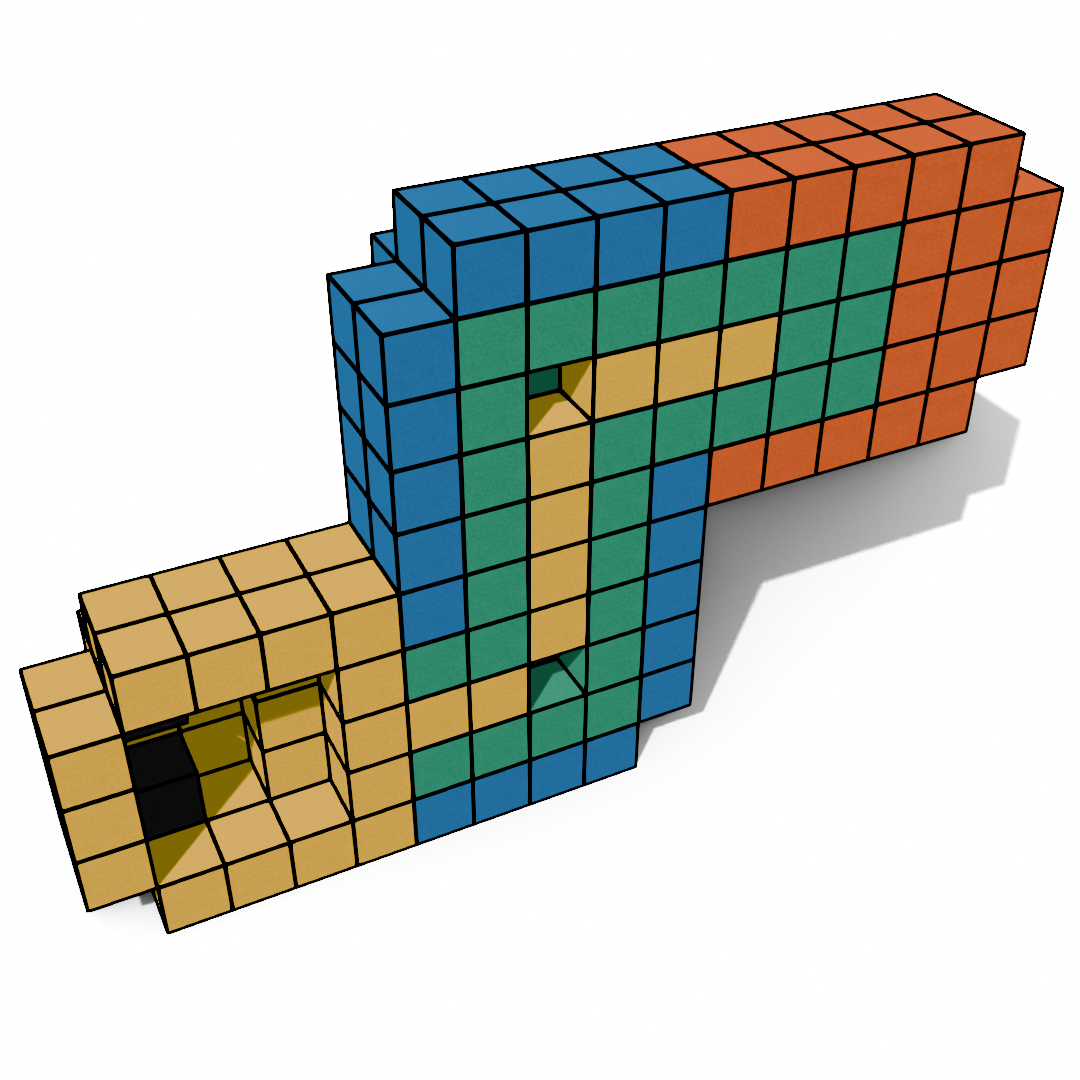}%
            \subcaption{}
        \end{subfigure}
        \caption{To refill empty head cells as shown in (a), we push modules from the tail along the spine path (b) until all interior head cells are occupied, see (c).
        }%
        \label{fig:refilling-head}%
    \end{figure}
    Upon completion of these transformations, there are up to twelve interior modules missing from the new head section.
    To recover a valid snake configuration, we ``teleport'' the appropriate number of modules from the tail section to the new head section using the free space at the major vertices of the spine path to make turns within the configuration.
    Since there are only constantly many modules missing within the head, this takes $\mathcal{O}(1)$ transformations.
\end{proof}

\lemmaSnakeIsReversible*
\begin{proof}
    To achieve this, it suffices to teleport interior modules from the head to the tail section until it is full.
    If the tail section is larger than the head section, this may require joining the head section with the adjacent major vertex's section before teleportation.

    In any case, head and tail section both have constant volume, so teleporting modules along the spine path one at a time as shown in~\cref{fig:refilling-head-teleport} is sufficiently fast.
\end{proof}

\lemmaGrowConsume*

\begin{proof}
    If the cell containing $\module m$ is within the snake's cell space, i.e., $\module m \in S(P)$, we merely need to update its status from held to owned.
    Assume therefore that $\module{m}\notin S(P)$.

    We apply a constant number of fork/join and push/pull operations.
    Let $v_h,\ldots, v, \ldots, v_{t}$ be the spine path $P$ in head to tail order.
    We first fork the snake at $v$, obtaining a snake with spine $P'\coloneqq v,\ldots, v_t$.
    Since $\module m\in N^*_2[v]$, two push operations suffice to transform the forked snake such that it contains $\module{m}$.
    We reverse the orientation of the resulting snake, teleporting interior modules from the modified head to the original tail section at $v_t$.
    Performing constantly many operations toward $v_t$, we can realign the spine of the forked snake with the original snake and perform a join.
\end{proof}

\lemmaDFSTraversal*

\begin{proof}
    The pseudocode in~\cref{alg:snake-dfs} is a heavily simplified representation of how the snake actually traverses the tree.

    In particular, instead of a single spine path $P$, we keep track of a tree $\mathcal{P}$ that represents the union of spine paths that exist due to forking.
    This means that, at any point in time, the spine of the snake we currently consider is defined by a path in $\mathcal{P}$.
    Before the first invocation of the algorithm, we have a valid snake configuration with spine $P$, i.e., $\mathcal{P}=P$.

    Let $v_h\in\mathcal{P}$ now be a chosen head vertex that we seek to extend from the center of one meta-cell to an adjacent meta-cell due to the algorithm.
    Since each push operation requires a specific spine to teleport modules along, we select a maximal path $P$ in $\mathcal{P}$ that starts at $v_h$ and leads to a leaf vertex $v_t$ of the tree.
    If the tail section of the chosen spine is depleted, i.e., it contains insufficiently many modules to fill the head's interior, we can perform a pull operation, shortening that spine path by one unit and removing $v_t$ from the tree $\mathcal{P}$.
    As we chose a maximal path for the spine, this will never result in a disconnection.

    By re-evaluating the tree according to the above description before every push operation, we can always find a spine to push and pull along for lines $4$ and $7$ in the algorithm, and forking simply becomes a matter of choosing the maximal path.

    Since successive modules in the sequence are face-adjacent, it follows by induction that whenever a module $\module m_i$ is removed from the input DFS sequence, its successor $\module{m}_{i+1}$ is in an adjacent meta-cell to some vertex of the spine tree $\mathcal{P}$.
    We will therefore be able to find a head vertex for the next push operation that must be performed to reach that meta-cell.

    Finally, the DFS input sequence has length $\mathcal{O}(\abs{T_\module{s}})$ and we only need constantly many push and pull operations before eliminating the next module from the sequence, the final module will have been consumed in $\mathcal{O}(\ell)=\mathcal{O}(\abs{T_\module{s}})$ transformations.
\end{proof}

%% file: A05_4-scaled-compact.tex
\subsection{Reconfiguration between scaled compact configurations}
\label{subsec:compact-to-compact}

In this section, we show that we can go from any compact configuration of meta-modules to any other in $\mathcal{O}(A+h)$ time, where $A$ and $h$ are projection area and extent of one of the two configurations along some axis.
\lemmaCompactReconfiguration*

To achieve this, we start by creating a cuboid within the bounding box of a given compact configuration.
Our technique leverages insights from the two-dimensional setting to efficiently parallelize each step.

\begin{lemma}
    Any $5$-scaled, compact configuration $C$ can be transformed into a cuboid with roughly identical projection area along a chosen axis $\lambda$ by a schedule of makespan $\mathcal{O}(A+h)$.
    \label{lem:compact-to-cuboid}
\end{lemma}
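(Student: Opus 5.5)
Without loss of generality let $\lambda=z$. Write $A=\area(z[C])$ for the (meta-cell) area of the projection and $h$ for the height. The cuboid I aim for has base $a\times b$ with $ab\approx A$ and height $\lceil n/(125\,ab)\rceil\le h$, both in meta-module units. Since $C$ is compact, $z[C]$ is a two-dimensional staircase (compact) configuration of columns, and every column is a solid vertical tower of meta-modules based at $z=0$.

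The plan has three phases, each using parallel column moves.

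\textbf{Phase 1: equalize tower heights.} Process the staircase layer by layer from the top. Every meta-module at height above the target height $t$ is moved to a column whose height is below $t$. I would do this by sliding whole horizontal slabs of meta-modules along the $x$-direction in parallel, and then along $y$. This mirrors the two-dimensional compact-to-rectangle procedure of~\cite{a.akitaya_et_al:LIPIcs.ESA.2025.28}, applied independently within each $xz$-slice. Those slices are themselves compact two-dimensional configurations, and they stay attached to the slice $y=0$, which keeps the backbone connected. Each slice is turned into a 2D staircase of width at most the width of $z[C]$ in $\mathcal{O}(w+h)$ steps. All slices run simultaneously, with even and odd $y$ alternating, as in the compress phase, to preserve connectivity.

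\textbf{Phase 2: the $y$-direction.} Repeat the same transformation in the $yz$-slices. After this, $C$ has roughly flat top layers over a rectangular footprint, apart from one partial top layer. That partial layer is made compact within the plane by a 2D compaction costing $\mathcal{O}(A)$.

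\textbf{Phase 3: reshape the footprint.} Change the footprint to a near-square base of area roughly $A$ by moving rows of towers as snakes. Each row of $\mathcal{O}(\sqrt{A})$ meta-modules travels a distance of $\mathcal{O}(A)$ in a pipelined fashion, using the push and pull operations of \cref{lem:push} and \cref{cor:pull} on meta-modules. Each push costs $\mathcal{O}(1)$ makespan.

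The main obstacle is bounding the makespan of Phase 1 by $\mathcal{O}(A+h)$ rather than by the product of the width and the height. The danger is that a column of height $h$ must drain its excess into a far-away region. To handle this, I would pipeline: the top meta-modules start moving first, and the layers below follow one step later. Each meta-module then travels a horizontal distance of at most the width plus the depth plus $h$, with a start delay of at most $h$, which gives $\mathcal{O}(w+d+h)\subseteq\mathcal{O}(A+h)$. I would prove that this pipelining avoids collisions and keeps the configuration connected by an invariant: at all times, every $xz$-slice remains a 2D compact configuration, up to a single travelling layer. Because meta-modules are $5\times5\times5$, a travelling meta-module can be realized by the constant-makespan meta-module shift primitives (via \cref{cor:teleport} applied to constant-size pieces), so each meta-move costs $\mathcal{O}(1)$ actual transformations.
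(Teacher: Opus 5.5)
\textbf{There is a genuine gap in Phases 1--2, and Phase 3 does not close it.} You balance heights slice by slice while the footprint is still the staircase $z[C]$. Two rounds of slice-local balancing over slices of \emph{different} extents do not produce a flat top.

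If each $xz$-slice is balanced over its own row of $z[C]$, then after the first round a column's height depends only on its row (about $V_y/W_y$, the row's volume over its width), not on $t$. The second round averages these values over $y<D_x$, and this range changes with $x$ whenever $z[C]$ is not a rectangle. Take the compact configuration with footprint $\{(0,0),(1,0),(0,1)\}$ and column heights $h,1,h$:
\begin{itemize}
\item balancing the $xz$-slices gives heights $\tfrac{h+1}{2},\tfrac{h+1}{2},h$;
\item balancing the $yz$-slices then gives $\tfrac{3h+1}{4},\tfrac{h+1}{2},\tfrac{3h+1}{4}$.
\end{itemize}
The result is not flat, not at $t=\tfrac{2h+1}{3}$, and sits on the same non-rectangular footprint. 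If instead you let slices spread beyond their rows, the footprint changes in an uncontrolled way, and its area need no longer be close to $A$. Either way, your claim that after Phase~2 you have ``roughly flat top layers over a rectangular footprint'' is unsupported.

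Phase~3 cannot repair this. \cref{lem:push} and \cref{cor:pull} act on snake configurations with spine, skin and interior, not on solid towers. Also, a row of towers holds $\Theta(\sqrt{A}\,t)$ meta-modules, not $\mathcal{O}(\sqrt{A})$, so your cost accounting there does not apply. A near-square base is not required by the statement in any case.

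\textbf{The missing idea is to fix the footprint first, which is what the paper does.} Choose a rectangle $R$ of $w\times d\approx A$ meta-cells (at least $A$), anchored at the origin inside the bounding box of $z[C]$. Then repeatedly take the $xy$-maximal occupied footprint cell outside $R$. Every meta-module of that column travels along an $L$-shaped path in its own horizontal layer to an empty meta-cell of $R$ in that layer. Such cells exist because each layer of a compact configuration is a staircase of area at most $A$, and the 2D path argument is \cite[Lemma 15]{a.akitaya_et_al:LIPIcs.ESA.2025.28}. All layers tunnel in parallel, so each column costs $\mathcal{O}(1)$ and the whole step costs $\mathcal{O}(A)$.

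Only then balance:
\begin{itemize}
\item All $y$-slices now share width $w$, so the 2D procedure makes column heights depend only on $y$ (up to one unit) in $\mathcal{O}(w+h)$.
\item Afterwards all $x$-slices share depth $d$ and their volumes differ by at most $d$. A second pass of $\mathcal{O}(d+h)$ then yields a cuboid whose column heights differ by $\mathcal{O}(1)$.
\end{itemize}
With this order each pass is a direct application of the 2D result, so your pipelining argument for Phase~1 is no longer needed.
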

\begin{proof}
    We assume without loss of generality that $\lambda=z$, all other cases are symmetric.
    For~convenience, let $A=\area(z[C])$, and let $h$ refer to the vertical extent of $C$'s bounding box.

    We start by determining a rectangle $R$ of area~${w\times d\approx A}$ that is anchored at the origin and fully contained within the bounding box of~$z[C]$.
    If $A$ does not have integral divisors that allow for an exact match, we may still assume that $A\geq (w-1)d$ and $A\geq (d-1)w$.

    We now repeatedly modify $C$ to reduce the area of its projection outside $R$ by $25$ units, i.e., one meta-cell, using only constantly many transformations each time.
    Consider the $xy$-maximal occupied two-dimensional meta-cell $c$ outside $R$.
    For every meta-module in the three-dimensional column projected onto $c$, we can find an $L$-shaped path in the $xy$-plane that connects it to an unoccupied three-dimensional meta-cell within $R$; this is essentially a two-dimensional problem and follows from~\cite[Lemma 15]{a.akitaya_et_al:LIPIcs.ESA.2025.28}.
    We can then use parallel tunneling paths for each three-dimensional meta-cell, teleporting it to its destination using only constantly many transformations.

    As the symmetric difference of the projection and $R$ is at most $A$, the projection of $C$ will be fully contained in $R$ after $\mathcal{O}(A)$ transformations.
    \begin{figure}[htb]%
        \captionsetup[subfigure]{justification=centering}%
        \begin{subfigure}[t]{\columnwidth/4}%
            \centering%
            \includegraphics[page=1]{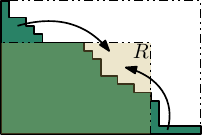}%
            \subcaption{}%
        \end{subfigure}%
        \begin{subfigure}[t]{\columnwidth/4}%
            \centering%
            \includegraphics[page=2]{compact/compact-to-cuboid}%
            \subcaption{}%
        \end{subfigure}%
        \begin{subfigure}[t]{\columnwidth/4}%
            \centering%
            \includegraphics[page=3]{compact/compact-to-cuboid}%
            \subcaption{}%
        \end{subfigure}%
        \begin{subfigure}[t]{\columnwidth/4}%
            \centering%
            \includegraphics[page=4]{compact/compact-to-cuboid}%
            \subcaption{}%
        \end{subfigure}%
        \caption{We obtain a cuboid within the chosen rectangle $R$ (yellow) by (b) first rectifying only the projection and then vertical slices along (c) the $x$- and then the (d) $y$-axis, in parallel.}
        \label{fig:compact-to-cuboid}
    \end{figure}

    To balance the resulting configuration into an actual cuboid, we perform a slice-wise balancing operation on all $y$-slices and then all $x$-slices, transforming each into a rectangle of width $w$ or depth $d$, respectively.
    For each axis, this takes either $w+h$ or $d+h$ transformations.
    The result is a cuboid configuration, i.e., it has a rectangular projection along the $z$-axis and the number of modules in every $z$-column varies by at most one.
\end{proof}

\begin{observation}
    Let $B_1$ and $B_2$ be two axis-aligned bounding volumes anchored at the origin.
    Then either (i) one of the two fully contains the other, or (ii) there exists an axis~$\lambda\in\{x,y,z\}$ such that the extent of $B_2$ along $\lambda$ is greater than that of $B_1$, and the extent of $B_2$ is not greater than that of $B_1$ along either of the other axes.
\end{observation}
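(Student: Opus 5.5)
Because both boxes are anchored at the origin, the plan is to reduce everything to comparing the three extent triples. Write the extents of $B_1$ as $(a_1,b_1,c_1)$ and those of $B_2$ as $(a_2,b_2,c_2)$. Then $B_1\subseteq B_2$ holds if and only if $a_1\le a_2$, $b_1\le b_2$ and $c_1\le c_2$, and symmetrically for $B_2\subseteq B_1$. So the observation becomes a statement about two vectors in $\mathbb{Z}^3_{\ge 0}$ under the componentwise order.

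Let $G\subseteq\{x,y,z\}$ be the set of axes along which $B_2$ is strictly larger than $B_1$, and split on $|G|$.
\begin{itemize}
\item If $|G|=0$, then $B_2$ is nowhere larger, so $B_2\subseteq B_1$ and (i) holds.
\item If $|G|=3$, then $B_1\subseteq B_2$ and (i) holds again.
\item If $|G|=1$, say $G=\{\lambda\}$, then $B_2$ is larger along $\lambda$ and not larger along the other two axes; this is exactly (ii).
\end{itemize}
Equality along an axis is harmless throughout, since ``not greater'' includes ties.

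The only delicate case is $|G|=2$, and I expect it to be the main obstacle. Here $B_2$ is strictly larger along two axes, so along the remaining axis $\mu$ we must have that $B_1$ is strictly larger (if it were equal or smaller there, $|G|$ would not be exactly two and containment would follow). Read literally with fixed labels, (ii) fails in this case. For example, extents $(1,1,2)$ for $B_1$ and $(2,2,1)$ for $B_2$ give neither containment nor a single axis on which $B_2$ alone is larger. I would therefore state explicitly that (ii) is meant up to exchanging the roles of $B_1$ and $B_2$, which is harmless in the application: the compact-to-compact reconfiguration is symmetric in its two inputs, since schedules are reversible.

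With that reading, the case $|G|=2$ is handled by swapping labels. Then $B_1$ is strictly larger than $B_2$ exactly along $\mu$ and not larger along the other two axes, which is (ii) for the pair $(B_1,B_2)$. This completes the case distinction. No earlier result of the paper is needed; the argument is purely order-theoretic.
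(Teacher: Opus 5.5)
Your proof is correct. The paper states this as an observation and gives no proof, so there is no argument of theirs to compare against. Your case split on the set $G$ of axes where $B_2$ is strictly larger is the natural argument. Equivalently: if neither box contains the other, then $G$ and the set of axes where $B_1$ is strictly larger are disjoint, nonempty subsets of $\{x,y,z\}$, so one of them is a singleton.

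Your counterexample with extents $(1,1,2)$ and $(2,2,1)$ is valid. The observation as literally written, with the roles of $B_1$ and $B_2$ fixed, is therefore false, and the symmetric reading you adopt is the one the paper actually needs. The subsequent cuboid-to-cuboid lemma opens with ``Assume without loss of generality \ldots\ that $C_1$ has smaller extent along this axis, but larger extent along both the $x$- and $y$-axes.'' That only covers all cases if the roles of the two configurations may be exchanged. This exchange is legitimate because the lemma's bound is symmetric in $C_1,C_2$ and the paper relies on reversibility of schedules throughout.

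One small tidy-up concerns the case $|G|=2$. Equality along $\mu$ does not change $|G|$; it simply yields $B_1\subseteq B_2$. It is $B_2$ being strictly larger along $\mu$ that would contradict $|G|=2$. Your parenthetical runs these two subcases together, but the conclusion is right.
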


\begin{lemma}
    Let $C_1$, $C_2$ be two cuboid configurations of $n$ modules each.
    For some axis~$\lambda$, there exists a schedule of $\mathcal{O}(A+h)$ transformations that transforms $C_1$ into $C_2$, where $A=\max(\area(\lambda[C_1]),\area(\lambda[C_2]))$ is the maximum of the projection areas and $h$ is the maximal extent of the two configurations' bounding boxes along the $\lambda$-axis.
\end{lemma}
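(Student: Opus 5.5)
We use the preceding observation on the two bounding boxes $B_1,B_2$ of $C_1,C_2$, both anchored at the origin.
If one box contains the other, say $B_2\subseteq B_1$, we take $\lambda$ to be the axis along which the larger box $B_1$ has the largest projection area. If neither contains the other, we take $\lambda$ to be the axis given by case (ii): the extent of $B_2$ exceeds that of $B_1$ along $\lambda$, and is at most that of $B_1$ along both other axes.
In both cases, after possibly exchanging the roles of $C_1$ and $C_2$ (schedules are reversible, so we may run them backwards), we may assume that the $\lambda$-projection rectangle $R_2$ of $C_2$ is contained in the $\lambda$-projection rectangle $R_1$ of $C_1$. We may further assume that the height of $C_2$ along $\lambda$ is at least that of $C_1$. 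Since both configurations have $n$ modules, this is automatic in case (ii). In case (i) it follows from $\area(R_2)\le\area(R_1)$, because each $\lambda$-column of a cuboid holds roughly $n/\area$ modules.
Without loss of generality let $\lambda=z$.

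The plan then has two steps, each parallelised over columns or slices.
\textbf{Step 1 (shrink the base).} We move the meta-modules of $C_1$ lying in columns of $R_1\setminus R_2$ into columns of $R_2$, raising those columns.
We do this in the same spirit as the proof of \cref{lem:compact-to-cuboid}. First we treat all $y$-slices in parallel: within each $xz$-slice, the two-dimensional balancing from~\cite{a.akitaya_et_al:LIPIcs.ESA.2025.28} turns a $w_1\times h_1$ rectangle into a $w_2\times h'$ rectangle in $\mathcal{O}(w_1+h')$ transformations. Then we treat all $x$-slices in parallel in the same way to shrink the depth.
Because every slice of a cuboid is itself a (near-)rectangle, each slice is a two-dimensional compact instance. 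The slices of the other family provide connectivity between the slices being reshaped. To guarantee this, we alternate even and odd slices, as in the compaction procedure of the appendix, so that a static backbone always exists.
\textbf{Step 2 (final balancing).} After Step 1 the configuration has projection $R_2$. Column heights may differ from the target by at most one meta-module per column, up to an offset. We fix these differences with a constant number of further slice-wise balancing rounds. The bound is $\mathcal{O}(w_2+h)$ and $\mathcal{O}(d_2+h)$ per round.

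For the makespan, every step costs $\mathcal{O}(w+d+h)$ with $w,d$ the side lengths of $R_1$ or $R_2$. Since $w+d\le A+1$, the total is $\mathcal{O}(A+h)$, where $h$ is the larger extent along $\lambda$.

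The main obstacle is Step 1. The two-dimensional lemma has to be invoked on slices whose target height $h'$ may exceed $h_1$ (and in case (ii) even exceed $h$ of $C_1$). We must check that this stays within $h=\max$ extent and does not leave the bounding box of $C_1\cup C_2$.
We must also check that running many slices in parallel keeps a global backbone. For the latter, I would first try to keep the first row of meta-modules along the fixed face ($x=0$ or $y=0$) untouched throughout. This static row connects all slices, and it is valid because compactness guarantees that row is full in both configurations. If the two-dimensional parallel schedule needs that row, we fall back to the even/odd slice alternation, which at most doubles the makespan.
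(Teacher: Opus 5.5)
Your route is genuinely different from the paper's. The paper never runs schedules on many slices at once. In $\mathcal{O}(1)$ transformations it either moves one whole $z$-column of $C_1$ into $C_2\setminus C_1$, or fills one whole $z$-layer of $C_2\setminus C_1$. It does this by pushing the column's modules along pairwise non-crossing L-shaped tunnelling paths into the intersection of the two bounding volumes, and repeats this $\mathcal{O}(A+h)$ times, so connectivity is only a local issue along the tunnels. Your slice-wise use of the two-dimensional $\mathcal{O}(P)$ algorithm would, if completed, even give $\mathcal{O}(w_1+d_1+h_2)$. Your worry about heights is also harmless: in case (ii), up to rounding, $h'\approx w_1h_1/w_2\le w_1d_1h_1/(w_2d_2)\approx h_2$ since $d_2\le d_1$. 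However, the step you flag as the main obstacle is a genuine gap, and neither of your fixes closes it.

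Parallel 2D schedules on all $xz$-slices are automatically a valid 3D schedule only if every slice holds the same 2D configuration $D$ and runs the same schedule. Then the moving set is $M\times[0,d_1)$, and after removing any subset of it each slice is still connected and still contains the common non-empty backbone $D\setminus M$, so consecutive slices remain face-adjacent. The slices of a cuboid are not identical, because the top layer is filled differently in different slices. So the schedules and targets differ, and the 2D algorithm used as a black box gives you no control over where a slice's backbone lies at a given time.
\begin{itemize}
\item Your static row needs every 2D schedule never to move that row's module and to keep it attached to the slice's backbone throughout. That the row is full in both configurations does not give you this.
\item The even/odd alternation needs each moving slice's backbone to be face-adjacent to its static neighbours at every transformation. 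Nothing prevents that backbone from lying, at some step, entirely above height $h_1$, where the static neighbour has no modules. The even/odd argument in the appendix works only because that particular shift-left schedule is explicit.
\end{itemize}
To close the gap you need one of two things. Either reduce to identical slices running identical schedules, e.g.\ by treating the partial top layer separately. Or give an explicit 2D rectangle-to-rectangle schedule that keeps a common core such as $[0,w_2)\times[0,h_1-1)$ static, which would then serve as a shared backbone for all slices.

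Step 2 is also unsupported: you do not show that constantly many extra balancing rounds suffice. It is cleaner to choose the first-pass targets so that the number of modules with each $x$-coordinate already equals that of $C_2$, while keeping each intermediate slice connected. The second pass then lands exactly on $C_2$.
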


\begin{proof}
    Assume without loss of generality, that $\lambda=z$ and that $C_1$ has smaller extent along this axis, but larger extent along both the $x$- and $y$-axes.
    Let $h_1<h_2$ refer to the extent of $C_1$ and $C_2$ along the $z$-axis, respectively, as shown in~\cref{fig:cuboid-to-cuboid}.
    \begin{figure}[htb]%
        \hfil%
        \includegraphics[page=1]{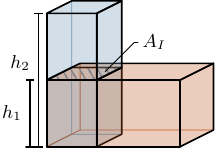}%
        \hfil%
        \includegraphics[page=2]{compact/cuboid-to-cuboid}%
        \caption{We use the 2D algorithm on all layers in parallel to achieve basically this.}
        \label{fig:cuboid-to-cuboid}
    \end{figure}

    Due to~\cref{lem:compact-to-cuboid}, we may assume that all $z$-columns within $C_1$ have identical height (up to a deviation of one unit).
    It suffices to show that in constantly many transformations, we can either (a) move one entire $z$-column of modules from~$C_1$ into $C_2\setminus C_1$, or (b) fill one entire $z$-slice of the volume $C_2\setminus C_1$.

    Let $A_I$ refer to the area of the intersection of the two projections; we distinguish two cases.
    If $A_I>h_1$, we can trivially show that case (a) holds:
    In constant time, we can ``teleport'' all modules in some chosen column of $C_1$ into the bounding volume of $C_2$.
    To achieve this, we pick an L-shaped path in the unique layers of each module that lead into the intersection of the two bounding volumes, and then tunnel straight upwards into the unoccupied volume.
    To avoid collisions, we embed the paths as follows.
    Without loss of generality, let $A_I$ be the rectangle with vertices $(0,y,0), (x,y,0), (x,y,z), (0,y,z)$, which we aim to fill with modules from the column $X\subset C_1$.
    The lowest module in $X$ is assigned an L-shape path to position~$(0,y,0)$, the second-lowest to~$(0,y,1)$, and so on, up to the module assigned to~$(0,y,z)$.
    We then place the next module in position~$(1,y,0)$, and continue filling the row analogously.
    It is easy to see that this set of paths does not cross, and that each path has at most three turns.
    Thus, by pushing along these paths, we can ``instantly'' remove the chosen column and teleport all modules into the desired bounding volume.

    If $A_I<h_1$, are in case (b):
    We pick a subset of the column that has height exactly $A_I$ and only use this portion of the column.
    The same technique for the paths still applies, and we fill a $z$-layer of the symmetric difference completely.
    We can repeat this $A_I / h_1$ many times to fill $\floor{A_I / h_1}$ many layers of the symmetric difference before the column is empty.

    Due to cases (a) and (b), we take $\max(A,h_2)$ time to transfer all modules from the symmetric difference in $C_1$ to the symmetric difference in $C_2$, i.e., reconfigure $C_1$ into $C_2$.
\end{proof}